\numberwithin{equation}{section} %
\theoremstyle{plain}
   \newtheorem{thm}{\hspace{\parindent}{\sc Theorem}}[section] %
   \newtheorem{pro}[thm]{\hspace{\parindent}Proposition}
   \newtheorem{cor}[thm]{\hspace{\parindent}Corollary}
   \newtheorem{lem}[thm]{\hspace{\parindent}Lemma}
\theoremstyle{remark} %
   \newtheorem{rem}{\hspace{\parindent}Remark}[section] %
\newtheorem{exmp}{\hspace{\parindent}Example}[section]%
\newcommand{\bA}{\mathbf{A}}
\newcommand{\bB}{\mathbf{B}}
\newcommand{\bs}{\widehat{\mathbf{s}}}
\newcommand{\bq}{\mathbf{q}}
\newcommand{\bR}{\mathbb{R}}
\newcommand{\bx}{\mathbf{x}}
\newcommand{\by}{\mathbf{y}}
\newcommand{\Cts}{\mathcal{C}}
\newcommand{\Cspace}{C^{\infty}}
\newcommand{\domain}{[0,T]\times \bR^d}
\newcommand{\dA}{\mathfrak{A}}
\newcommand{\dH}{\mathfrak{H}}
\newcommand{\dR}{\mathfrak{R}}
\newcommand{\kdelta}{K_{\Delta}}
\newcommand{\limepsilon}{\lim_{\epsilon\rightarrow 0+}}
\newcommand{\qdelta}{q_{\Delta}}
\newcommand{\qts}{q^{t,s}_{x,y}}
\newcommand{\sqrho}{\sqrt{\rho}}
\newcommand{\Sspace}{{\cal S}}
\newcommand{\ts}{t,s}
\newcommand{\upl}{^{(l)}}
\def\dbar{{\mathchar'26\mkern-12mud}}
\begin{document}
\title{On the Feynman path integral for the magnetic Schr\"odinger equation with a polynomially growing electromagnetic potential}
\author{Wataru Ichinose\thanks{This work was supported by JSPS KAKENHI Grant Number JP18K03361.}} %
\date{}
\maketitle %
\begin{quote}
{\small Department of Mathematics, Shinshu University,
Matsumoto 390-8621, Japan. \\
 E-mail: ichinose@math.shinshu-u.ac.jp}%
\end{quote}\par
	\begin{abstract}
	The Feynman path integrals for the magnetic Schr\"odinger equations are defined mathematically, in particular,  with polynomially growing potentials in the spatial direction. For example, we can handle  electromagnetic potentials $(V,A_{1},A_{2},\dots,A_{d})$ such that $V(t,x) = |x|^{2(l+1)} + $\\`` a polynomial of degree $(2l + 1)$ in $x$ " ($l = 0,1,2,...$) and  $A_{j}(t,x)$ are polynomials  of degree $l$ in $x$.  The Feynman path integrals are defined as $L^2$-valued continuous functions with respect to the time variable.  	
\end{abstract}

\section{Introduction}%
Let $T > 0$ be an arbitrary constant, $0 \leq t \leq T$ and 
$x = (x_1,\dotsc,x_d)\in \bR^d$.  Let $E(t,x) = (E_1,
\dots, E_d) \in \bR^d$ and $B(t,x) = (B_{jk}(t,x))_{1\leq j < k \leq d}
\in \bR^{d(d-1)/2}$ denote the 
   electric strength  and the magnetic strength tensor, respectively and $(V(t,x),A(t,x)) = (V,A_1,\dots,A_d)  \in \bR^{d+1}$ an electromagnetic potential, i.e.
\begin{align} \label{1.1}
     & E = -\frac{\partial A}{\partial t} - \frac{\partial V}{\partial 
x},\notag \\
          & B_{jk} =  \frac{\partial A_k}{\partial x_j}  -\frac{\partial 
A_j}{\partial x_k}
\quad (1 \leq j <  k \leq d),
\end{align}
where  $\partial V/\partial x = (\partial V/\partial x_1,\dots,\partial V/\partial x_d)$. Then the Lagrangian function is given by
\begin{equation} \label{1.2}
      {\cal L}(t,x,\dot{x})=  \frac{m}{2}|\dot{x}|^2 + e\dot{x}\cdot A(t,x) - eV(t,x),
\   \dot{x}\in \bR^d
\end{equation}
with mass $m > 0$ and charge $e \in \bR$.  Then the corresponding Schr\"odinger equation is given by
\begin{align} \label{1.3}
& i\hbar \frac{\partial u}{\partial t}(t)  = H(t)u(t)\notag\\ 
& := \left[ \frac{1}{2m}\sum_{j=1}^d
      \left(\frac{\hbar}{i}\frac{\partial}{\partial x_j} - eA_j(t,x)\right)^2 + eV(t,x)\right]u(t),
\end{align}
where $\hbar$ is the Planck constant.
Throughout this paper we always consider solutions to the Schr\"odinger equations in the sense of distribution.
Hereafter we suppose $\hbar = 1$ and $e = 1$ for  simplicity. 
\par
 Let $L^2 = L^2(\bR^d)$ denote the space of all square integrable functions on
$\bR^d$ with inner
product $(f,g) := \int f(x)g(x)^*dx$ and  norm $\Vert f\Vert$, where $g(x)^*$ denotes the complex conjugate of $g(x)$. 
Let $S(t,s; q)$ be the classical action 
\begin{equation} \label{1.4}
   S(t,s;q) = \int_s^t {\cal L}(\theta,q(\theta),\dot{q}(\theta))d\theta
\end{equation}
for a path $q(\theta) \in \bR^d\ (s \leq \theta \leq t)$, where $\dot{q}(\theta) = dq(\theta)/d\theta$.  Our aim in the present paper is to prove that for any $f \in L^2$ we can determine the Feynman path integral
\begin{equation} \label{1.5}
 K(t,0)f = \int e^{iS(t,0;q)}f(q(0)){\cal D}q
\end{equation}
in $L^2(\bR^d)$ for the system \eqref{1.2} with a potential $(V,A)$ growing polynomially in the  spatial direction $x$.  As shown in Example 3.1, a typical example of potentials that we can handle is 
\begin{equation} \label{1.6}
V(t,x) = |x|^{2(l+1)} + \sum_{|\alpha| \leq 2l + 1}a_{\alpha}(t)x^{\alpha},
\end{equation}
\begin{equation} \label{1.7}
A_j(t,x) =  \sum_{|\alpha| \leq l}b_{j\alpha}(t)x^{\alpha}\ (j = 1,2,\dots,d)
\end{equation}
with an integer $l \geq 0$ and functions $a_{\alpha}(t) \in \bR, b_{j\alpha}(t) \in \bR$ in $C^1([0,T])$, i.e. continuously differentiable functions, where $|x|^2 = \sum_{j=1}^d x_j^2$ and for a multi-index $\alpha = (\alpha_1,\dots,\alpha_d)$  we  write $|\alpha| = 
\sum_{j=1}^d
\alpha_j$, $x^{\alpha} =  x_1^{\alpha_1}
\cdots  x_d^{\alpha_d}, \partial_{x_j} = \partial /\partial
x_j$ and  $\partial_x^{\alpha} = \partial_{x_1}^{\alpha_1}
\cdots \partial_{x_d}^{\alpha_d}$.
\par
  In the present paper  the Feynman path integral \eqref{1.5} is defined by the time-slicing method in terms of piecewise free moving paths or piecewise straight lines.  The time-slicing approach in terms of piecewise free moving paths to path integrals is actually the classical approach in the physics literature (cf.  p.32 in \cite{Feynman-Hibbs}  and p.278 in \cite{Peskin-Schroeder}).
  \par
  The Feynman path integral for the system \eqref{1.2} with potentials $A = 0$ and $V$ satisfying $|V(t,x)| \leq C(1 + |x|^2)$ has been studied mathematically by many authors for a long time since Feynman had published his famous paper \cite{Feynman} in 1948 (cf. \S 10 in \cite{Albeverio et all} and \cite{Asada-Fujiwara}).  See \cite{Fujiwara 2017}, \cite{Ichinose 1997}, \cite{Ichinose 1999}, \cite{Ichinose 2007}, \cite{Nicola} and their references for the recent study.  We note that in \cite{Ichinose 1997}, \cite{Ichinose 1999} and \cite{Ichinose 2007} we studied the Feynman path integrals defined by the time-slicing method in terms of piecewise free moving paths.
\par 
On the other hand, if $|V(t,x)| \geq C(1 + |x|^2)^{1+\delta}$ holds with positive constants $C$ and $\delta$, it may be not simple to construct the Feynman path integral for \eqref{1.2}  mathematically as stated in \S 10.2  of \cite{Albeverio et all} and  in \S 3.5 of \cite{Mazzucchi}.  In fact, there seems to be only a few papers on it, which will be referred below, as far as the author knows.  In addition, we note that as well known, the uniqueness of solutions to \eqref{1.3} with $u(0) = f$ in the space $C^0_t([0,T];L^2)$  doesn't hold in general if $V(t,x)$ satisfies $V(t,x) \leq -C(1 + |x|^2)^{1+\delta}$ with positive constants $C$ and $\delta$ (cf. pp. 157-159 in \cite{Berezin-Shubin}, Theorem VIII.7 in \cite{Reed-Simon I}, Theorems X.2 and X.3 in \cite{Reed-Simon II}), where $C^0_t([0,T];L^2)$ denotes the space of all $L^2$-valued continuous functions in $t \in [0,T]$.
\par
Nelson in \cite{Nelson} has constructed the Feynman path integral \eqref{1.5} for \eqref{1.2} in $L^2$ for $f \in L^2$ with $A = 0$ and a  continuous function $V(x)$ outside a set of capacity $0$ in $\bR^d$, independent of $t \in [0,T]$,  by using the Trotter product formula. It is to be noted that in \cite{Nelson} the classical action $S(t,0;q)$ is replaced with a certain approximation.  See (9) on p. 333 of \cite{Nelson}. 
\par
  Daubechies and Klauder in \cite{D-K} have showed the following.  Take $(V, A) = (V(x), A(x))$, independent of $t$, satisfying $|V(x)| \leq C(1 + |x|^2)^{M}$ and  $|A(x)| \leq C(1 + |x|^2)^{M}$ with constants $C \geq 0$ and $M \geq 0$.  
  Let  $H$  be the operator defined by \eqref{1.3} with a core consisting of  finite linear spans generated by eigenvalues of $(-\Delta + |x|^2)/2$, where $\Delta = \sum_{j=1}^d\partial_{x_j}^2$.  Let $F_0(x)$ be the ground state of  $(-\Delta + |x|^2)/2$ and define the canonical coherent states $|p,q> = e^{ip\cdot x}F_0(x - q)$ for all $(q,p) \in \bR^{2d}$, where $p\cdot x = \sum_{j=1}^dp_jx_j$.  Let $H'$ be a maximal extension of $H$ on $L^2$ and denote the deficiency indices of $H'$ by $n_+(H')$ and $n_-(H')$.  
 Daubechies and Klauder have constructed the phase space Feynman path integral in the form of weak topology of $L^2$, i.e. giving $(|p'',q''> , e^{-itH'}|p',q'> )$ if $n_+(H') = 0$ and $(|p'',q''> , e^{-itH'^{\dag}}|p',q'> )$ if $n_-(H') = 0$  in terms of
 the Winer measure pinned at $(p',q')$ at $t = 0$ and at $(p'',q'')$ at $t$, where $H'^{\dag}$ denotes the adjoint operator of $H'$. 
 \par
 Albeverio and Mazzucchi in \cite{A-M 2005} and \cite{A-M 2009} have studied the Feynman path integrals for the systems \eqref{1.2} with $A = 0$,  $V(x) = |\Omega x|^2/2+ \lambda C(x,x,x,x) \ (\lambda \in \bR)$ and with $A = 0$,  a positive homogeneous polynomial $V(x)$ of $2M$-order $(M = 1,2,\dots)$, respectively in terms of infinite dimensional oscillatory integrals and the Wiener measure, where $\Omega$ is a $d\times d$ regular matrix and $C(x,y,w,z)$ is  a completely symmetric positive fourth-order covariant tensor on $\bR^d$.  It is noted that all Feynman path integrals in \cite{A-M 2005, A-M 2009} are defined in the form of weak topology of $L^2$.  See \S 10.2 in \cite{Albeverio et all} and \S 3.5 in \cite{Mazzucchi} for topics relating to \cite{A-M 2005, A-M 2009}.
 \par
 The present paper is having four points to be emphasized: (1) In our system \eqref{1.2} there exists a magnetic field $B(t,x)$. (2) Our magnetic field $B(t,x)$ and electric field $E(t,x)$ can vary on time $t$. (3)  Our Feynman path integral can be defined as an $L^2$- valued function on $[0,T]$ as in \cite{Nelson}, not in the form of weak topology of $L^2$ as in \cite{A-M 2005, A-M 2009, D-K}. (4) Our method of constructing the Feynman path integral can not be applied to systems with potentials satisfying $V(t,x) \leq -C(1 + |x|^2)^{1+\delta} \ ( C > 0, \delta >0)$, though  in \cite{A-M 2005, D-K, Nelson} the Feynman path integrals for such systems are constructed.  
  \par
 In the present paper the Feynman path integrals will be constructed not only for the one-particle systems \eqref{1.2}, but also the multi-particle systems with spin.  In addition, we will construct the Feynman path integrals for bosons and fermions, i.e. quantum systems consisting of many identical particles with spin.
 \par
 We will prove the results in the present paper, following the proofs in \cite{Ichinose 1997, Ichinose 1999, Ichinose 2007, Ichinose 2017}.  That is, we introduce the fundamental operator $\mathcal{C}(t,s)$ in \S 5, and prove its stability and consistency.  Combining these results and the existence theorem proved in \cite{Ichinose 2018} to the Schr\"odinger equations \eqref{1.3}  in both of $L^2$ and the Schwartz space $\Sspace(\bR^d)$ of all rapidly decreasing functions on $\bR^d$, we can prove our main results.  In particular,  in the present paper we will use the delicate result below concerning the $L^2$-boundedness of pseudo-differential operators, which is stated as Theorem 13.13 on p. 322 in \cite{Zworski}. 
 \par
 \vspace{0.5cm}
 T{\sc heorem} 1.A.  {\it Suppose $p(x,\xi,x') \in S^0(\bR^{3d})$, i.e. 
 \begin{equation} \label{1.8}
\sup_{x,\xi,x'}|\partial_{\xi}^{\alpha}\partial_{x}^{\beta}\partial_{x'}^{\gamma}p(x,\xi,x')| \leq C_{\alpha,\beta,\gamma} < \infty
 \end{equation}
for all $\alpha, \beta$ and $\gamma$.  Let $P(X,\hbar D_x,X')$ be the pseudo-differential operator defined by 
 \begin{equation*} 
\int e^{ix\cdot \xi}\ \dbar\xi \int e^{-ix'\cdot \xi}p(x,\hbar\xi,x')f(x')dx',\quad \dbar\xi = (2\pi)^{-d}d\xi
 \end{equation*}
for $f \in \Sspace(\bR^d)$.  Then we have
 \begin{equation} \label{1.9}
\Vert P(X,\hbar D_x,X')\Vert_{L^2\to L^2} = \sup_{x,\xi,x'}|p(x,\xi,x')| + O(\hbar),
 \end{equation}
where $\Vert P\Vert_{L^2\to L^2}$ denotes the operator norm from $L^2$ into $L^2$.}
\vspace{0.5cm}
\par
The plan of the present paper is as follows.  In \S 2 our main results are stated.  In \S 3 we will state examples to which our results can be applied.  In \S 4 we will construct the Feynman path integrals for bosons and fermions.  In \S 5 and \S 6 the stability and the consistency of $\Cts(t,s)$ will be proved, respectively.  In \S 7 Theorems 2.1 - 2.2 and in \S 8
Theorems 2.3 - 2.4 will be proved.
\section{Main theorems}
Let $t$ in $[0,T]$.  For an arbitrary integer $\nu \geq 1$ we take $\tau_j \in [0,T]\ (j = 1,2,\dots,\nu-1)$ satisfying $0 = \tau_0 < \tau_1 < \dots <\tau_{\nu-1} < \tau_{\nu}= t$,  set $\Delta := \{\tau_j\}_{j=1}^{\nu-1}$ and write $|\Delta|:= \max\{\tau_{j+1}- \tau_j; j = 0,1,\dots,\nu-1\}$. Let $x \in \bR^d$ be fixed.
  We take arbitrary points
$x^{(j)} \in \bR^d\ (j = 0,1,\dotsc,\nu-1)$ and determine the piecewise free moving path or the piecewise straight line $\qdelta(\theta;x^{(0)},\dotsc,x^{(\nu-1)},x) \in \bR^d \ (0 \leq \theta \leq t)$ by joining  $x^{(j)}$ at $\tau_j\ (j = 0,1, \dotsc,\nu, 
x^{(\nu)} = x)$  in order.  Let $\mathcal{L}(t,x,\dot{x})$ be the Lagrangian function defined by \eqref{1.2} and $S(t, s; q)$ the classical action defined by \eqref{1.4}.  Take $\chi \in \Cspace_0(\bR^d)$, i.e. an infinitely differentiable function on $\bR^d$ with compact support,  such that $\chi(0) = 1$ and determine the approximation  of the Feynman path integral \eqref{1.5} for $f \in \Cspace_0(\bR^d)$ by 
 \begin{align} \label{2.1}
\kdelta(t,0)f =  \limepsilon & \prod_{j=0}^{\nu-1}\sqrt{\frac{m}{2\pi i(\tau_{j+1} - 
\tau_{j})}}^{\ d}
        \int\cdots\int_{\bR^d} e^{iS(t,0;\qdelta)} \notag \\
       & \times f(x^{(0)}) \prod_{j=1}^{\nu-1}\chi(\epsilon x^{(j)})
          dx^{(0)}dx^{(1)}\cdots dx^{(\nu-1)}.
 \end{align}
 From now on we always suppose that $\chi$ is a real-valued function belonging to  $\Cspace_0(\bR^d)$ such that $\chi(0) = 1$.
The RHS is an oscillatory integral and will be denoted by
 \begin{align*} 
  & \prod_{j=0}^{\nu-1}\sqrt{\frac{m}{2\pi i(\tau_{j+1} - 
\tau_{j})}}^{\ d}
      \text{Os} -  \int\cdots\int_{\bR^d} e^{iS(t,0;\qdelta)}  f(x^{(0)}) 
          dx^{(0)}dx^{(1)}\cdots dx^{(\nu-1)}
 \end{align*}
 (cf. p. 45 of \cite{Kumano-go}).
  \par
   In the present paper we often use symbols $C, C_{\alpha}, C_{\alpha,\beta}$, $C_a$  and $\delta$ to write down constants, though these values are different in general. 
   \vspace{0.5cm}
   \par
   {\bf Assumption 2.1. } We assume that $\partial_x^{\alpha}\partial_t^kV(t,x)$ and $\partial_x^{\alpha}\partial_t^kA_j(t,x)\ (j = 1,2,\dots,d)$ are continuous in $\domain$  for all $\alpha$ and $k = 0,1$. Moreover, we assume the existence of constants $M_* \geq 0, C_0 > 0, C_1 \geq 0, C_2 \geq 0$ with
\begin{equation} \label{2.2}
C_0<x>^{2(M_*+1)} - C_1 \leq V(t,x) \leq C_2<x>^{2(M_*+1)}
\end{equation}
in  $\domain$, where $<x> = \sqrt{1 + |x|^2}$.  We also assume
\begin{equation} \label{2.3}
|\partial_x^{\alpha}V(t,x)| \leq C_{\alpha}<x>^{2(M_*+1)}, |\alpha| \geq 1,
\end{equation}
\begin{equation} \label{2.4}
 |\partial_x^{\alpha}\partial_tV(t,x)| \leq C_{\alpha}<x>^{2(M_*+1)}\end{equation}
for all $\alpha$,
\begin{equation} \label{2.5}
 |A_j(t,x)| \leq C<x>^{M_*+1-\delta}
 \end{equation}
with a constant $\delta > 0$ and 
\begin{equation} \label{2.6}
|\partial_x^{\alpha}\partial_tA_j(t,x)| \leq C_{\alpha}<x>^{M_*+1}
 \end{equation}
 for all $\alpha$.
%
%
%
\vspace{0.3cm}\par
 {\bf Assumption 2.2. } Let $M_*$ be the constant in Assumption 2.1.  We assume
\begin{equation} \label{2.7}
   C_*|x|^{2M_*} - C_1 \leq -\frac{1}{2m}\left(\frac{\partial E}{\partial x}(t,x) + \frac{{}^t\partial E}{\partial x}(t,x)\right)
\end{equation}
with constants $C_* > 0$ and $C_1 \geq 0$, and 
 \begin{equation} \label{2.8}
 |\partial_x^{\alpha}E_j(t,x)| \leq C_{\alpha}<x>^{2M_*},\  |\alpha| \geq 1,
 \end{equation}
 where $\partial E/\partial x = (\partial E_i/\partial x_j;i \downarrow j \rightarrow 1,2,\dots,d)$ is a $d\times d$ matrix and ${}^t\partial E/\partial x$ its transposed matrix. We assume
\begin{equation} \label{2.9}
|\partial_x^{\alpha}A_{j}(t,x)| \leq C_{\alpha}<x>^{M_*}, \ |\alpha| \geq 1.
\end{equation}
In addition, we assume either
\begin{equation} \label{2.10}
|\partial_x^{\alpha}B_{jk}(t,x)| \leq C_{\alpha}<x>^{-(1+\delta_{\alpha})}, |\alpha| \geq 1
\end{equation}
with constants $\delta_{\alpha} > 0$ and 
\begin{equation} \label{2.11}
 |\partial_x^{\alpha}\partial_tB_{jk}(t,x)| \leq C_{\alpha}<x>^{M_*}
 \end{equation}
for all $\alpha$, or
\begin{equation} \label{2.12}
  |\partial_x^{\alpha}\partial_tB_{jk}(t,x)| \leq C_{\alpha}<x>^{-(1+ \delta_{\alpha})},\ |\alpha| \geq 1
 \end{equation}
with  constants $\delta_{\alpha} > 0$ if $0 \leq M_{*} < 1$ and 
\begin{equation} \label{2.13}
|\partial_x^{\alpha}\partial_tB_{jk}(t,x)| \leq C_{\alpha}<x>^{M_*-1}, \ |\alpha| \geq 1
 \end{equation}
 if $M_{*} \geq 1$. 
 \vspace{0.5cm}
 \par
\begin{thm}
Suppose that Assumptions 2.1 and 2.2 are satisfied.  Then there exist constants $\rho^{*} > 0$ and $K \geq 0$ such that  the following statements hold for all $\Delta$ satisfying $|\Delta| \leq \rho^{*}$ and all $t \in [0,T]$: 
\par
\noindent (1) $K_{\Delta}(t,0)f$ defined on $ f \in \Cspace_{0}(\bR^{d})$  by \eqref{2.1} is determined independently of the choice of $\chi$ and $K_{\Delta}(t,0)f$ can be uniquely extended to a bounded operator on $L^{2}$ with 
\begin{equation} \label{2.14}
\Vert \kdelta(t,0)f\Vert \leq e^{Kt}\Vert f\Vert
 \end{equation}
 for all $f \in L^{2}$. 
 \par
\noindent (2) For all $f \in L^{2}$, as $|\Delta| \to 0$, $\kdelta(t,0)f$ converges in $L^{2}$ uniformly in $t \in [0,T]$ to an element $K(t,0)f \in L^{2}$, which we call the the Feynman path integral of $f$.  
\par
\noindent (3) For all $f \in L^{2}$, $K(t,0)f$  belongs to   $C_{t}^{0}([0,T];L^{2})$. In addition, $K(t,0)f$ is the unique solution  in $C^{0}_{t}([0,T];L^{2})$ to \eqref{1.3} with $u(0) = f$. 
\par
\noindent (4) Let $\psi(t,x)$ be a real-valued function such that $\partial_{x_j}\partial_{x_k}\psi(t,x)$ and $\partial_{t}\partial_{x_j}\psi(t,x)$ $(j,k = 1,2,\dotsc,d)$ are continuous in $[0,T] \times \bR^d$ and consider the gauge transformation 
\begin{equation}  \label{2.15}
    V' = V -\frac{\partial\psi}{\partial  t}, \quad  A'_j = A_j + \frac{\partial\psi}{\partial  x_j}\quad (j= 1,2,\dots,d).
\end{equation}
We write \eqref{2.1} for this $(V',A')$  as $K'_{\Delta}(t,0)f$.  Then we have the formula 
\begin{equation}  \label{2.16}
     K'_{\Delta}(t,0)f  = e^{i\psi (t,\cdot)}K_{\Delta}(t,0)\left(e^{-i\psi (0,\cdot)}f\right)
\end{equation}
for  all $f \in L^2$, and we have the analogous relation between the limits $K'(t,0)f$ and $K(t,0)f$ as in \cite{Ichinose 1999}.
\end{thm}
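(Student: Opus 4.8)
The plan is to realise $\kdelta(t,0)$ as a product of one-step operators and to run the standard stability--consistency scheme of the time-slicing method. I would introduce the fundamental operator
$$\Cts(t,s)f(x)=\sqrt{\frac{m}{2\pi i(t-s)}}^{\,d}\ \text{Os-}\!\!\int_{\bR^d}e^{iS(t,s;\qts)}f(y)\,dy,$$
where $\qts(\theta)$ is the straight line joining $y$ at $\theta=s$ to $x$ at $\theta=t$. Since the classical action is additive along the broken path, integrating out the intermediate points shows $\kdelta(t,0)=\Cts(\tau_\nu,\tau_{\nu-1})\cdots\Cts(\tau_1,\tau_0)$. First I would verify, by repeated integration by parts in the theory of oscillatory integrals, that each factor is a well-defined operator on $\Cspace_0(\bR^d)$ whose value does not depend on the regularising cutoff $\chi$; this settles the $\chi$-independence in part (1). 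Everything in (1)--(3) then reduces to two properties of the single factor: a \emph{stability} estimate $\Vert\Cts(t,s)\Vert_{L^2\to L^2}\le e^{K(t-s)}$, uniform for $0\le s\le t\le T$ with $t-s\le\rho^{*}$, and a \emph{consistency} estimate comparing $\Cts(t,s)$ with the exact propagator $U(t,s)$ furnished by the existence theorem of \cite{Ichinose 2018}.

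For stability I would analyse $\Cts(t,s)^{*}\Cts(t,s)$ rather than $\Cts(t,s)$ directly: in the composition the forward and backward free kinetic oscillations cancel, so the product carries no residual fast phase and is of exactly the type to which Theorem 1.A applies, with the slice width $t-s$ in the role of the small parameter. Because $V$ and $A$ are real, the potential contributions enter the amplitude only through factors of modulus one, so the principal symbol equals $1+O(t-s)$; Theorem 1.A then gives $\Vert\Cts(t,s)^{*}\Cts(t,s)\Vert\le 1+K(t-s)$, whence $\Vert\Cts(t,s)\Vert\le(1+K(t-s))^{1/2}\le e^{K(t-s)}$ after adjusting $K$. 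Multiplying over the subdivision yields \eqref{2.14}, and since $\Cspace_0(\bR^d)$ is dense in $L^2$ the bounded extension in part (1) follows. \textbf{This symbol estimate is the crux and the principal obstacle}: since $V$ grows like $<x>^{2(M_*+1)}$ and $A$ like $<x>^{M_*+1}$, every $x$-derivative of the phase produces a growing factor, so the raw amplitude is not in $S^0(\bR^{3d})$. Overcoming it is precisely the role of the finely graded bounds \eqref{2.3}--\eqref{2.13}: the reality of $V,A$ keeps the amplitude of modulus one, while these bounds guarantee that, after a change of variables and a conjugation adapted to the growth order $M_*$, the symbol to which Theorem 1.A is applied genuinely lies in $S^0(\bR^{3d})$ with small parameter $t-s$.

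With stability available, consistency is the estimate $\Vert\{\Cts(t,s)-U(t,s)\}f\Vert\le C(t-s)^{2}\Vert f\Vert_{B}$ on a dense subspace, where $\Vert\cdot\Vert_B$ is a weighted Sobolev (Schwartz-type) norm that \cite{Ichinose 2018} shows is propagated by $U(t,s)$; it follows by matching the symbol of $\Cts(t,s)$ with the first-order Taylor expansion of $U(t,s)$ generated by $H(s)$. I would then telescope
$$\kdelta(t,0)-U(t,0)=\sum_{j=0}^{\nu-1}\Big(\prod_{k>j}\Cts(\tau_{k+1},\tau_k)\Big)\big(\Cts(\tau_{j+1},\tau_j)-U(\tau_{j+1},\tau_j)\big)U(\tau_j,0),$$
bound the leading product by $e^{KT}$ using stability, and sum the quadratic consistency errors against the total length to obtain $\Vert\{\kdelta(t,0)-U(t,0)\}f\Vert\le C|\Delta|\,\Vert f\Vert_{B}\to0$, uniformly in $t\in[0,T]$. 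A density argument based on \eqref{2.14} extends the convergence to all $f\in L^2$, proving (2) and identifying the limit as $K(t,0)f=U(t,0)f$. Part (3) is then immediate, since by \cite{Ichinose 2018} $U(t,0)f$ is the unique solution in $C^0_t([0,T];L^2)$ of \eqref{1.3} with $u(0)=f$.

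Part (4) is elementary within this framework. Under \eqref{2.15} the Lagrangian changes by $\dot{x}\cdot\partial_x\psi-\partial_t\psi=\frac{d}{d\theta}\psi(\theta,q(\theta))$ along any path, so for the broken path $S'(t,0;\qdelta)=S(t,0;\qdelta)+\psi(t,x)-\psi(0,x^{(0)})$. Substituting into \eqref{2.1}, pulling the integration-independent factor $e^{i\psi(t,x)}$ outside and absorbing $e^{-i\psi(0,x^{(0)})}$ into $f$ gives \eqref{2.16} at once; letting $|\Delta|\to0$ and using part (2) yields the corresponding relation between $K'(t,0)f$ and $K(t,0)f$ as in \cite{Ichinose 1999}.
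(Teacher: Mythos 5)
Your proposal follows essentially the same route as the paper: the same fundamental operator $\mathcal{C}(t,s)$, the factorization \eqref{5.39} of $K_{\Delta}(t,0)$, stability via $\mathcal{C}(t,s)^{\dag}\chi(\epsilon\cdot)^{2}\mathcal{C}(t,s)$ together with the change of variables $z\mapsto \xi=\Phi(t,s;x,y,z)$ whose Jacobian is kept $\leq 1$ to leading order by the convexity hypothesis \eqref{2.7}, then Theorem 1.A, consistency against the propagator $U(t,s)$ of \cite{Ichinose 2018}, telescoping, and a density argument, with part (4) obtained from $S'(t,s;q^{t,s}_{x,y})=S(t,s;q^{t,s}_{x,y})+\psi(t,x)-\psi(s,y)$ exactly as in the paper. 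The one inaccuracy is quantitative: with piecewise straight lines one only has $(i\partial_t-H(t))\mathcal{C}(t,s)=\sqrt{t-s}\,R(t,s)$ (Proposition 6.6), so the one-step consistency error is $O((t-s)^{3/2})$ in Schwartz seminorms rather than your claimed $O((t-s)^{2})$, giving a total error $O(\sqrt{|\Delta|})$ instead of $O(|\Delta|)$ --- which still suffices for (2) and (3).
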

   Next we consider the Lagrangian function  for the spin system
\begin{equation}  \label{2.17}
     \mathcal{L}_{s}(t,x,\dot{x}) = \mathcal{L}(t,x,\dot{x}) - H_{1}(t,x), 
\end{equation}
where $H_{1}(t,x) = (h_{1jk}(t,x);j\downarrow k\rightarrow 1,2,\dots,l)$ is a Hermitian matrix of degree $l$ and $\mathcal{L}(t,x,\dot{x})$  the Lagrangian function defined by \eqref{1.2}.  Then the corresponding quantized equation is given by
\begin{equation}  \label{2.18}
   i\frac{\partial u}{\partial t}(t) = \bigl[H(t)I + H_{1}(t)\bigr]u(t),
\end{equation}
where $u(t) = {}^t(u_1(t),\dots,u_l(t)) \in \mathbb{C}^l$, $H(t)$ is the operator defined by \eqref{1.3} and $I$ the identity matrix of degree $l$.
\par
    For a continuous path $q(\theta) \in \bR^{d} \ (s \leq \theta \leq t)$  let us define an $l\times l$ matrix $\mathcal{F}(\theta,s;q)\ (s \leq \theta \leq t)$ by the solution to
\begin{equation}  \label{2.19}
   \frac{d}{d \theta}\mathcal{A}(\theta)  = -iH_{1}(\theta,q(\theta))\mathcal{A}(\theta), \ \mathcal{A}(s) = I.
\end{equation}
Then, for the piecewise free moving path $\qdelta(\theta;x^{(0)},x^{(1)},
\dots,x^{(\nu-1)},x)$  we define the probability amplitude by
\begin{equation}  \label{2.21}
  \exp *iS_{s}(t,0;\qdelta) = \bigl(\exp iS(t,0;\qdelta)\bigr)\mathcal{F}(t,0;q_{\Delta}),
\end{equation}
using $S(t,s;q)$ defined by \eqref{1.4}.  Let $f = {}^{t}(f_{1},f_{2},\dots,f_{l}) \in C^{\infty}_0(\bR^d)^{l}$.
Then we define the approximation $K_{s\Delta}(t,0)f$ of the Feynman path integral $K_{s}(t,0)f$ for the system \eqref{2.17}  by replacing $e^{iS(t,0;\qdelta)}$ in \eqref{2.1} with $e^{*iS_{s}(t,0;\qdelta)}$ as in \cite{Ichinose 2007}.
\begin{thm}
Besides Assumptions 2.1 and 2.2 we assume
\begin{equation}  \label{2.22}
  |\partial_{x}^{\alpha} h_{1jk}(t,x)| \leq C_{\alpha},\ j,k = 1,2,\dots,l
\end{equation}
for all $\alpha$. Let $\rho^{*} > 0$ be the constant in Theorem 2.1.  Then we get the same assertions for $K_{s\Delta}(t,0)f$ as for $\kdelta(t,0)f$ in Theorem 2.1 with another constant $K \geq 0$, where $K_{s}(t,0)f = \lim_{|\Delta| \to 0}K_{s\Delta}(t,0)f \in C^{0}_{t}([0,T];(L^{2})^{l})$ for $f \in (L^{2})^{l}$ is the unique solution  in $C^{0}_{t}([0,T];(L^{2})^{l})$ to \eqref{2.18} with $u(0) = f$.
\end{thm}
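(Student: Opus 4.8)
The plan is to mirror the proof of Theorem 2.1, carrying the matrix-valued factor $\mathcal{F}(t,s;q)$ through every step. First I would introduce the spin fundamental operator $\Cts_s(t,s)$ as the matrix-valued analogue of the operator $\Cts(t,s)$ from \S 5, obtained by inserting the factor $\mathcal{F}(t,s;\qts)$ --- the solution of \eqref{2.19} evaluated along the classical path joining $y$ at time $s$ to $x$ at time $t$ --- into the amplitude of the oscillatory-integral kernel of $\Cts(t,s)$. The essential new observation is that, since $H_{1}(\theta,x)$ is Hermitian, the fundamental matrix $\mathcal{F}(\theta,s;q)$ solving \eqref{2.19} is unitary for every continuous path $q$ and every $s \leq \theta \leq t$; hence its matrix operator norm equals $1$. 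This unitarity is exactly what preserves the stability bound: the leading symbol of $\Cts_s(t,s)$ has the same matrix norm as the scalar leading symbol of $\Cts(t,s)$, so a matrix-valued version of Theorem 1.A yields $\Vert \Cts_s(t,s)\Vert_{(L^{2})^{l}\to(L^{2})^{l}} \leq 1 + K(t-s) + o(t-s)$ with no worse constant than in the scalar case.

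Next I would establish the two ingredients needed for the limiting argument. For stability, beyond the leading-order bound I must verify that the symbol of $\Cts_s(t,s)$ together with all its derivatives in $(x,\xi,x')$ lies in the appropriate symbol class uniformly as the time step stays small; this reduces to showing that $\partial_{x}^{\alpha}\partial_{y}^{\beta}\mathcal{F}(t,s;\qts)$ are bounded uniformly for $t-s$ small. Because the path $\qts(\theta)$ is an affine function of its endpoints $x$ and $y$ and because \eqref{2.22} bounds every spatial derivative of $H_{1}$, differentiating \eqref{2.19} and applying Gronwall's inequality gives these bounds, each $x$- or $y$-derivative producing a factor of order $O(t-s)$. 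For consistency, I would compare $\mathcal{F}(t,s;\qts)$ along the short straight line with $I - i\int_{s}^{t}H_{1}(\theta,\cdot)\,d\theta$ and check that the matrix factor contributes precisely the spin term $H_{1}$ to the generator, so that $\Cts_s(t,s)$ agrees with the solution operator of \eqref{2.18} to the order demanded by the consistency lemma of \S 6.

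Finally, combining these stability and consistency estimates with the existence theorem for \eqref{2.18} in \cite{Ichinose 2018} --- which supplies a genuine evolution operator on $(L^{2})^{l}$ and on $\Sspace(\bR^d)^{l}$ --- the same limiting argument as in Theorem 2.1 shows that $K_{s\Delta}(t,0)f$ converges in $(L^{2})^{l}$, uniformly in $t$, to $K_{s}(t,0)f \in C^{0}_{t}([0,T];(L^{2})^{l})$, the unique solution of \eqref{2.18} with $u(0) = f$. The gauge and $\chi$-independence assertions follow verbatim, since the scalar phase $e^{iS(t,0;\qdelta)}$ and the matrix factor $\mathcal{F}$ enter multiplicatively and independently. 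I expect the main obstacle to be the consistency step: because $H_{1}$ depends on $x$ and is matrix-valued, $\mathcal{F}$ is a genuine time-ordered exponential whose values at different times do not commute, so the non-commutative corrections must be tracked carefully to confirm that the matrix factor neither spoils the symbol estimates nor alters the leading generator beyond the expected $+\,H_{1}$ term. The unitarity of $\mathcal{F}$ keeps the stability side clean, but the symbol-class bounds and the consistency expansion rely on the full strength of \eqref{2.22}.
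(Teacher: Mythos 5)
Your proposal is correct in outline and ends at the same place as the paper (stability plus consistency of the spin operator $\Cts_s(t,s)$ of \eqref{7.17}, combined with the existence theorem for \eqref{2.18} and the telescoping argument of Theorem 2.1), but your stability step takes a genuinely different route. You argue via unitarity of $\mathcal{F}(t,s;q)$ (which does hold, since $H_1$ is Hermitian) together with a matrix-valued version of Theorem 1.A, so that the leading amplitude $\mathcal{F}(t,s;q^{t,s}_{z,x})^{*}\mathcal{F}(t,s;q^{t,s}_{z,y})$ has matrix norm $1$ and the scalar bound survives. The paper instead never invokes unitarity: in Proposition 7.4 it writes $\Cts_s(\ts)f = \Cts(\ts)f + \Cts'_s(\ts)f$ as in \eqref{7.21}, uses the integral equation \eqref{7.19} and Lemma 7.3 to get $|\partial_x^{\alpha}\partial_y^{\beta}\{\mathcal{F}(t,s;\qts)-I\}| \leq C_{\alpha,\beta}(t-s)$ (estimate \eqref{7.20}), and then applies the already-proved scalar Theorem 5.10 componentwise to conclude $\Vert \Cts'_s(\ts)f\Vert \leq C_0(t-s)\Vert f\Vert$, whence \eqref{7.18} with a possibly larger constant $K'$. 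Your route is conceptually cleaner and would give a constant no worse than the scalar one, but it silently requires an extension of Theorem 1.A to matrix symbols (true and standard, but not stated in the paper) and requires carrying the composite symbol $\mathcal{F}(\cdot)^{*}\mathcal{F}(\cdot)\circ z(t,s;x,\xi,y)$ through the symbol-class estimates; the paper's perturbative decomposition avoids both issues at the cost of a weaker (but sufficient) bound. Incidentally, the non-commutativity worry you flag at the end is disposed of exactly by \eqref{7.19}: the time-ordered structure of $\mathcal{F}$ is irrelevant once one only needs $\mathcal{F}-I=O(t-s)$ with all $(x,y)$-derivatives, and the consistency statement you sketch is the content of Lemma 7.5, which the paper imports from Proposition 3.5 of the earlier work rather than re-deriving.
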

\begin{rem}
Since we  see from \eqref{2.19} that $e^{iS(t,0;q)}\mathcal{F}(t,0;q)$ is the solution to
\begin{equation*}  
   \frac{d}{d t}\mathcal{U}(t)  = i\mathcal{L}_{s}(t,q(t),\dot{q}(t))\mathcal{U}(t), \ \mathcal{U}(0) = I,
\end{equation*}
we can write $\exp *iS_{s}(t,0;\qdelta)$ formally as $\exp i \displaystyle{\int_0^t}\mathcal{L}_{s}(\theta,\qdelta(\theta),\dot{q}_{\Delta}(\theta)d\theta.$  This is the reason why we express the right-hand side of \eqref{2.21} as $\exp *iS_{s}(t,0;\qdelta)$.
\end{rem}
\begin{rem}
We write
\begin{equation}  \label{2.20}
  q^{t,s}_{x,y}(\theta) = y + \frac{\theta-s}{t-s}(x-y), \ s\leq \theta \leq t
\end{equation}
for $x$ and $y$ in $\bR^{d}$ when $s \not= t$.  Then from Lemma 2.1 of \cite{Ichinose 2007} we have
\begin{equation*}
\mathcal{F}(t,0;\qdelta) = \mathcal{F}(t,\tau_{\nu-1};q^{t,\tau_{\nu-1}}_{x,x^{(\nu-1)}})\mathcal{F}(\tau_{\nu-1},\tau_{\nu-2};q^{\tau_{\nu-1},\tau_{\nu-2}}_{x^{(\nu-1)},x^{(\nu-2)}}) \cdots 
  \mathcal{F}(\tau_{1},0;q^{\tau_{1},0}_{x^{(1)},x^{(0)}}).
\end{equation*}
\end{rem}
\begin{rem}
Letting $M_* = 0$, we assume \eqref{2.8}, \eqref{2.10} and \eqref{2.22}.  Let $(V(t,x),A(t,x))$ be an arbitrary potential such that $V, \partial V/\partial x_j, \partial A_j/\partial t$ and $\partial A_j/\partial x_k\ (j, k = 1,2,\dots,d)$ are continuous in $\domain$.  Then we have proved in \cite{Ichinose 1999} and \cite{Ichinose 2007} the same assertions as in Theorems 2.1 and 2.2 .  Aside from this, letting $M_*=0$, we assume \eqref{2.8}, \eqref{2.9}, \eqref{2.12}, \eqref{2.22},
\begin{equation}  \label{2.23}
  |\partial_{x}^{\alpha} V(t,x)| \leq C_{\alpha}<x>,\  |\alpha| \geq 1
\end{equation}
and 
\begin{equation}  \label{2.23.2}
  |\partial_{x}^{\alpha}\partial_t V(t,x)| \leq C_{\alpha}<x>^M,\  |\alpha| \geq 1
\end{equation}
for a constant $M\geq 0$.  Using \eqref{1.1}, from \eqref{2.8} and \eqref{2.23} we have
\begin{equation}  \label{2.23.3}
  |\partial_{x}^{\alpha}\partial_t A_j(t,x)| = |\partial^{\alpha}_xE_j(t,x)| + |\partial^{\alpha}_x\partial_{x_j}V(t,x)| \leq C_{\alpha}<x>
\end{equation}
for all $\alpha$.  We note (3.3) in \cite{Ichinose 2017} or \eqref{5.11} in the present paper.  Then, under the assumptions above we can prove
 the same assertions as in Theorems 2.1 and 2.2 as in the proofs of the theorems stated in \cite{Ichinose 1999} and \cite{Ichinose 2007}.
\end{rem}
   In the end we will consider the multi-particle system.  For simplicity we will consider the 4-particle system
   \begin{align} \label{2.24}
   \mathcal{L}^{\sharp}(t,x,\dot{x}) =&\sum_{l=1}^{4}\Big\{\frac{m_{l}}{2}|\dot{x}(l)|^{2} + \dot{x}(l)\cdot A^{(l)}(t,x(l))  -V_{l}(t,x(l))\Big\} \notag \\
   & -2 \sum_{1 \leq j < k\leq 4}V_{jk}(t,x(j) -x(k)),
   \end{align}
   where $x = (x(1),x(2),x(3),x(4)) \in \bR^{4d}$.  The corresponding Schr\"odinger equation is given by
\begin{align} \label{2.25}
&  i\frac{\partial u}{\partial t}(t)  = \Biggl[\sum_{l=1}^{4}\left\{\frac{1}{2m_{l}}
      \left|\frac{1}{i}\frac{\partial}{\partial x(l)} - A^{(l)}(t,x(l))\right|^2 + V_{l}(t,x(l))\right\}\notag
      \\
      & + 2 \sum_{1 \leq j < k\leq 4}V_{jk}(t,x(j) -x(k)) \Biggr]u(t).
\end{align}
\vspace{0.2cm}
\par
{\bf Assumption 2.3}. (1) Each $\bigl(V_{l}(t,x(l)),A^{(l)}(t,x(l))\bigr)\ (l = 1,2)$ satisfies Assumption 2.1 with $M_{*} = M_{l*} > 0$. (2)  Each $V_{l}(t,x(l))\ (l = 3,4)$ satisfies \eqref{2.23} and \eqref{2.23.2}.
\vspace{0.2cm}
\par
	We define $E^{(l)}(t,x(l))$ and $B^{(l)}(t,x(l))\ (l= 1,2,3,4)$ by \eqref{1.1} where $A = A^{(l)}$ and $V = V_{l}$.
	\vspace{0.2cm}
\par
{\bf Assumption 2.4}.  Let $M_{l*}\ (l= 1,2)$ be the constants in Assumption 2.3 and $M_{l*} = 0\ (l = 3,4)$. (1)  $A^{(l)}$ and $(E^{(l)},B^{(l)})\ (l = 1,2,3,4)$ satisfy Assumption 2.2 with $M_{*} = M_{l*}$. (2) $V_{jk}(t,z)\ (z \in \bR^{d})$ satisfies 
\begin{equation}  \label{2.26}
  |\partial_{z}^{\alpha} V_{jk}(t,z)| \leq C_{\alpha},\  |\alpha| \geq 2.
\end{equation}
for all $1 \leq j < k \leq 4.$
\vspace{0.2cm}
\par
 We define the approximation $\kdelta^{\sharp}(t,0)f$ of the Feynman path integral $K^{\sharp}(t,0)f$ for the 4-particle system \eqref{2.24}  in the same way as \eqref{2.1}.
  \begin{thm}
  Suppose Assumptions 2.3 and 2.4.  Then we have the same assertions for $\kdelta^{\sharp}(t,0)f$ as for $\kdelta(t,0)f$ in Theorem 2.1 with other constants $\rho^{*} > 0$ and $K \geq 0$, where the Feynman path integral $K^{\sharp}(t,0)f \in C^{0}_{t}([0,T];L^{2})$ for $f \in L^{2}(\bR^{4d})$ is the unique solution 
   in  $C^{0}_{t}([0,T];L^{2})$ to \eqref{2.25} with $u(0) = f$.
  \end{thm}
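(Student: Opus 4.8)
The plan is to regard the 4-particle Lagrangian $\mathcal{L}^{\sharp}$ of \eqref{2.24} as a single-particle Lagrangian of the type \eqref{1.2} on the enlarged configuration space $\bR^{4d}$. Writing $X = (x(1),x(2),x(3),x(4))$, the mass becomes the block-diagonal matrix $\mathrm{diag}(m_{1}I_{d},\dots,m_{4}I_{d})$, the combined vector potential is $A^{\sharp}(t,X) = \bigl(A^{(1)}(t,x(1)),\dots,A^{(4)}(t,x(4))\bigr)$, and the combined scalar potential is
\begin{equation*}
V^{\sharp}(t,X) = \sum_{l=1}^{4} V_{l}(t,x(l)) + 2\!\!\sum_{1\leq j<k\leq 4}\!\!V_{jk}(t,x(j)-x(k)).
\end{equation*}
The fundamental operator $\mathcal{C}^{\sharp}(t,s)$ of \S 5 and the approximation $K^{\sharp}_{\Delta}(t,0)f$ are then defined over $\bR^{4d}$ exactly as $\mathcal{C}(t,s)$ and \eqref{2.1}. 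Since Theorem 2.1 is itself obtained from the stability and consistency of $\mathcal{C}(t,s)$ together with the existence theorem of \cite{Ichinose 2018}, my goal is to re-establish these two properties for $\mathcal{C}^{\sharp}(t,s)$.

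The essential feature is that $(V^{\sharp},A^{\sharp})$ has \emph{anisotropic} growth across the four blocks. By Assumptions 2.3(1) and 2.4(1) the blocks $l=1,2$ satisfy Assumptions 2.1 and 2.2 with $M_{*}=M_{l*}>0$, whereas by Assumption 2.3(2) and 2.4(1) the blocks $l=3,4$ are of the at-most-quadratic, $M_{*}=0$ type treated in Remark 2.3, and the interaction potentials $V_{jk}$, having bounded Hessians by \eqref{2.26}, also belong to that $M_{*}=0$ regime. Consequently the combined system does \emph{not} satisfy the coercive lower bound in \eqref{2.2} uniformly on $\bR^{4d}$, there being no lower growth in the $x(3),x(4)$ directions, so one cannot simply invoke Theorem 2.1. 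What does survive is the Hessian-type bound \eqref{2.7}: the matrix appearing on its right-hand side, now formed from $E^{\sharp}$ and the block-diagonal mass matrix, is bounded below by a block-diagonal matrix that grows like $|x(l)|^{2M_{l*}}$ in the blocks $l=1,2$ and only by a constant in the blocks $l=3,4$, the off-diagonal interaction Hessians being bounded by \eqref{2.26}. Likewise the growth estimates \eqref{2.3}--\eqref{2.6} and \eqref{2.8}--\eqref{2.13} hold for $E^{\sharp}$ and $B^{\sharp}$ with the block-varying exponent, since each spatial differentiation of $V_{jk}(t,x(j)-x(k))$ lowers its growth by one order in the $M_{*}=0$ bookkeeping.

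With this catalogue of estimates I would run the arguments of \S 5 and \S 6 for $\mathcal{C}^{\sharp}(t,s)$, merging the super-quadratic treatment of the blocks $l=1,2$ with the $M_{*}=0$ treatment recorded in Remark 2.3 for the remaining blocks and for the interaction. The stability bound $\Vert \mathcal{C}^{\sharp}(t,s)\Vert_{L^{2}\to L^{2}}\leq e^{K(t-s)}$ again follows by applying Theorem 1.A to the symbol of $\mathcal{C}^{\sharp}(t,s)$, while the consistency estimate follows by Taylor-expanding $S(t,s;\qdelta)$ along the piecewise straight lines in $\bR^{4d}$. Combining stability, consistency and the existence theorem of \cite{Ichinose 2018} for \eqref{2.25} in $L^{2}(\bR^{4d})$ and $\Sspace(\bR^{4d})$, the abstract convergence scheme behind Theorem 2.1 then yields the uniform $L^{2}$-convergence of $K^{\sharp}_{\Delta}(t,0)f$ to $K^{\sharp}(t,0)f$, the bound \eqref{2.14}, the gauge covariance, and the identification of $K^{\sharp}(t,0)f$ as the unique $C^{0}_{t}([0,T];L^{2})$ solution of \eqref{2.25}.

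The hard part will be the anisotropy combined with the inter-particle coupling. The single-particle estimates of \S 5--\S 6 are organised around one uniform weight $\langle X\rangle^{2(M_{*}+1)}$, whereas here the natural weight varies block by block, and the difference variables $x(j)-x(k)$ in $V_{jk}$ genuinely couple the four blocks, so the problem does not factor into independent one-particle problems. I would therefore need to check that the symbol-class computations feeding Theorem 1.A remain valid under a block-anisotropic weight and that the coupling terms stay confined to the benign $M_{*}=0$ category; the bounded-Hessian hypothesis \eqref{2.26} is precisely what secures the latter, so once the mixed estimates are assembled I expect the rest of the argument to proceed as in the proof of Theorem 2.1 and its variant in Remark 2.3.
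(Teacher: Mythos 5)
Your proposal is correct and follows essentially the same route as the paper: \S 8 also treats the system block-wise on $\bR^{4d}$, writes the phase $\Phi=(\Phi^{(1)},\dots,\Phi^{(4)})$ with the interaction entering only through an $O(\rho^{2})$ bounded Hessian term controlled by \eqref{2.26}, proves the anisotropic Jacobian bound $\det(\partial\Phi/\partial z)\geq\delta\prod_{l=1}^{4}(1+\rho^{2}|X(l)|^{2M_{l*}})$, and uses the block-weighted Sobolev spaces $B'^{a}$ with weight $\sum_{l}\langle x(l)\rangle^{2a(M_{l*}+1)}$ together with Theorem 2.4 of \cite{Ichinose 2018} to rerun the scheme of Theorem 2.1. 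Your identification of the block-varying growth exponents and of \eqref{2.26} as the hypothesis keeping the coupling benign matches exactly what the paper's Lemmas 8.1--8.3 make precise.
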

  	Let us consider the spin system. Taking a Hermitian matrix $H_{1}(t,x) = (h_{1jk}(t,x);j\downarrow k\rightarrow 1,2,\dots,l_{0})\ (x \in \bR^{4d})$ of degree $l_{0}$ and using $\mathcal{L}^{\sharp}(t,x,\dot{x})$ defined by \eqref{2.24}, we determine
   \begin{equation} \label{2.27}
   \mathcal{L}^{\sharp}_{s}(t,x,\dot{x}) =\mathcal{L}^{\sharp}(t,x,\dot{x}) - H_{1}(t,x).
   \end{equation}
   For a path $q(\theta) \in \bR^{4d}\ (s \leq \theta \leq t)$  we define $\mathcal{F}^{\sharp}(\theta,s;q)$ by the solution to \eqref{2.19}.  Then we define $\exp *iS^{\sharp}_{s}(t,0;\qdelta)$ by \eqref{2.21} and $K^{\sharp}_{s\Delta}(t,0)f$ for $f \in C^{\infty}_0(\bR^{4d})^{l_0}$ in the same way as we did $K_{s\Delta}(t,0)f$.
   \begin{thm}
   Besides Assumptions 2.3 and 2.4 we assume \eqref{2.22}.  Let $\rho^{*} > 0$ be the constant in Theorem 2.3.  Then we have the same assertions for $K^{\sharp}_{s\Delta}(t,0)f$ as for $K_{s\Delta}(t,0)f$ in Theorem 2.2 with another constant $K \geq 0.$
   \end{thm}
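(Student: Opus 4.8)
The plan is to treat Theorem 2.4 as the superposition of the two mechanisms already installed for Theorems 2.2 and 2.3: the multi-particle growth estimates furnished by Theorem 2.3 and the matrix-valued (spin) amplitude handled as in Theorem 2.2. Following the scheme announced in the introduction, I would introduce the matrix-valued fundamental operator $\Cts_s^{\sharp}(t,s)$ corresponding to one slicing step, obtained from the scalar fundamental operator $\Cts(t,s)$ of \S 5 by replacing its kernel with $\bigl(\exp iS^{\sharp}(t,s;\qts)\bigr)\mathcal{F}^{\sharp}(t,s;\qts)$, and then establish its stability and consistency. The essential structural input is Remark 2.3: the ordered product $\mathcal{F}^{\sharp}(t,0;\qdelta)$ factors exactly into the one-step factors $\mathcal{F}^{\sharp}(\tau_{j+1},\tau_j;q^{\tau_{j+1},\tau_j}_{x^{(j+1)},x^{(j)}})$, so that $K^{\sharp}_{s\Delta}(t,0)$ is the composition $\Cts_s^{\sharp}(t,\tau_{\nu-1})\cdots\Cts_s^{\sharp}(\tau_1,0)$ of single-step matrix operators, exactly as in the scalar case but with matrix coefficients.

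For stability I would first record that, by \eqref{2.22} and the defining ODE \eqref{2.19}, the fundamental matrix satisfies $\mathcal{F}^{\sharp}(t,s;\qts) = I + O(t-s)$ uniformly in $x,y$, and that every derivative $\partial_x^{\beta}\partial_y^{\gamma}\mathcal{F}^{\sharp}(t,s;\qts)$ is bounded uniformly on $\domain$; these derivatives solve inhomogeneous versions of \eqref{2.19} whose coefficients are controlled by \eqref{2.22} evaluated along the straight segment \eqref{2.20}. Hence the spin factor is a matrix-valued symbol in $S^0$ whose operator-norm supremum is $1 + O(t-s)$. Consequently $\Cts_s^{\sharp}(t,s)$ is a matrix pseudo-differential operator to which the matrix version of Theorem 1.A applies, giving a single-step bound of the form $\Vert \Cts_s^{\sharp}(t,s)\Vert \leq 1 + K(t-s) + o(t-s)$; telescoping the product over the $\nu$ intervals then yields $\Vert K^{\sharp}_{s\Delta}(t,0)f\Vert \leq e^{Kt}\Vert f\Vert$ with a new constant $K$, which is the analogue of \eqref{2.14}.

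For consistency I would show, exactly as in \S 6 for the scalar case, that $\Cts_s^{\sharp}(t,s)$ reproduces the short-time evolution generated by the multi-particle spin Hamiltonian of \eqref{2.25} augmented by $H_1$, with error $o(t-s)$ in the appropriate operator topology; the spin term contributes only the additional matrix generator $-iH_1$, accounted for by the first-order expansion of $\mathcal{F}^{\sharp}$ and uniformly bounded by \eqref{2.22}. Combining stability and consistency with the existence theorem of \cite{Ichinose 2018}, applied to \eqref{2.25} with the bounded Hermitian perturbation $H_1$ so that a unique solution in $C^{0}_{t}([0,T];(L^2)^{l_0})$ exists, forces $K^{\sharp}_{s\Delta}(t,0)f$ to converge, uniformly on $[0,T]$, to that unique solution. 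The remaining assertions, namely independence of $\chi$, the $C^{0}_t$ regularity, and the gauge relation, follow verbatim from the arguments already used for Theorems 2.1--2.3.

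The main obstacle, I expect, is not the growth of the electromagnetic potential, since that heavy lifting is entirely done in Theorem 2.3, but the non-commutative matrix structure introduced by $\mathcal{F}^{\sharp}$. Concretely, one must verify that Theorem 1.A survives the passage to matrix-valued symbols, so that the scalar supremum in \eqref{1.9} is replaced by the supremum of the operator norm of the symbol, and that the derivative estimates for $\mathcal{F}^{\sharp}$ stay uniform after differentiating the ordered exponential along the broken path. Because \eqref{2.22} bounds $H_1$ and all its spatial derivatives uniformly, the spin amplitude is a genuinely bounded, $O(t-s)$ perturbation of the identity, so these verifications should be routine once organized; the delicate estimates are precisely those already carried out for the scalar multi-particle problem in Theorem 2.3.
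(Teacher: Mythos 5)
Your overall architecture is the one the paper uses: reduce $K^{\sharp}_{s\Delta}(t,0)$ to a product of one-step matrix operators via the factorization of $\mathcal{F}^{\sharp}$ (which, note, is Remark 2.2, not Remark 2.3 --- the latter concerns the $M_*=0$ case), prove stability and consistency of the one-step operator, and close with the existence theorem from \cite{Ichinose 2018} for \eqref{2.25} perturbed by the bounded Hermitian $H_1$; the paper's \S 8 indeed just transplants the \S 7 spin machinery (Lemma 7.3, Proposition 7.4, Lemma 7.5) onto the multi-particle estimates established for Theorem 2.3. The one place where you genuinely diverge is the stability step. You propose to apply a matrix-valued extension of Theorem 1.A directly to the full symbol $\mathcal{F}^{\sharp}(t,s;q^{t,s}_{z,x})^{\dag}\mathcal{F}^{\sharp}(t,s;q^{t,s}_{z,y})\det(\partial z/\partial\xi)$ and read off that its operator-norm supremum is $1+O(t-s)$. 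The paper never needs such an extension: in Proposition 7.4 it writes $\mathcal{C}_s(t,s)f=\mathcal{C}(t,s)f+\mathcal{C}'_s(t,s)f$, where the kernel of $\mathcal{C}'_s$ carries the factor $\mathcal{F}-I$, which by \eqref{2.19}, \eqref{2.22} and Lemma 7.3 satisfies the uniform bound \eqref{7.20} of size $O(t-s)$ together with all its $x,y$-derivatives; the perturbation is then estimated componentwise by the \emph{crude} $L^2$-bound of Theorem 5.10 (constant $C$, no sharp leading term needed), since the prefactor $t-s$ already supplies the smallness, giving $\Vert\mathcal{C}_s(t,s)\Vert\leq e^{K(t-s)}+C_0(t-s)\leq e^{K'(t-s)}$. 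Your route is workable but forces you to verify the matrix analogue of \eqref{1.9} (with the supremum of the operator norm of the symbol), which is exactly the verification you flag as the "main obstacle"; the paper's decomposition sidesteps it entirely and confines the sharp Theorem 1.A argument to the scalar operator already treated in Theorem 2.3. Your consistency step matches Lemma 7.5 (which rests on Proposition 3.5 of \cite{Ichinose 2007}), and the remaining assertions do follow as you say.
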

\section{Examples}
In this section we will give some examples satisfying Assumptions 2.1 and 2.2 in \S 2.
\begin{lem}
Let $f \in C^{2}([0,\infty))$ and set
\begin{equation} \label{3.1}
V(x) = f(|x|^{2}), \ x \in \bR^{d}.
\end{equation}
Let $x \not= 0$ be an arbitrary point in $\bR^{d}$.  Then there exists an orthogonal matrix $\mathfrak{R}$ such that
\begin{align} \label{3.2}
&\frac{\partial^{2} V}{\partial x^{2}}(x) := \left( \frac{\partial^{2} V}{\partial x_{i} \partial x_{j}};i\downarrow j \rightarrow 1,2,\dots,d\right) \notag \\
& = 2 {}^{t} \mathfrak{R}f'(|x|^{2})\mathfrak{R} + 4{}^{t}\mathfrak{R}
 \left(
    \begin{array}{cccc}
      |x|^{2}f''(|x|^{2})& 0 & \ldots & 0 \\
      0 & 0 & \ldots & 0 \\
      \vdots & \vdots & \ddots & \vdots \\
      0 & 0 & \ldots & 0
    \end{array}
  \right)\mathfrak{R}.
\end{align}
\end{lem}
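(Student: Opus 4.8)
The plan is to compute the Hessian of $V$ directly and then diagonalize its rank-one part. First I would differentiate $V(x) = f(|x|^{2})$ twice. Since $\partial_{x_i}|x|^{2} = 2x_i$, the chain rule gives $\partial_{x_i}V = 2f'(|x|^{2})x_i$, and differentiating once more yields $\partial_{x_i}\partial_{x_j}V = 2f'(|x|^{2})\delta_{ij} + 4f''(|x|^{2})x_ix_j$, so that in matrix form
\begin{equation*}
\frac{\partial^{2} V}{\partial x^{2}}(x) = 2f'(|x|^{2})I + 4f''(|x|^{2})\,x\,{}^{t}x,
\end{equation*}
where $x\,{}^{t}x$ denotes the $d\times d$ matrix with entries $x_ix_j$.

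The key observation is that $x\,{}^{t}x$ is a symmetric rank-one matrix whose only nonzero eigenvalue is $|x|^{2}$, with eigenvector $x/|x|$; this is exactly where the hypothesis $x\neq 0$ enters. I would therefore choose $\mathfrak{R}$ to be any orthogonal matrix whose first row equals ${}^{t}x/|x|$. Such a matrix exists because one can extend the unit vector $x/|x|$ to an orthonormal basis of $\bR^{d}$ and take these vectors as the rows of $\mathfrak{R}$. Writing $e_1 = {}^{t}(1,0,\dots,0)$, the first column of ${}^{t}\mathfrak{R}$ is then ${}^{t}\mathfrak{R}e_1 = x/|x|$.

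With this choice I would verify the two summands separately. For the first, since $f'(|x|^{2})$ is a scalar and ${}^{t}\mathfrak{R}\mathfrak{R} = I$, we have $2f'(|x|^{2})I = 2\,{}^{t}\mathfrak{R}f'(|x|^{2})\mathfrak{R}$, which is the first term of \eqref{3.2}. For the second, using $\mathrm{diag}(|x|^{2},0,\dots,0) = |x|^{2}e_1\,{}^{t}e_1$ together with ${}^{t}\mathfrak{R}e_1 = x/|x|$ and ${}^{t}e_1\mathfrak{R} = {}^{t}x/|x|$, one computes
\begin{equation*}
{}^{t}\mathfrak{R}\,\mathrm{diag}(|x|^{2},0,\dots,0)\,\mathfrak{R} = |x|^{2}({}^{t}\mathfrak{R}e_1)({}^{t}e_1\mathfrak{R}) = |x|^{2}\,\frac{x}{|x|}\,\frac{{}^{t}x}{|x|} = x\,{}^{t}x,
\end{equation*}
and multiplying by $4f''(|x|^{2})$ reproduces exactly the second term of \eqref{3.2}. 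Adding the two pieces gives the claimed identity.

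The computation is elementary and I do not expect a serious obstacle. The only point requiring genuine care is the bookkeeping of the transpose convention and the placement of the factor $|x|^{2}$: one must match the eigenvector $x/|x|$ to the \emph{first row} of $\mathfrak{R}$ (equivalently, the first column of ${}^{t}\mathfrak{R}$) so that the outer product $x\,{}^{t}x$ is reproduced with the correct orientation rather than some transposed or mis-scaled variant.
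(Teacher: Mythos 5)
Your proof is correct. It reaches the same decomposition as the paper --- Hessian $= 2f'(|x|^{2})I + 4f''(|x|^{2})\cdot(\text{rank-one part})$, with the same choice of $\mathfrak{R}$ sending $x$ to $|x|\overrightarrow{e_{1}}$ --- but it gets the conjugation identity by a different mechanism. The paper first exploits the rotational invariance $V(\mathfrak{R}x)=V(x)$ to derive the general transformation law $\frac{\partial^{2}V}{\partial x^{2}}(x) = {}^{t}\mathfrak{R}\,\frac{\partial^{2}V}{\partial x^{2}}(\mathfrak{R}x)\,\mathfrak{R}$, and only then evaluates the Hessian at the special point $|x|\overrightarrow{e_{1}}$, where the off-diagonal rank-one contribution collapses to the $(1,1)$ entry. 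You instead compute the Hessian at the general point $x$, recognize $x\,{}^{t}x$ as a symmetric rank-one matrix with eigenvalue $|x|^{2}$ and eigenvector $x/|x|$, and verify ${}^{t}\mathfrak{R}\,\mathrm{diag}(|x|^{2},0,\dots,0)\,\mathfrak{R} = x\,{}^{t}x$ by a direct outer-product computation. Your route is slightly more elementary: it needs only linear algebra on the explicit Hessian and never invokes the symmetry of $V$. The paper's route buys a reusable fact --- the covariance of the Hessian under any orthogonal change of variables, equation \eqref{3.4} --- which it immediately recycles in the proof of Proposition 3.3 for a general regular matrix $\mathfrak{A}$ (there one needs $\frac{\partial^{2}V}{\partial x^{2}}(x) = {}^{t}\mathfrak{A}\,\frac{\partial^{2}W}{\partial x^{2}}(\mathfrak{A}x)\,\mathfrak{A}$, which does not follow from a rank-one diagonalization alone). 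Your transpose bookkeeping (first row of $\mathfrak{R}$ equal to ${}^{t}x/|x|$, hence first column of ${}^{t}\mathfrak{R}$ equal to $x/|x|$) is handled correctly.
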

\begin{proof}
Let $\mathfrak{R} = (R_{ij};i\downarrow j\rightarrow 1,2,\dots,d)$ be an orthogonal matrix.  Then we have
\begin{equation} \label{3.3}
V(\dR x) = f(|\dR x|^{2}) = f(|x|^{2}) = V(x),
\end{equation}
which shows 
\begin{equation*} 
\frac{\partial V}{\partial x_{j}}(x) = \frac{\partial }{\partial x_{j}}V(\dR x) = \sum_{k=1}^{d}\frac{\partial V}{\partial x_{k}}(\dR x) R_{kj}
\end{equation*}
and so
\begin{equation*} 
\frac{\partial^{2} V}{\partial x_{i}\partial x_{j}}(x) = \sum_{k,l=1}^{d}
R_{li}\frac{\partial^{2} V}{\partial x_{l}\partial x_{k}}(\dR x)R_{kj}.
\end{equation*}
Hence 
\begin{equation} \label{3.4}
\frac{\partial^{2} V}{\partial x^{2}}(x) = {}^{t}\dR \frac{\partial^{2} V}{\partial x^{2}}(\dR x)\dR.
\end{equation}
\par
	On the other hand, from  \eqref{3.1} we see
\begin{equation*} 
\frac{\partial V}{\partial x_{j}}(x) = 2x_{j}f'(|x|^{2})
\end{equation*}
and so
\begin{equation*} 
\frac{\partial^{2} V}{\partial x_{i}\partial x_{j}}(x) = 2\delta_{ij}f'(|x|^{2}) + 4x_{i}x_{j}f''(|x|^{2}).
\end{equation*}
Consequently, letting $\overrightarrow{e_{1}} = (1,0,\dots,0) \in \bR^{d}$, we have
\begin{equation} \label{3.5}
\frac{\partial^{2} V}{\partial x_{i}\partial x_{j}}(|x|\overrightarrow{e_{1}}) = 2\delta_{ij}f'(|x|^{2}) + 4\delta_{1i}\delta_{1j}|x|^{2}f''(|x|^{2}).
\end{equation}
Let $x \not= 0$ be an arbitrary  point in $\bR^{d}$.  Then we can take an orthogonal matrix $\dR$ such that $\dR x = |x|\overrightarrow{e_{1}}$.  Then from \eqref{3.4} we have 
\begin{equation*} 
\frac{\partial^{2} V}{\partial x^{2}}(x) = {}^{t}\dR \frac{\partial^{2} V}{\partial x^{2}}(|x|\overrightarrow{e_{1}})\dR
\end{equation*}
and hence have \eqref{3.2} from \eqref{3.5}.
\end{proof}
	From Lemma 3.1 we can easily get the following.
	\begin{cor}
	Let $f \in C^{2}([0,\infty))$ such that
\begin{equation} \label{3.6}
f''(\theta) \geq 0\ (0 \leq \theta < \infty).
\end{equation}
We define $V(x)$ by \eqref{3.1}.  Then we have
\begin{equation} \label{3.7}
\frac{\partial^{2} V}{\partial x^{2}}(x) \geq 2f'(|x|^{2})I.
\end{equation}
	\end{cor}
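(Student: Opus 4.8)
The plan is to read the inequality \eqref{3.7} straight off the explicit Hessian formula \eqref{3.2} supplied by Lemma 3.1. Fix $x \neq 0$ and let $\dR$ be the orthogonal matrix produced there. Since $f'(|x|^2)$ is a scalar and ${}^t\dR\dR = I$, the first summand on the right of \eqref{3.2} collapses to $2f'(|x|^2){}^t\dR\dR = 2f'(|x|^2)I$. Hence \eqref{3.2} rewrites as
\[
\frac{\partial^2 V}{\partial x^2}(x) - 2f'(|x|^2)I = 4\,{}^t\dR D \dR, \qquad D = \mathrm{diag}\bigl(|x|^2 f''(|x|^2),0,\dots,0\bigr),
\]
and it remains only to verify that the right-hand side is positive semidefinite.

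This is exactly where the hypothesis \eqref{3.6} enters: the single nonzero entry of $D$ is $|x|^2 f''(|x|^2)\geq 0$, so $D \geq 0$. For any $\xi \in \bR^d$ I would then compute ${}^t\xi\,({}^t\dR D\dR)\,\xi = {}^t(\dR\xi)\,D\,(\dR\xi) \geq 0$, so that ${}^t\dR D\dR \geq 0$ (orthogonal congruence preserves positive semidefiniteness). This yields \eqref{3.7} for every $x \neq 0$.

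Finally I would treat the excluded point $x = 0$ separately, since Lemma 3.1 is stated only for $x \neq 0$. Here the intermediate identity $\partial^2 V/\partial x_i\partial x_j(x) = 2\delta_{ij}f'(|x|^2) + 4x_ix_jf''(|x|^2)$ established inside the proof of Lemma 3.1 gives $\partial^2 V/\partial x_i\partial x_j(0) = 2\delta_{ij}f'(0)$, i.e.\ equality in \eqref{3.7}. In fact this same identity reads $\partial^2 V/\partial x^2(x) = 2f'(|x|^2)I + 4f''(|x|^2)(x_ix_j)_{ij}$ for all $x$, and since the outer-product matrix $(x_ix_j)_{ij}$ is positive semidefinite and $f''\geq 0$, one obtains \eqref{3.7} uniformly in $x$ without ever invoking $\dR$; I would mention this as the cleaner route. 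There is no genuine obstacle in this corollary — the only points demanding care are the scalar collapse of the first term of \eqref{3.2} and the separate handling of $x=0$.
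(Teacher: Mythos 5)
Your proof is correct and is precisely the argument the paper intends: the paper gives no proof of Corollary 3.2 beyond the remark that it follows easily from Lemma 3.1, and reading \eqref{3.7} off the decomposition \eqref{3.2} via positive semidefiniteness of the rank-one block under orthogonal congruence is the evident route. Your handling of the excluded point $x=0$ and the observation that the identity $\partial^{2}V/\partial x_{i}\partial x_{j}=2\delta_{ij}f'(|x|^{2})+4x_{i}x_{j}f''(|x|^{2})$ gives \eqref{3.7} directly for all $x$ are both sound refinements of the same argument.
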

	\begin{pro}
	Let $f \in C^{2}([0,\infty))$ satisfying \eqref{3.6} and $f'(0) \geq 0$.  Let $\mathfrak{A}$ be a regular real matrix.  We denote the smallest eigenvalue of ${}^{t}\dA\dA$ by $\beta > 0$.  We set
\begin{equation} \label{3.8}
V(x) = f(|\dA x|^{2}).
\end{equation}
Then we have
\begin{equation} \label{3.9}
\frac{\partial^{2} V}{\partial x^{2}}(x) \geq 2\beta f'(|\dA x|^{2})I \geq 2\beta  f'(\beta | x|^{2})I.
\end{equation}
\end{pro}
\begin{proof}
Setting $W(x) = f(|x|^{2})$, we have $V(x) = W(\dA x),$ which shows 
\begin{equation*} 
\frac{\partial^{2} V}{\partial x^{2}}(x) = {}^{t}\dA \frac{\partial^{2} W}{\partial x^{2}}(\dA x)\dA
\end{equation*}
as in the proof of \eqref{3.4}.  Hence
\[
\left(\frac{\partial^{2} V}{\partial x^{2}}(x)u,u\right)
= \left(\frac{\partial^{2} W}{\partial x^{2}}(\dA x)\dA u,\dA u\right)
\]
for $u \in \bR^{d}$, where $(\cdot,\cdot)$ is the inner product in $\bR^{d}$.  Since we have $\partial^{2}W(x)/\partial x^{2} \geq 2f'(|x|^2)I$ from \eqref{3.7}, we have
\[
\left(\frac{\partial^{2} V}{\partial x^{2}}(x)u,u\right)
\geq 2\left(f'(|\dA x|^{2})\dA u,\dA u\right),
\]
which leads to 
\[
\left(\frac{\partial^{2} V}{\partial x^{2}}(x)u,u\right)
\geq 2f'(|\dA x|^{2})\left({}^{t}\dA\dA  u, u\right) \geq 2\beta f'(|\dA x|^{2})|u|^{2}
\]
because of $f'(\theta) \geq 0\ (0 \leq \theta < \infty)$.  Hence we obtain the first inequality of \eqref{3.9}.  The second inequality follows from the fact that $f'(\theta)$ is an increasing function and $|\dA x|^{2} = ({}^{t}\dA \dA x,x) \geq \beta|x|^{2}$.
\end{proof}
\begin{exmp}
Let $(V,A)$ be the potential defined by \eqref{1.6} and \eqref{1.7} with an integer $M \geq 0$, real-valued $a_{\alpha}(t)$ and  $b_{j\alpha}(t)$ in $C^{1}([0,T])$.  I will prove that this $(V,A)$ satisfies Assumptions 2.1 and 2.2 with the integer $M_{*} = M$.
\par
	Noting \eqref{1.1}, we can easily see that we have only to prove \eqref{2.7}. Letting $f(\theta) = \theta^{M+1}$ in Corollary 3.2, we have
\begin{equation} \label{3.10}
\frac{\partial^{2}}{\partial x^{2}}|x|^{2(M+1)} \geq 2(M+1)|x|^{2M}I.
\end{equation}
Hence we can prove \eqref{2.7}, because we have
\begin{equation*} 
- \frac{\partial E}{\partial x} =\frac{\partial^{2} A}{\partial t\partial x }+ 
\frac{\partial^{2} V}{\partial x^{2}} = \frac{\partial^{2}}{\partial x^{2}}|x|^{2(M+1)} + O(<x>^{2M-1})
\end{equation*}
from \eqref{1.1}, \eqref{1.6} and \eqref{1.7}.
\end{exmp}
\begin{exmp}
Let $\dA (t)$ be a regular real matrix whose components are continuously differentiable on $[0,T]$.  We set
\begin{equation} \label{3.11}
V(t,x) = |\dA (t)x|^{2(M+1)} + V_{1}(t,x)
\end{equation}
with an integer $M \geq 0$, where we assume
\begin{equation} \label{3.12}
|\partial_{x}^{\alpha}V_{1}(t,x)| \leq C_{\alpha}<x>^{2(M+1) -|\alpha| - \delta_{\alpha}}
\end{equation}
for all $\alpha$ with constants $\delta_{\alpha} > 0$ and 
\begin{equation} \label{3.13}
|\partial_{x}^{\alpha}\partial_{t}V_{1}(t,x)| \leq C_{\alpha}<x>^{2(M+1)}
\end{equation}
for all $\alpha$.  In addition, we assume 
\begin{equation} \label{3.14}
|\partial_{x}^{\alpha}A_{j}(t,x)| \leq C_{\alpha}<x>^{M+1-|\alpha| -\delta_{\alpha}}\quad (j = 1,2,\dots,d)
\end{equation}
for all $\alpha$ and 
\begin{equation} \label{3.15}
|\partial_{x}^{\alpha}\partial_{t}A_{j}(t,x)| \leq C_{\alpha}<x>^{M+1-|\alpha|}\quad (j = 1,2,\dots,d)
\end{equation}
for all $\alpha$.  Then this potential $(V,A)$ satisfies Assumptions 2.1 and 2.2 with $M_{*} = M$.
\par
In fact we have only to prove \eqref{2.7} as in the arguments in Example 3.1.  Letting $f(\theta) = \theta^{M+1}$ in Proposition 3.3, we have
\begin{equation} \label{3.16}
\frac{\partial^{2}}{\partial x^{2}}|\mathfrak{A}(t)x|^{2(M+1)} \geq 2(M+1)\beta|\mathfrak{A}(t)x|^{2M} \geq 2(M+1)\beta^{M+1}|x|^{2M}
\end{equation}
with $\beta > 0$.  Hence we can prove \eqref{2.7} as in the proof of Example 3.1.
\end{exmp}
\begin{exmp}
Let $\dA(t)$ be the matrix in Example 3.2.  We set
\begin{equation} \label{3.17}
V(t,x) = \left( 1 + |\dA(t)x|^{2}\right)^{M+1} + V_{1}(t,x)
\end{equation}
with a constant $M \geq 0$, where $V_{1}(t,x)$ is assumed to satisfy \eqref{3.12} and \eqref{3.13}.
Suppose that $A(t,x)$ satisfies \eqref{3.14} and \eqref{3.15}.  In addition,  when $M$ in \eqref{3.17} is in $(0,1)$, we assume \eqref{3.14} with $M = 0$. Then this potential $(V,A)$ satisfies Assumptions 2.1 and 2.2.  In fact we have only to prove \eqref{2.7}.  Letting $f(\theta) = (1+\theta)^{M+1}$ in Proposition 3.3, we have
\begin{align} \label{3.18}
& \frac{\partial^{2}}{\partial x^{2}}\left(1 + |\mathfrak{A}(t)x|^{2} \right)^{M+1}\geq 2(M+1)\beta\left(1 + |\mathfrak{A}(t)x|^{2} \right)^{M}  \notag \\
& \geq 2(M+1)\beta\left(1 + \beta|x|^{2} \right)^{M} \geq 2(M+1)\beta^{M+1}|x|^{2M}.
\end{align}
Hence we can prove \eqref{2.7} as in the proof of Example 3.2.
\end{exmp}
\section{The Feynman path integrals for bosons and fermions}
In this section we consider the quantum spin system consisting of $N$ particles.  We write $x = (\bx_1,\bx_2,\dots,\bx_N) \in \bR^{3N}$.  The Lagrangian function is given by %
 \begin{align} \label{4.1}
& {\cal L}^{\sharp}_{is}(t,x,\dot{x}) =  \sum_{i=1}^N 
 \biggl\{\frac{m_i}{2}|\dot{\bx}_i|^2 + e_i\dot{\bx}_i\cdot \bA_i(t,\bx_i) -e_iV_i(t,\bx_i) + I_1\otimes \cdots \otimes I_{i-1} \notag \\
 & \otimes \frac{e_i}{m_i}\bB_i(t,\bx_i)\cdot \bs_i\otimes I_{i+1}\otimes \cdots\otimes I_N
 \biggr\} - \sum_{j,k=1,j \not= k}^Ne_je_kV_{jk}(t,\bx_j-\bx_k) \notag \\
 &  \equiv {\cal L}^{\sharp}_{i}(t,x,\dot{x}) + \sum_{i=1}^N I_1\otimes \cdots \otimes I_{i-1} \otimes \frac{e_i}{m_i}\bB_i(t,\bx_i)\cdot \bs_i\otimes I_{i+1}\otimes \cdots\otimes I_N
  \end{align}
in terms of the tensor product, where $\bA_i \in \bR^3, \bB_i \in \bR^3, V_i \in \bR, V_{jk} \in \bR, $ $\bs_i = 
(\hat{s}_1,\hat{s}_2,\hat{s}_3)$ are spin matrices with three components and $I_j$ the identity matrix for the j-th particle.  In particular we suppose that all particles are identical.  Hence we suppose $m_i = m, e_i = e, \bA_i = \bA, \bB_i = \bB, V_i = V, V_{jk} = W$ and $\bs_i = \bs$.  Let $L$ be the magnitude of spin of particles.
We note that the N-fold tensor product $L^{2}(\bR^{3})^{2L+1}\otimes \cdots \otimes L^{2}(\bR^{3})^{2L+1}$ is isomorphic to $ L^{2}(\bR^{3N})^{l}$ with $l = (2L+1)^{N}$ (cf. Theorem II.10 on p. 52 in \cite{Reed-Simon I}), which we write as $\mathfrak{H}$.
The  Schr\"odinger equation for the Lagrangian \eqref{4.1} is given by
\begin{align} \label{4.2}
& i \frac{\partial u}{\partial t}(t)  =\Biggl[\sum_{j=1}^N \Bigg\{\frac{1}{2m}
      \left|\frac{1}{i}\frac{\partial}{\partial \bx_j} - e\bA(t,\bx_j)\right|^2 + eV(t,\bx_j) - I_1\otimes \cdots \otimes I_{j-1}\notag\\ 
&   \otimes \frac{e}{m}\bB(t,\bx_j)\cdot \bs\otimes I_{j+1}\otimes \cdots\otimes I_N\Biggr\} + e^2 \sum_{j,k=1,j \not= k}^N W(t,\bx_j-\bx_k)
\Biggr]u(t).
\end{align}
We note that if $W = 0$ and $u(t,x) = u_1(t,\bx_1)\otimes \cdots \otimes u_N(t,\bx_N)$, \eqref{4.2} is written as
\begin{equation*} 
 0 = \sum_{j=1}^Nu_1(t)\otimes \cdots u_{j-1}(t)\otimes \left[i\frac{\partial}{\partial t} - H_j(t)\right]u_j(t)\otimes \cdots u_N(t),
\end{equation*}
where $H_j(t) = |i^{-1}\partial_{\bx_j} - e\bA(t,\bx_j)|^2/(2m) + eV(t,\bx_j) - e\bB(t,\bx_j)\cdot\bs/m$.
\par
Let $S^{\sharp}_i(t,s;q)$ be the classical action for $\mathcal{L}^{\sharp}_i(t,x,\dot{x})$ defined by \eqref{4.1}.
We define the approximation $K^{\sharp}_{is\Delta}(t,0)f$ of the Feynman path integral $K^{\sharp}_{is}(t,0)f$ for \eqref{4.1} in the same way as we did $K^{\sharp}_{s\Delta}(t,0)f$ before Theorem 2.4, where
$f = \bigl\{f(\bx_1,s_1,\bx_2,s_2,\dots,\bx_N,s_N); s_j = -L, -L+1,\dots, L\ (j = 1,2,\dots,N)\bigr\} \in C^{\infty}_0(\bR^{3N})^l$.
 That is, we define $\mathcal{F}^{\sharp}_{i}(\theta,s;q)$
for a path $q(\theta) \in \bR^{3N}\ (s \leq \theta \leq t)$ by the solution
\begin{equation*}
\frac{d}{d\theta}\mathcal{A}(\theta) = -iH_1(\theta,q(\theta))\mathcal{A}(\theta),\  \mathcal{A}(s) = I,
\end{equation*}
where $H_1(t,x) = - \sum_{j=1}^NI_1\otimes \cdots \otimes I_{j-1}
   \otimes e\bB(t,\bx_j)\cdot \bs/m\otimes I_{j+1}\otimes \cdots\otimes I_N $.  Next we define the probability amplitude by \eqref{2.21} and eventually define $K^{\sharp}_{is\Delta}(t,0)f$ by \eqref{2.1}. 
   \par
  We define $\mathcal{F}(\theta,s;\mathbf{q})\ (s \leq \theta \leq t)$ for a continuous path $\bq(\theta) \in \bR^{3}\ (s \leq \theta \leq t)$ by the solution 
\begin{equation} \label{4.3}
\frac{d}{d\theta}\mathcal{A'}(\theta) = i\frac{e}{m}\bigl(\bB(\theta,\bq(\theta))\cdot \bs\bigr)\mathcal{A'}(\theta),\  \mathcal{A'}(s) = I.
\end{equation}
Then we can easily have
\begin{align*}
& \frac{d}{d\theta}\, \mathcal{F}(\theta,s;\mathbf{q}_1) \otimes \cdots \otimes  \mathcal{F}(\theta,s;\mathbf{q}_N) 
 = \sum_{j=1}^N \mathcal{F}(\theta,s;\mathbf{q}_1)\otimes \cdots \otimes \mathcal{F}(\theta,s;\mathbf{q}_{j-1}) \\
&  \otimes \frac{d}{d\theta}\, \mathcal{F}(\theta,s;\mathbf{q}_j) \otimes \mathcal{F}(\theta,s;\mathbf{q}_{j+1})\otimes \cdots  \mathcal{F}(\theta,s;\mathbf{q}_N) = \sum_{j=1}^N \Bigl(I_1 \otimes \cdots \otimes i\frac{e}{m}\mathbf{B}(\theta,\mathbf{q}_j(\theta))\cdot\bs \\
& \otimes I_{j+1} \otimes \cdots \otimes I_N \Bigr)\mathcal{F}(\theta,s;\mathbf{q}_1) \otimes \cdots \otimes  \mathcal{F}(\theta,s;\mathbf{q}_N) \\
& = -iH_1(\theta,q(\theta))\mathcal{F}(\theta,s;\mathbf{q}_1) \otimes \cdots \otimes  \mathcal{F}(\theta,s;\mathbf{q}_N).
\end{align*}
Hence 
we have
\begin{equation} \label{4.4}
\mathcal{F}^{\sharp}_{i}(\theta,s;q) = \mathcal{F}(\theta,s;\bq_{1})\otimes \cdots \otimes \mathcal{F}(\theta,s;\bq_{N})
\end{equation}
because of the uniqueness of  solutions to the ordinary differential equation, where $\mathbf{B}(t,\bx)$ are assumed to be continuous in $[0,T]\times \bR^{3}$.
\par
   Let $\widehat{P}_{ij}\ (i, j = 1,2,\dots,N)$ be the operator exchanging the $i$-th particle  and the $j$-th one.  That is,  we define 
\begin{align} \label{4.5}
& \widehat{P}_{ij}\bigl( f_1(\bx_1)\otimes \cdots \otimes f_i(\bx_i)\otimes \cdots \otimes f_j(\bx_j) \otimes \cdots \otimes f_N(\bx_N)\bigr) \notag \\
& =  f_1(\bx_1)\otimes \cdots \otimes f_j(\bx_i)\otimes \cdots \otimes f_i(\bx_j) \otimes \cdots \otimes f_N(\bx_N)
\end{align}
for $f_j(\bx_j) \in L^2(\bR^3)^{2L+1}\ (j = 1,2,\dots,N)$ and extend $\widehat{P}_{ij}$ for $f = \sum_{n=1}^{\infty}f_1^{(n)}(\bx_1)\otimes \cdots \otimes f_N^{(n)}(\bx_N) \in \mathfrak{H}$.
   \par
   The following theorem shows that the Feynman path integrals $K^{\sharp}_{is}(t,0)f$ are expressing bosons and fermions.
   \begin{thm}
   Assume that $(V(t,\bx),\bA(t,\bx))\ (\bx \in \bR^{3}), \bB(t,\bx)$ and $W(t,\bx)$ satisfy Assumptions 2.1-2.2, \eqref{2.22} and \eqref{2.26}, respectively.  Then we have: (1) The same assertions for $K^{\sharp}_{is\Delta}(t,0)f$ as for $\kdelta(t,0)f$ in Theorem 2.1 hold, where $K^{\sharp}_{is}(t,0)f \in C^{0}_{t}([0,T];\mathfrak{H})\ (f \in \mathfrak{H})$ is the unique solution  in  $C^{0}_{t}([0,T];\mathfrak{H})$ to \eqref{4.2} with $u(0) = f$. 
   (2) If $f  \in \mathfrak{H}$ is symmetric, i.e. $\hat{P}_{ij}f = f$ for all $i$ and $j$, so is $K^{\sharp}_{is}(t,0)f$. (3) If $f \in \mathfrak{H}$ is antisymmetric , i.e. $\hat{P}_{ij}f = - f$ for all $i \not= j$, so is $K^{\sharp}_{is\Delta}(t,0)f$. 
   \end{thm}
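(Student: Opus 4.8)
The plan is to reduce statement (1) to the $N$-particle analogue of Theorem 2.4, and then to deduce the (anti)symmetry assertions (2) and (3) from the fact that each exchange operator $\widehat{P}_{ij}$ commutes with the approximating operator $K^{\sharp}_{is\Delta}(t,0)$. For part (1) I would first observe that, since all particles are identical, the single-particle data $(V,\bA)$ satisfies Assumptions 2.1--2.2, the pair interaction $W(t,\bx_j-\bx_k)$ plays the role of each $V_{jk}$ and satisfies \eqref{2.26}, and the spin Hamiltonian $H_1 = -\sum_{j}I_1\otimes\cdots\otimes (e/m)\bB(t,\bx_j)\cdot\bs\otimes\cdots\otimes I_N$ has matrix entries that are bounded together with all their $x$-derivatives, because $\bB$ satisfies \eqref{2.22} and the spin matrices $\bs$ are constant; hence $H_1$ satisfies \eqref{2.22}. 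Via the isomorphism $\mathfrak{H}\cong L^2(\bR^{3N})^l$ noted above, the system \eqref{4.1}--\eqref{4.2} is then exactly of the type covered by Theorem 2.4 (stated there for four particles only ``for simplicity'', the argument being verbatim for general $N$), and invoking that theorem yields the stability estimate, the $\mathfrak{H}$-convergence uniform in $t$, the continuity in $t$, and the uniqueness of the solution to \eqref{4.2}, which together constitute (1).

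For parts (2) and (3) the central claim is the operator identity
\[
\widehat{P}_{ij}\,K^{\sharp}_{is\Delta}(t,0) = K^{\sharp}_{is\Delta}(t,0)\,\widehat{P}_{ij}\qquad (i\neq j)
\]
for every admissible $\Delta$. To establish it I would insert the definition \eqref{2.1} of $K^{\sharp}_{is\Delta}(t,0)$ and perform, inside the regularized oscillatory integral, the linear change of variables interchanging the $i$-th and $j$-th spatial blocks $\bx_i^{(k)}\leftrightarrow\bx_j^{(k)}$ at every time slice $k=0,1,\dots,\nu-1$ and at the endpoint. This substitution is measure preserving, leaves the Gaussian prefactors invariant since the masses are equal, and leaves the cutoff product $\prod_k\chi(\epsilon x^{(k)})$ invariant. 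Because all particles carry the same $(V,\bA)$ and $W$ is evaluated on differences, the classical action $S^{\sharp}_i(t,0;\qdelta)$ is invariant under the simultaneous relabelling of the two particles, so $e^{iS^{\sharp}_i}$ is unchanged; the piecewise free paths are simply relabelled $\bq_i\leftrightarrow\bq_j$. The only factor that genuinely transforms is the spin amplitude: by the tensor factorization \eqref{4.4}, interchanging $\bq_i\leftrightarrow\bq_j$ interchanges the $i$-th and $j$-th tensor slots of $\mathcal{F}^{\sharp}_i(t,0;\qdelta)=\mathcal{F}(t,0;\bq_1)\otimes\cdots\otimes\mathcal{F}(t,0;\bq_N)$, which is precisely conjugation by the spin part of $\widehat{P}_{ij}$. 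Collecting these facts, the change of variables converts the integral representing $\widehat{P}_{ij}K^{\sharp}_{is\Delta}(t,0)f$ into the one representing $K^{\sharp}_{is\Delta}(t,0)\widehat{P}_{ij}f$, which proves the identity.

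Granting the commutation identity, (3) is immediate: if $\widehat{P}_{ij}f=-f$ then $\widehat{P}_{ij}K^{\sharp}_{is\Delta}(t,0)f=-K^{\sharp}_{is\Delta}(t,0)f$, so the approximation is antisymmetric; and (2) follows by letting $|\Delta|\to0$ in the symmetric case $\widehat{P}_{ij}f=f$ and using that $\widehat{P}_{ij}$ is a bounded (indeed unitary) operator on $\mathfrak{H}$, hence commutes with the $\mathfrak{H}$-limit, giving $\widehat{P}_{ij}K^{\sharp}_{is}(t,0)f=K^{\sharp}_{is}(t,0)f$. I expect the main obstacle to be the careful bookkeeping of the spin covariance, namely verifying that the interchange of tensor factors induced by the spatial change of variables agrees exactly with conjugation by the spin part of $\widehat{P}_{ij}$, and that this manipulation is compatible with the $\epsilon\to0+$ limit defining the oscillatory integral; the spatial symmetry of the action and measure, by contrast, is transparent.
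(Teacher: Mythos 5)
Your proposal is correct and follows essentially the same route as the paper: part (1) is reduced to Theorem 2.4, and parts (2) and (3) rest on the commutation identity $\widehat{P}_{ij}K^{\sharp}_{is\Delta}(t,0)=K^{\sharp}_{is\Delta}(t,0)\widehat{P}_{ij}$, obtained by exchanging the spatial blocks in the regularized oscillatory integral and using the symmetry of the action together with the tensor factorization \eqref{4.4} of the spin amplitude. The only organizational difference is that the paper first proves the commutation for the single-step operator $\mathcal{C}^{\sharp}_{is}(t,s)$ and then composes via the product formula \eqref{4.11} (which also disposes of the cutoffs $\chi(\epsilon\cdot)$ by the $L^{2}$-boundedness from Theorem 2.4), whereas you perform the exchange directly in the full time-sliced integral; this is an equivalent bookkeeping of the same argument.
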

  \begin{proof}
  The first assertion (1) follows from Theorem 2.4. 
  Let's prove the second assertion.  For simplicity suppose $N = 2$.  Let $c(\bx_1,\bx_2) \in \mathbb{C}$ be a bounded measurable  function.  Then we can prove
\begin{equation} \label{4.6}
\hat{P}_{12}\, c(\bx_1,\bx_2)f_1(\bx_1)\otimes f_2(\bx_2) = c(\bx_2,\bx_1)f_2(\bx_1)\otimes f_1(\bx_2) 
\end{equation}
from the definition of $\hat{P}_{12}$, approximating $c(\bx_1,\bx_2)$ by $\sum_{j=1}^n c_1^{(j)}(\bx_1)c_2^{(j)}(\bx_2)$. 
\par
 Let $0 \leq t - s \leq \rho^*$.
 Setting $\bq_j(\theta) := \bq^{t,s}_{\bx_j,\by_j}(\theta)\ (s \leq \theta \leq t, j = 1,2)$, we write 
\begin{align} \label{4.7}
& \mathcal{C}^{\sharp}_{is}(\ts)(f_1\otimes f_2) & \notag \\
&
:= \iint e^{iS^{\sharp}_i(t,s;\bq_1,\bq_2)}\mathcal{F}^{\sharp}_i(t,s;\bq_{1},\bq_{2}) 
 \bigl(f_1(\bq_1(s))\otimes f_2(\bq_2(s)) \bigr)d\by_1d\by_2
\end{align}
for $f_j(\bx_j) \in C^{\infty}_0(\bR^3)^{2L+1}\ (j=1,2)$, which belongs to $L^2(\bR^6)^l$ from (1) of Theorem 2.4.  
From \eqref{4.4} we can write
\begin{align*} 
& \mathcal{C}^{\sharp}_{is}(\ts)(f_1\otimes f_2) = \iint e^{iS^{\sharp}_i(t,s;\bq_1,\bq_2)}\mathcal{F}(t,s;\bq_{1})f_1(\bq_1(s))\\
& \quad  \otimes  \mathcal{F}(t,s;\bq_{2})  f_2(\bq_2(s))d\by_1d\by_2.
\end{align*}
Making the same arguments as in the proof of \eqref{4.6}, by the exchange of $(\bx_1,\by_1)$ and $(\bx_2,\by_2)$ in the above equation we can prove
\begin{align*} 
&
 \widehat{P}_{12}\Bigl(\chi(\epsilon\bx_1)\chi(\epsilon\bx_2)\mathcal{C}^{\sharp}_{is}(\ts)(f_1\otimes f_2)\Bigr) = \chi(\epsilon\bx_2)\chi(\epsilon\bx_1) \iint e^{iS^{\sharp}_i(t,s;\bq_2,\bq_1)} \\
& \times \mathcal{F}(t,s;\bq_{1})f_2(\bq_1(s)) \otimes  \mathcal{F}(t,s;\bq_{2})  f_1(\bq_2(s))d\by_1d\by_2.
\end{align*}
Letting $\epsilon \to 0$, we have
\begin{align} \label{4.8}
& \hat{P}_{12}\,\mathcal{C}^{\sharp}_{is}(\ts)(f_1\otimes f_2) = \iint e^{iS^{\sharp}_i(t,s;\bq_2,\bq_1)} \mathcal{F}(t,s;\bq_{1})f_2(\bq_1(s))
\notag \\
& \qquad \otimes  \mathcal{F}(t,s;\bq_{2})  f_1(\bq_2(s))d\by_1d\by_2.
 \end{align}
 Using $S^{\sharp}_i(t,s;\bq_1,\bq_2) = S^{\sharp}_i(t,s;\bq_2,\bq_1)$,
 we have
 \begin{align} \label{4.9}
&  \hat{P}_{12}\,\mathcal{C}^{\sharp}_{is}(\ts)(f_1\otimes f_2) = \iint e^{iS^{\sharp}_i(t,s;\bq_1,\bq_2)}\mathcal{F}(t,s;\bq_{1})\otimes  \mathcal{F}(t,s;\bq_{2})\notag \\
& \cdot\bigl(f_2(\bq_1(s))\otimes f_1(\bq_2(s)) \bigr)d\by_1d\by_2 = \iint e^{iS^{\sharp}_i(t,s;\bq_1,\bq_2)}\mathcal{F}(t,s;\bq_{1})  \notag\\
& \quad \otimes  \mathcal{F}(t,s;\bq_{2})\hat{P}_{12}(f_1\otimes f_2)d\by_1d\by_2 = \mathcal{C}^{\sharp}_{is}(\ts)\hat{P}_{12}(f_1\otimes f_2).
\end{align}
We have proved in (1) of Theorem 2.4 that $ \mathcal{C}^{\sharp}_{is}(\ts)$ is a bounded operator on $\dH$.  Hence from \eqref{4.9} we see
\begin{equation} \label{4.10}
 \hat{P}_{12}\,\mathcal{C}^{\sharp}_{is}(\ts)f = \mathcal{C}^{\sharp}_{is}(\ts)\hat{P}_{12}f
\end{equation}
for $f \in \dH$.
\par
Noting Remark 2.2, from \eqref{2.1} and \eqref{2.21}  we can write
\begin{equation*} 
 K^{\sharp}_{is\Delta}(t,0)f = \lim_{\epsilon\to 0+}
 \mathcal{C}^{\sharp}_{is}(t,\tau_{\nu-1})\chi(\epsilon\cdot)\cdots \chi(\epsilon\cdot)\mathcal{C}^{\sharp}_{is}(\tau_1,0)f
\end{equation*}
for $f \in C^{\infty}_0(\bR^{6})^l$.  Since we have
\begin{align*}
& \mathcal{C}^{\sharp}_{is}(t,\tau_{\nu-1})\chi(\epsilon\cdot)\mathcal{C}^{\sharp}_{is}(\tau_{\nu-1},\tau_{\nu-2})\chi(\epsilon\cdot)\cdots\chi(\epsilon\cdot)\mathcal{C}^{\sharp}_{is}(\tau_{1},0)f - \mathcal{C}^{\sharp}_{is}(t,\tau_{\nu-1})\\
&\cdot \mathcal{C}^{\sharp}_{is}(\tau_{\nu-1},\tau_{\nu-2}) \cdot\cdots\mathcal{C}^{\sharp}_{is}(\tau_{1},0)f = \sum_{j=1}^{\nu-1}\mathcal{C}^{\sharp}_{is}(t,\tau_{\nu-1})\chi(\epsilon\cdot)\mathcal{C}^{\sharp}_{is}(\tau_{\nu-1},\tau_{\nu-2})\chi(\epsilon\cdot)\cdots\\
& \cdot \chi(\epsilon\cdot)\mathcal{C}^{\sharp}_{is}(\tau_{j+1},\tau_{j})\bigl\{\chi(\epsilon\cdot) - 1\bigr\}\mathcal{C}^{\sharp}_{is}(\tau_{j},\tau_{j-1})\mathcal{C}^{\sharp}_{is}(\tau_{j-1},\tau_{j-2})\cdots \mathcal{C}^{\sharp}_{is}(\tau_{1},0)f,
\end{align*}
by (1) of Theorem 2.4 we obtain
\begin{equation} \label{4.11}
 K^{\sharp}_{is\Delta}(t,0)f = 
 \mathcal{C}^{\sharp}_{is}(t,\tau_{\nu-1}) \mathcal{C}^{\sharp}_{is}(\tau_{\nu-1},\tau_{\nu-2})\cdots \mathcal{C}^{\sharp}_{is}(\tau_1,0)f
\end{equation}
for $f \in C^{\infty}_0(\bR^{6})^l$ and so for $f \in \dH$.
Therefore, by \eqref{4.10} we have
\begin{equation} \label{4.12}
 \hat{P}_{12}K^{\sharp}_{is\Delta}(t,0)f = K^{\sharp}_{is\Delta}(t,0)\hat{P}_{12}f.
\end{equation}
 Since $\hat{P}_{12}f = f$ holds from the assumption, we obtain
\begin{equation*} 
 \hat{P}_{12}K^{\sharp}_{is\Delta}(\ts)f = K^{\sharp}_{is\Delta}(\ts)f.
\end{equation*}
This shows that $K^{\sharp}_{is\Delta}(t,0)f$ is symmetric, which completes the proof of the second assertion.
   In the same way the third assertion is proved from \eqref{4.12}.
  \end{proof}
  \begin{rem}
  We have supposed Assumptions 2.1 and 2.2 for $(V(t,\bx),\bA(t,x))$ in Theorem 4.1.  In place of these assumptions we suppose the assumptions stated in Remark 2.3 for $(V,\bA)$.  Then we can prove the same assertions as in Theorem 4.1 as in the proof of Theorem 4.1.
  \end{rem}
\section{Stability of $\mathcal{C}(t,s)$}
Let $\mathcal{L}(t,x,\dot{x})$ and $S(t,s;q)$ be the Lagrangian function and the classical action defined by \eqref{1.2} and\eqref{1.4}, respectively.  Let $\qts$ be the path defined by \eqref{2.20} and write 
\begin{equation} \label{5.1}
\gamma^{t,s}_{x,y}: \gamma^{t,s}_{x,y}(\theta) = (\theta,\qts(\theta)) \in \bR^{d+1}.
\end{equation}
Then we have
\begin{align}  \label{5.2}
   S(t,s;q^{t,s}_{x,y})
       & = \frac{m|x - y|^2}{2(t - s)} + \int_{\gamma^{t,s}_{x,y}} \bigl(A\cdot dx - Vdt\bigr)
        \notag \\
     & = \frac{m|x - y|^2}{2(t - s)} + (x - y)\cdot\int^1_0 A(s+ \theta\rho,
y+ \theta (x - y))d\theta \notag \\
   & \qquad -  \int^t_s  V(\theta, y+ \frac{\theta - s}{t-s} (x - y))d\theta \notag \\
   & = \frac{m|x - y|^2}{2(t - s)} + (x - y)\cdot\int^1_0 A(t - \theta\rho,
x - \theta (x - y))d\theta \notag \\
   & \qquad - \rho \int^1_0  V(t - \theta\rho,x - \theta (x - y))d\theta, \ \rho = t - s.
\end{align}
Let $M \geq 0$ and suppose that $p(x,w) \in \Cspace(\bR^{2d})$ satisfies 
\begin{equation} \label{5.3}
|\partial_{w}^{\alpha}\partial_{x}^{\beta}p(x,w)| \leq C_{\alpha\beta}<x;w>^M, \ (x,w) \in \bR^{2d}
\end{equation}
for all $\alpha$ and $\beta$, where $<x;w> = \sqrt{1 + |x|^2 + |w|^2}$.  We write the semi-norms of $\Sspace = \Sspace(\bR^d)$ as $|f|_l = \sum_{|\alpha + \beta| \leq l}\sup \bigl\{|x^{\alpha}\partial_{x}^{\beta}f(x)|; x \in \bR^d \bigr\}\ (l = 0,1,2,\dots)$.  For $f \in \Sspace$ we define
\begin{equation}  \label{5.4}
P(t,s)f =
        \begin{cases}
            \begin{split}
              & \sqrt{m/(2\pi i\rho)}^{\ d}
                  \int \bigl(\exp iS(t,s; q^{t,s}_{x,y})\bigr) \\
         &\hspace{2cm} \times    p(x,(x-y)/\sqrt{\rho})f(y)dy,
                      \end{split}
         & s < t ,
                  \\
\begin{split}
        & \sqrt{m/(2\pi i)}^{\ d}
    \text{Os}-\int (\exp im|w|^2/2)\\
             &\hspace{2cm}\times p(x,w)dwf(x).
             \end{split}
            & s = t
        \end{cases}
\end{equation}
 Then the formal adjoint operator $P(t,s)^{\dag}$ of $P(t,s)$ on $\Sspace$ is given by
\begin{equation}  \label{5.5}
P(t,s)^{\dag}f =
        \begin{cases}
            \begin{split}
              & \sqrt{i m/(2\pi \rho)}^{\ d}
                  \int \bigl(\exp -iS(t,s; q^{t,s}_{y,x})\bigr) \\
         &\hspace{2cm} \times    p(y,(y-x)/\sqrt{\rho})^*f(y)dy,
                      \end{split}
         & s < t ,
                  \\
\begin{split}
        & \sqrt{im/(2\pi )}^{\ d}
    \text{Os}-\int (\exp -im|w|^2/2)\\
             &\hspace{2cm}\times p(x,w)^{*}dwf(x),
             \end{split}
            & s = t.
        \end{cases}
\end{equation}
\par
We have the following from Lemma 2.1 of \cite{Ichinose 1999}.
\begin{lem}
We define $P(t,s)f$ by \eqref{5.4} for $f \in \Sspace$.  Assume \eqref{2.3} and \eqref{2.9}.  Then, $\partial_{x}^{\alpha}(P(t,s)f)$ are continuous in $0 \leq s \leq t \leq T$ and $x \in \bR^{d}$ for all $\alpha$.
\end{lem}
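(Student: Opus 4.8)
The plan is to treat the regime $s<t$ and the boundary $s=t$ separately, the former being elementary and the latter the real content. For $s<t$ we have $\rho=t-s>0$, and since the phase $S(t,s;\qts)$ is real-valued, $|\exp iS(t,s;\qts)|=1$; together with the polynomial bound $|p(x,(x-y)/\sqrho)|\le C<x;(x-y)/\sqrho>^{M}$ from \eqref{5.3} and the rapid decay of $f\in\Sspace$, the integral in \eqref{5.4} converges absolutely. First I would differentiate under the integral sign in $x$: by \eqref{5.2} the derivative $\partial_x S(t,s;\qts)$ is a sum of the term $m(x-y)/\rho$ and integrals involving $A,\partial_xA,V,\partial_xV$, whose first-order derivatives are controlled by \eqref{2.9} and \eqref{2.3}; hence each $\partial_x^{\alpha}$ acting on $\exp iS$ and on $p(x,(x-y)/\sqrho)$ produces integrands still dominated by a polynomial in $y$ times the Schwartz factor $f(y)$. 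Dominated convergence then gives that $\partial_x^{\alpha}(P(t,s)f)$ exists and is jointly continuous in $(t,s,x)$ throughout the open region $0\le s<t\le T$.

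The delicate point is continuity up to the diagonal $s=t$, where the prefactor $\sqrt{m/(2\pi i\rho)}^{\,d}$ diverges. To handle it I would change variables $y=x-\sqrho\,w$ in \eqref{5.4}, so that $dy=\rho^{d/2}dw$ exactly cancels the singular prefactor and leaves the $\rho$-independent constant $\sqrt{m/(2\pi i)}^{\,d}$. Using the last expression in \eqref{5.2} with $x-y=\sqrho\,w$, the first term becomes the $\rho$-independent nondegenerate quadratic $m|w|^2/2$, while the remaining terms collect into a phase correction $\Phi(t,s,x,w)=\sqrho\,w\cdot\int_0^1 A(t-\theta\rho,x-\theta\sqrho w)\,d\theta-\rho\int_0^1 V(t-\theta\rho,x-\theta\sqrho w)\,d\theta=O(\sqrho)$. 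This rewrites $P(t,s)f$ as $\sqrt{m/(2\pi i)}^{\,d}\,\text{Os}-\int e^{im|w|^2/2}\,b(t,s,x,w)\,dw$ with amplitude $b=e^{i\Phi}p(x,w)f(x-\sqrho w)$; for $\rho>0$ the factor $f(x-\sqrho w)$ is Schwartz in $w$, so this oscillatory integral agrees with the absolutely convergent integral treated above, and at $\rho=0$ it reduces to the $s=t$ line of \eqref{5.4} since $\Phi\to0$ and $f(x-\sqrho w)\to f(x)$.

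The remaining task is to verify that $b$ lies in a fixed symbol class uniformly up to $\rho=0$ and depends continuously on the parameters. I would show that each $\partial_x^{\beta}\partial_w^{\alpha}b$ is bounded by $C_{\alpha\beta}<w>^{m_0}$ with constants uniform for $0\le s\le t\le T$ and locally uniform in $x$: the derivatives of $e^{i\Phi}$ produce powers of $\partial\Phi$, and every such derivative carries a factor $\sqrho$ or $\rho$ multiplying values and derivatives of $A$ and $V$, all of polynomial growth by the standing hypotheses (the first-order derivatives through \eqref{2.9} and \eqref{2.3}) and hence controlled by a fixed power of $<x-\theta\sqrho w>\le C<x><w>$, while $\partial_w$ falling on $f(x-\sqrho w)$ produces bounded powers of $\sqrho$ times Schwartz functions. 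Granting these uniform estimates together with the joint continuity of $b$ in $(t,s,x)$, the standard continuity-and-differentiation theorem for oscillatory integrals depending on parameters (cf.\ p.\,45 of \cite{Kumano-go}) lets me pass $\partial_x^{\alpha}$ under the $\text{Os}$-sign and conclude that $\partial_x^{\alpha}(P(t,s)f)$ is continuous on all of $\{0\le s\le t\le T\}\times\bR^d$, matching the value at $s=t$. The main obstacle is precisely this uniform symbol estimate across $\rho\to0^{+}$: one must track the non-smooth $\sqrho$-dependence and confirm that, because the leading quadratic phase $m|w|^2/2$ is nondegenerate and $\rho$-independent while $\Phi$ and all its derivatives vanish as $\rho\to0$, no loss of decay or blow-up occurs in the limit.
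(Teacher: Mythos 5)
The paper offers no proof of Lemma 5.1 beyond citing Lemma 2.1 of \cite{Ichinose 1999}, and your argument is correct and follows essentially the same route as that source: your substitution $y=x-\sqrt{\rho}\,w$ is exactly the representation \eqref{6.20} that the paper itself invokes later (in the proof of Theorem 6.5), reducing $P(t,s)f$ to an oscillatory integral with the $\rho$-independent nondegenerate phase $m|w|^2/2$ and an amplitude continuous in $(t,s,x)$ up to $\rho=0$. One bookkeeping caution: the crude bound $\langle x-\theta\sqrt{\rho}\,w\rangle\leq C\langle x\rangle\langle w\rangle$ alone makes each $w$-derivative of $e^{i\Phi}$ cost a factor $\langle w\rangle^{M_*+1}$, which integration by parts against the quadratic phase cannot absorb when $M_*>0$, so you must (as you indicate, but should carry out explicitly) use the rapid decay of $f(x-\sqrt{\rho}\,w)$ together with $\langle\sqrt{\rho}\,w\rangle\leq\langle x\rangle\langle x-\sqrt{\rho}\,w\rangle$ to keep the amplitude in a fixed symbol class uniformly as $\rho\to 0+$.
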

  Taking $1$ as $p(x,w)$ in \eqref{5.4}, for $f \in \Sspace$ we define
\begin{equation}  \label{5.6}
\mathcal{C}(t,s)f =
        \begin{cases}
            \begin{split}
              & \sqrt{m/(2\pi i\rho)}^{\ d}
                  \int \bigl(\exp iS(t,s; q^{t,s}_{x,y})\bigr)f(y)dy,  \\
                      \end{split}
         & s < t ,
                  \\
\begin{split}
        & f, 
             \end{split}
            & s = t.
        \end{cases}
\end{equation}
Using Lemma 5.1, we can write $\kdelta(t,0)f$ defined by \eqref{2.1} as 
\begin{equation}  \label{5.7}
     \kdelta(t,0)f = \limepsilon{\cal C}(t,\tau_{\nu-1})\chi(\epsilon\cdot){\cal C}(\tau_{\nu-1},\tau_{\nu-2})\chi(\epsilon\cdot)\cdots \chi(\epsilon\cdot)
{\cal C}(\tau_1,0)f
\end{equation}
for $f \in \Sspace$ under the assumptions of Lemma 5.1.
\begin{lem}
We assume that $\partial_{x}^{\alpha}V(t,x), \partial_{x}^{\alpha}A_{j}(t,x)$ and $\partial_{x}^{\alpha}\partial_{t}A_{j}(t,x)$ are continuous in $\domain$ for $|\alpha| \leq 1$.  Let $p(x,w)$ be a function satisfying \eqref{5.3}.  Then for any $0 < \epsilon \leq 1$ and $0 \leq s < t \leq T$  we have
\begin{align}  \label{5.8}
    & P(t,s)^{\dag}\chi(\epsilon\cdot)^2P(t,s)f = \left(\frac{m}{2\pi(t-s)}\right)^d\int f(y)dy \int\chi(\epsilon z)^2
    \notag \\
    &\times  \left(\exp i(x - y)\cdot \frac{m\Phi}{t-s} \right)p\left(z,\frac{z - x}{\sqrt{t-s}}\right)^*
     p\left(z,\frac{z - y}{\sqrt{t-s}}\right)dz,
\end{align}
\begin{equation}  \label{5.9}
  \Phi = \Phi(t,s;x,y,z) = (\Phi_1,\dots,\Phi_d),
\end{equation}
\begin{align}  \label{5.10}
    & \Phi_j =  z_j - \frac{x_j + y_j}{2} + \frac{t - s }{m}
              \int_0^1A_j(s,x + \theta(y - x))d\theta           \notag \\
      & -\frac{(t - s)^2 }{m}\int_0^1\int_0^1
                \sigma_1E_j(\tau(\sigma),\zeta(\sigma))
                       d\sigma_1d\sigma_2 \notag \\
                        & - \frac{t - s }{m}\sum_{k=1}^d (z_k - 
x_k)\int_0^1\int_0^1\sigma_1B_{jk}
                     (\tau(\sigma),\zeta(\sigma))d\sigma_1d\sigma_2  
\end{align}
or
\begin{align}  \label{5.11}
    & \Phi_j =  z_j - \frac{x_j + y_j}{2} + \frac{t - s }{m}
              \int_0^1A_j(s,x + \theta(y - x))d\theta           \notag \\
      & -\frac{(t - s)^2 }{m}\int_0^1\int_0^1
                \sigma_1E_j(\tau(\sigma),\zeta(\sigma))
                       d\sigma_1d\sigma_2  - \frac{(t - s)^2 }{m}\int_0^1d\theta \sum_{k=1}^d (z_k - 
x_k) \notag \\
                        &\times \int_0^1\int_0^1\sigma_1(1-\sigma_1)\frac{\partial B_{jk}}{\partial t}
                     (s+\theta(1-\sigma_1)\rho,\zeta(\sigma))d\sigma_1d\sigma_2, 
\end{align}
where
\begin{equation}   \label{5.12}
    \bigl(\tau(\sigma),\zeta(\sigma)\bigr)         
       = \bigl(t - \sigma_1(t - s), z + \sigma_1(x - z)
                  + \sigma_1\sigma_2(y - x)\bigr)\ \in \bR^{d+1}.
\end{equation}
\end{lem}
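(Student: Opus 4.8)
The plan is to compute the composition $P(t,s)^{\dag}\chi(\epsilon\cdot)^{2}P(t,s)$ straight from \eqref{5.4}--\eqref{5.5} and to reduce the whole claim to an explicit evaluation of a difference of classical actions. First I would insert \eqref{5.4} for $P(t,s)f$ (with output variable renamed $z$), multiply by $\chi(\epsilon z)^{2}$, and then apply \eqref{5.5}; by Lemma 5.1 together with the oscillatory-integral calculus --- and since the cutoff $\chi(\epsilon z)^{2}$ has compact support in $z$ --- the iterated integral is legitimate and Fubini applies. The two amplitudes multiply to $p(z,(z-x)/\sqrt{t-s})^{*}\,p(z,(z-y)/\sqrt{t-s})$, and the two constants combine as $\sqrt{im/(2\pi\rho)}^{\,d}\sqrt{m/(2\pi i\rho)}^{\,d}=\bigl(m/(2\pi\rho)\bigr)^{d}$, which is the prefactor in \eqref{5.8}. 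Hence \eqref{5.8} reduces to the identity
\[
S(t,s;q^{t,s}_{z,y})-S(t,s;q^{t,s}_{z,x})=(x-y)\cdot\frac{m\Phi}{t-s}
\]
for the combined phase, with $\Phi$ as in \eqref{5.10}.

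The heart of the matter is therefore the action difference $D:=S(t,s;q^{t,s}_{z,y})-S(t,s;q^{t,s}_{z,x})$. Both straight lines end at $z$ at time $t$ and begin at $y$, resp.\ $x$, at time $s$, so I would interpolate the initial point by $\xi(\sigma_{2})=x+\sigma_{2}(y-x)$ and write $D=\int_{0}^{1}\frac{d}{d\sigma_{2}}S(t,s;q^{t,s}_{z,\xi(\sigma_{2})})\,d\sigma_{2}$. Differentiating under the integral, with $q_{\xi}(\theta)=\frac{t-\theta}{t-s}\xi+\frac{\theta-s}{t-s}z$ and $\partial_{\sigma_{2}}q_{\xi}=\frac{t-\theta}{t-s}(y-x)$, and then integrating by parts in $\theta$ against the weight $(t-\theta)/(t-s)$, I would turn the integrand into the Euler--Lagrange expression plus a boundary term at $\theta=s$. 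Because $q_{\xi}$ is a straight line ($\ddot q_{\xi}=0$), the Euler--Lagrange expression collapses, using the definitions \eqref{1.1}, to $E_{j}(\theta,q_{\xi})+\sum_{k}\dot q_{\xi,k}B_{jk}(\theta,q_{\xi})$, while the boundary term gives $-\bigl[m(z-\xi)/(t-s)+A(s,\xi)\bigr]\cdot(y-x)$.

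It then remains to integrate over $\sigma_{2}$ and to switch to $\sigma_{1}=(t-\theta)/(t-s)$, under which $\theta$ becomes $\tau(\sigma)=t-\sigma_{1}(t-s)$ and $q_{\xi}(\theta)$ becomes $\zeta(\sigma)=z+\sigma_{1}(x-z)+\sigma_{1}\sigma_{2}(y-x)$, exactly the point \eqref{5.12}. The boundary term yields the first two summands of \eqref{5.10}: the kinetic piece integrates to $\frac{m}{t-s}(x-y)\cdot\bigl(z-\tfrac{x+y}{2}\bigr)$, and the $A(s,\xi)$ piece to the line integral $(x-y)\cdot\int_{0}^{1}A(s,x+\theta(y-x))\,d\theta$. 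The $E$-part of the Euler--Lagrange integral produces the double integral of $\sigma_{1}E_{j}$, and the $B$-part produces that of $\sigma_{1}B_{jk}$ carrying the weight $z_{k}-x_{k}-\sigma_{2}(y_{k}-x_{k})$; here the antisymmetry $B_{jk}=-B_{kj}$ annihilates the $\sigma_{2}(y_{k}-x_{k})$ part (a symmetric tensor paired with an antisymmetric one), leaving precisely the magnetic term of \eqref{5.10}. Matching each summand against $(x-y)\cdot m\Phi/(t-s)$ then gives \eqref{5.10}; the variant \eqref{5.11} is obtained analogously by representing the magnetic field through $\partial_{t}B_{jk}$ --- one further use of the fundamental theorem of calculus in time over $[s,\tau(\sigma)]$ --- the form adapted to the hypotheses \eqref{2.12}--\eqref{2.13} rather than \eqref{2.10}.

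The main obstacle will be the bookkeeping of this third step: correctly assembling $\partial_{x}V$ with $\partial_{t}A$ into $E$ and the antisymmetrized $\partial_{x}A$ into $B$ while tracking all the $(t-\theta)/(t-s)$ weights, and then recognizing the resulting iterated integrals as the precise $(\sigma_{1},\sigma_{2})$-parametrization in \eqref{5.10}. A second, more routine point is to justify throughout --- as oscillatory integrals under the weak regularity assumed and the symbol bound \eqref{5.3} --- the composition of $P(t,s)^{\dag}$ with $P(t,s)$, the differentiation under the integral sign, and the use of Fubini; this is exactly where Lemma 5.1 and the compact support of $\chi(\epsilon\cdot)$ enter.
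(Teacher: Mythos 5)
Your reduction of \eqref{5.8} to the single identity $S(t,s;q^{t,s}_{z,y})-S(t,s;q^{t,s}_{z,x})=(x-y)\cdot m\Phi/(t-s)$, and your derivation of the form \eqref{5.10} of $\Phi$ by interpolating the initial point and running a first-variation/Euler--Lagrange computation along the straight line, are correct; the signs, the boundary term $-\bigl[m(z-\xi)/(t-s)+A(s,\xi)\bigr]\cdot(y-x)$, and the antisymmetry cancellation of the $\sigma_{2}(y_{k}-x_{k})$ weight all check out. This is a legitimately different route from the paper, which for \eqref{5.8}--\eqref{5.10} simply cites Proposition 3.3 of [Ichinose 1997] and Lemma 2.2 of [Ichinose 1999]; those references obtain the same parametrized integrals by applying Stokes' theorem to the $1$-form $A\cdot dx-V\,dt$ over the space-time triangle with vertices $(s,x)$, $(s,y)$, $(t,z)$, whereas your variational argument gets there without invoking Stokes. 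Either way the content is the same, and your version is self-contained.

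The gap is in the passage to \eqref{5.11}, which is the only part the paper actually proves in the text. Writing $B_{jk}(\tau(\sigma),\zeta(\sigma))=B_{jk}(s,\zeta(\sigma))+(1-\sigma_{1})\rho\int_{0}^{1}\partial_{t}B_{jk}(s+\theta(1-\sigma_{1})\rho,\zeta(\sigma))\,d\theta$ converts the fifth term of \eqref{5.10} into the fifth term of \eqref{5.11} \emph{plus} the leftover frozen-time term $-\frac{t-s}{m}\sum_{k}(z_{k}-x_{k})\int_{0}^{1}\int_{0}^{1}\sigma_{1}B_{jk}(s,\zeta(\sigma))\,d\sigma_{1}d\sigma_{2}$. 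For \eqref{5.11} to be an admissible $\Phi$ in \eqref{5.8}, this leftover must contribute nothing after contraction with $(x-y)$; that assertion is precisely the paper's identity \eqref{5.13}, and establishing it (via a degenerating space-time Stokes argument and a reference to Lemma 3.2 of [Ichinose 1997]) is the entire displayed proof of the lemma. Your proposal says only ``one further use of the fundamental theorem of calculus,'' which produces the leftover term and then silently discards it. This cannot be waved away as bookkeeping: the quantity $\sum_{j,k}(x_{j}-y_{j})(z_{k}-x_{k})\int_{0}^{1}\int_{0}^{1}\sigma_{1}B_{jk}(s,\zeta(\sigma))\,d\sigma_{1}d\sigma_{2}$ is (up to sign) the flux of the spatial $2$-form $\sum_{j<k}B_{jk}(s,\cdot)\,dx_{j}\wedge dx_{k}$ through the triangle with vertices $x$, $y$, $z$, and for a constant magnetic field it equals $\tfrac{1}{2}\sum_{j,k}(x_{j}-y_{j})(z_{k}-x_{k})B_{jk}$, which is not zero. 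So whatever mechanism justifies dropping it, you must state and prove it explicitly (i.e., reproduce or replace the argument for \eqref{5.13}); as written, your derivation establishes \eqref{5.10} but not \eqref{5.11}.
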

\begin{proof}
We have proved \eqref{5.8}, \eqref{5.9} and \eqref{5.10} in Proposition 3.3 of \cite{Ichinose 1997} and Lemma 2.2 of \cite{Ichinose 1999}.  So we will prove \eqref{5.8}, \eqref{5.9} and \eqref{5.11}, though these have been proved in Lemma 3.1 of \cite{Ichinose 2017} in essentials.  Let $\mathbf{\Delta}$ be the 2-dimensional plane with oriented boundary consisting of $-\{(s,y+ \theta(x - y)); 0 \leq \theta \leq 1\},  \{(s+ \theta\rho,y+ \theta(z - y)); 0 \leq \theta \leq 1\}$ and $-\{(s+\theta\rho,x+ \theta(z - x)); 0 \leq \theta \leq 1\}$.  Then we have
\begin{equation*}   
   \lim_{t\to s+0} \iint_{\mathbf{\Delta}}d(A\cdot x - V dt) = 0.
\end{equation*}
Hence from the proof of Lemma 3.2 in \cite{Ichinose 1997} we have
\begin{equation}   \label{5.13}
   \sum_{j=1}^d(x_j - y_j)\sum_{k=1}^d(z_k - x_k)\int_0^1\int_0^1\sigma_1B_{jk}(s,\zeta(\sigma))d\sigma_1d\sigma_2 = 0
\end{equation}
for all $x, y$ and $z$ in $\bR^d$.  Multiplying \eqref{5.13} by $(t-s)/m$ and adding this to $(x - y)\cdot \Phi$ where $\Phi$ is defined by \eqref{5.10}, we have \eqref{5.8}, \eqref{5.9} and \eqref{5.11}.
\end{proof}
\begin{lem}
Assume \eqref{2.7} and let $C_1 \geq 0$ be the constant in \eqref{2.7}.  Then, there exists a constant $C'_* > 0$ such that for all $X = (x,y,z) \in \bR^{3d}$ we have
\begin{align} \label{5.14}
& - \int_0^1\int_0^1 \sigma_1(1-\sigma_1)\biggl\{\frac{1}{2m}\left(\frac{\partial E}{\partial x} + \frac{{}^t\partial E}{\partial x}\right)(\tau(\sigma),\zeta(\sigma)) - C_1\biggr\}d\sigma_1d\sigma_2  \notag \\
&  \geq C'_*|X|^{2M_*}.
\end{align}
\end{lem}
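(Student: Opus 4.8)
The plan is to reduce the matrix inequality \eqref{5.14} to a purely scalar estimate and then settle that estimate by a homogeneity-plus-compactness argument. First I would record the pointwise consequence of \eqref{2.7}: evaluating it at the point $(\tau(\sigma),\zeta(\sigma))$ from \eqref{5.12} gives, as an inequality between symmetric matrices,
\begin{equation*}
-\left\{\frac{1}{2m}\left(\frac{\partial E}{\partial x} + \frac{{}^t\partial E}{\partial x}\right)(\tau(\sigma),\zeta(\sigma)) - C_1\right\} \geq C_*|\zeta(\sigma)|^{2M_*}I.
\end{equation*}
Since $\sigma_1(1-\sigma_1) \geq 0$ on $[0,1]$, multiplying by this weight and integrating over $[0,1]^2$ preserves the inequality between symmetric matrices, so the left-hand side of \eqref{5.14} dominates $C_* G(x,y,z)\,I$, where I set
\begin{equation*}
G(x,y,z) = \int_0^1\int_0^1 \sigma_1(1-\sigma_1)|\zeta(\sigma)|^{2M_*}\,d\sigma_1 d\sigma_2.
\end{equation*}
Thus it suffices to produce $C'' > 0$ with $G(x,y,z) \geq C''|X|^{2M_*}$ for all $X = (x,y,z)$, after which $C'_* = C_*C''$ works.

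For the scalar estimate, which I may assume has $M_* > 0$ (the case $M_* = 0$ being immediate, as then $G \equiv 1/6$ and $|X|^0 = 1$), I would first rewrite $\zeta(\sigma) = (1-\sigma_1)z + \sigma_1(1-\sigma_2)x + \sigma_1\sigma_2 y$, exhibiting it as a convex combination of $x,y,z$ that is linear, hence homogeneous of degree one, in $X$. Consequently both $G(X)$ and $|X|^{2M_*}$ are homogeneous of degree $2M_*$, so the claimed inequality is scale-invariant and it is enough to prove $G > 0$ on the unit sphere $|X| = 1$. The function $G$ is continuous, so by compactness it attains a minimum there, and the inequality follows with $C''$ equal to that minimum, \emph{provided} the minimum is strictly positive.

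The heart of the matter — and the one step I expect to be a genuine obstacle — is ruling out $G(X) = 0$ on the sphere. If $G(X) = 0$, then, the integrand being continuous and nonnegative, $\sigma_1(1-\sigma_1)|\zeta(\sigma)|^{2M_*}$ vanishes identically; for $\sigma_1 \in (0,1)$ this forces $\zeta(\sigma) = 0$, and by continuity $\zeta(\sigma) \equiv 0$ on $[0,1]^2$. Viewing $z + \sigma_1(x-z) + \sigma_1\sigma_2(y-x) = 0$ as a polynomial identity in $(\sigma_1,\sigma_2)$ and matching the constant, the $\sigma_1$, and the $\sigma_1\sigma_2$ coefficients gives successively $z = 0$, then $x = z$, then $y = x$, hence $X = 0$, contradicting $|X| = 1$. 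Therefore the minimum is strictly positive, the scalar estimate holds, and combining it with the reduction of the first paragraph yields \eqref{5.14}.
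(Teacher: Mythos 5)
Your proposal is correct and follows essentially the same route as the paper: apply \eqref{2.7} pointwise at $(\tau(\sigma),\zeta(\sigma))$, reduce to the scalar integral of $\sigma_1(1-\sigma_1)|\zeta(\sigma)|^{2M_*}$, use homogeneity and compactness on the unit sphere, and rule out vanishing by matching the coefficients of the polynomial identity $\zeta(\sigma)\equiv 0$. The only cosmetic difference is that the paper normalizes in the variables $X'=(z,x-z,y-x)$ and finishes with $|X'|\geq C|X|$, whereas you normalize directly in $X=(x,y,z)$.
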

\begin{proof}
For a while we write 
\begin{equation*}   
  Q(t,x) = -\frac{1}{2m}\left(\frac{\partial E}{\partial x}(t,x) + \frac{{}^t\partial E}{\partial x}(t,x) \right) + C_1
\end{equation*}
and $X' = (z,x - z,y-x) = |X'|(\omega'_1,\omega'_2,\omega'_3) \in \bR^{3d}$.  From \eqref{2.7} we have
\begin{align} \label{5.15}
& |X'|^{-2M_*} \int_0^1\int_0^1 \sigma_1(1-\sigma_1)Q(\tau(\sigma),z+\sigma_1(x - z)+\sigma_1\sigma_2(y-x))d\sigma_1d\sigma_2 \notag \\
& \geq C_* \int_0^1\int_0^1 \sigma_1(1-\sigma_1)|\omega'_1 + \sigma_1\omega'_2 + \sigma_1\sigma_2\omega'_3|^{2M_*}d\sigma_1d\sigma_2
\end{align}
for $X' \not= 0$. If the right-hand side of \eqref{5.15} is equal to zero for a point $(\omega'_1,\omega'_2,\omega'_3)$ such that $|\omega'_1|^2 + |\omega'_2|^2 + |\omega'_3|^2 = 1$, we have
\begin{equation*}   
  \omega'_1 + \sigma_1\omega'_2 + \sigma_1\sigma_2\omega'_3 = 0
  \end{equation*}
for all $0 < \sigma_1 < 1$ and $0 \leq \sigma_2 \leq 1$, which means $\omega'_1 = \omega'_2 = \omega'_3 = 0$.  This is contradiction.  Hence there exists a constant $C'_* > 0$ such that
\begin{equation*} 
 \int_0^1\int_0^1 \sigma_1(1-\sigma_1)Q(\tau(\sigma),z+\sigma_1(x - z)+\sigma_1\sigma_2(y-x))d\sigma_1d\sigma_2  \geq C'_* |X'|^{2M_*},
\end{equation*}
which shows \eqref{5.14} with another constant $C'_* > 0$ because of $|X'| \geq C|X|$ with a constant $C > 0$.
\end{proof}
For a while we write the constant $C_1$ in \eqref{2.7} and \eqref{5.14} as $a$.  Let us write 
\begin{align} \label{5.16}
& E'_0(t,s;x,y,z) = - \int_0^1\int_0^1 \sigma_1(1-\sigma_1) \notag \\
&\qquad \times\biggl\{\frac{1}{2m}\left(\frac{\partial E}{\partial x}(\tau(\sigma),\zeta(\sigma)) + \frac{{}^t\partial E}{\partial x}(\tau(\sigma),\zeta(\sigma))\right) - a\biggr\}d\sigma_1d\sigma_2,
\end{align}
\begin{align} \label{5.17}
& E'_1(t,s;x,y,z)  = - \int_0^1\int_0^1 \sigma_1(1-\sigma_1)\notag \\
&\qquad \times\frac{1}{2m}\left(\frac{\partial E}{\partial x}(\tau(\sigma),\zeta(\sigma))  - \frac{{}^t\partial E}{\partial x}(\tau(\sigma),\zeta(\sigma)) \right)d\sigma_1d\sigma_2.
\end{align}
We can now easily prove the following auxiliary result.
\begin{lem}
(1)  Let us define $\Phi(t,s; x,y,z)$ by \eqref{5.10}.  Then we have
\begin{align} \label{5.18}
\frac{\partial \Phi}{\partial z}(t,s;x,y,z) & = I + \rho^2  E'_0(t,s;x,y,z) - \frac{\rho^2a}{6} + \rho^2  E'_1(t,s;x,y,z)\notag \\
 & + B'(t,s;x,y,z),
\end{align}
\begin{equation} \label{5.19}
B'(t,s;x,y,z) = -\rho\frac{\partial}{\partial z}\frac{1}{m}\sum_{k=1}^d(z_k - x_k)\int_0^1\int_0^1\sigma_1B_{jk}(\tau(\sigma),\zeta(\sigma))d\sigma_1d\sigma_2.
\end{equation}
(2)  Let us define $\Phi(t,s; x,y,z)$ by \eqref{5.11}.  Then we have \eqref{5.18} where $B'(t,s;x,y,z)$ is given by
\begin{align} \label{5.20}
& B'(t,s;x,y,z)  = -\rho^2\frac{\partial}{\partial z}\frac{1}{m}\int_0^1d\theta \sum_{k=1}^d(z_k - x_k) \notag \\
& \times \int_0^1\int_0^1\sigma_1(1- \sigma_1)\frac{\partial B_{jk}}{\partial t}(s+\theta(1-\sigma_1)\rho,\zeta(\sigma))d\sigma_1d\sigma_2.
\end{align}
\end{lem}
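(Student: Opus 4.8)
The plan is to prove both parts by direct term-by-term differentiation of $\Phi_j$ with respect to $z_i$, reading off the five summands in the definition \eqref{5.10} (respectively \eqref{5.11}). The only $z$-dependence enters through $\zeta(\sigma) = z + \sigma_1(x-z) + \sigma_1\sigma_2(y-x)$, for which $\partial\zeta_l/\partial z_i = (1-\sigma_1)\delta_{li}$, whereas $\tau(\sigma) = t - \sigma_1\rho$ is free of $z$ and so contributes nothing through its argument.

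First I would dispatch the three easy summands. The term $z_j$ gives $\delta_{ij}$, hence the identity matrix $I$. The term $-(x_j+y_j)/2$ is $z$-independent and contributes $0$. The term $(\rho/m)\int_0^1 A_j(s, x+\theta(y-x))\,d\theta$ has argument $x+\theta(y-x)$, again free of $z$, so it too contributes $0$.

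Next, the $E$-summand. Differentiating $-(\rho^2/m)\int_0^1\int_0^1 \sigma_1 E_j(\tau,\zeta)\,d\sigma_1 d\sigma_2$ by the chain rule and using $\partial\zeta_l/\partial z_i = (1-\sigma_1)\delta_{li}$ produces the weight $\sigma_1(1-\sigma_1)$ together with the matrix $\partial E/\partial x$ evaluated at $(\tau(\sigma),\zeta(\sigma))$. Writing $\partial E/\partial x = \tfrac{1}{2}(\partial E/\partial x + {}^{t}\partial E/\partial x) + \tfrac{1}{2}(\partial E/\partial x - {}^{t}\partial E/\partial x)$ splits this into symmetric and antisymmetric parts, which match the definitions \eqref{5.16} of $E'_0$ and \eqref{5.17} of $E'_1$ exactly. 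The constant $a$ (understood as $aI$) appearing in $E'_0$ is accounted for by the elementary evaluation $\int_0^1\int_0^1 \sigma_1(1-\sigma_1)\,d\sigma_1 d\sigma_2 = 1/6$, which is precisely the $-\rho^2 a/6$ term in \eqref{5.18}. Tracking the signs, the $E$-summand thus contributes $\rho^2 E'_0 - \rho^2 a/6 + \rho^2 E'_1$.

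Finally, the last summand of \eqref{5.10} is, by definition, exactly $B'$ as in \eqref{5.19}, so summing the four contributions yields \eqref{5.18} and proves part (1). For part (2), the first four summands of \eqref{5.11} coincide with those of \eqref{5.10}, hence their $z$-derivatives are unchanged; only the final summand differs, and its $z$-derivative is by definition the $B'$ of \eqref{5.20}. This gives \eqref{5.18} with the new $B'$. There is no genuine obstacle here, consistent with the statement that the result follows \emph{easily}; the one point demanding care is the symmetric/antisymmetric decomposition and the bookkeeping of the $1/6$ constant, everything else being an immediate differentiation.
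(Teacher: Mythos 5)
Your proposal is correct and is exactly the routine term-by-term differentiation the paper has in mind (the paper omits the proof entirely, introducing the lemma with ``We can now easily prove the following auxiliary result''); the key bookkeeping points --- $\partial\zeta_l/\partial z_i=(1-\sigma_1)\delta_{li}$ producing the weight $\sigma_1(1-\sigma_1)$, the symmetric/antisymmetric split of $\partial E/\partial x$ into $E_0'$ and $E_1'$, and $\int_0^1\int_0^1\sigma_1(1-\sigma_1)\,d\sigma_1d\sigma_2=1/6$ generating the $-\rho^2a/6$ term --- all check out. Nothing further is needed.
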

\begin{lem}
(1)  Let us write   the fifth term on the right-hand side of \eqref{5.10} as $\rho\widetilde{B}(t,s;x,y,z)$.  Assume \eqref{2.10}.  Then we have 
\begin{equation*} 
|\partial_x^{\alpha}\partial_y^{\beta}\partial_z^{\gamma}\widetilde{B}(t,s;x,y,z)| \leq C_{\alpha\beta\gamma}, \ |\alpha + \beta + \gamma| \geq 1.
\end{equation*}
(2)  Let us write  the fifth term on the right-hand side of \eqref{5.11} as $\rho\widetilde{B}(t,s;x,y,z)$.  Assume \eqref{2.12}.  Then we have 
\begin{equation*} 
|\partial_x^{\alpha}\partial_y^{\beta}\partial_z^{\gamma}\widetilde{B}(t,s;x,y,z)| \leq \rho\, C_{\alpha\beta\gamma}, \ |\alpha + \beta + \gamma| \geq 1.
\end{equation*}
\end{lem}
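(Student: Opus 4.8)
The plan is to differentiate the explicit expression for $\widetilde{B}$ read off from the fifth terms of \eqref{5.10} and \eqref{5.11}, and to show that the only dangerous contribution---the one that retains the linearly growing prefactor $z_k-x_k$---is tamed by the decay of the magnetic field in \eqref{2.10} (resp.\ \eqref{2.12}) combined with the vanishing of the $\sigma_1$-weight at the base point $z$. Two preliminary observations organize the argument. First, with $\zeta(\sigma)=(1-\sigma_1)z+\sigma_1(1-\sigma_2)x+\sigma_1\sigma_2 y$ as in \eqref{5.12}, the derivatives $\partial_z\zeta=(1-\sigma_1)I$, $\partial_x\zeta=\sigma_1(1-\sigma_2)I$, $\partial_y\zeta=\sigma_1\sigma_2 I$ are convex-combination coefficients of modulus $\le 1$, and every differentiation of the inner integral keeps the factor $\sigma_1$ coming from $G_{jk}$, which vanishes at $\sigma_1=0$. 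Second, although \eqref{2.10} only controls $\partial_x^{\alpha}B_{jk}$ for $|\alpha|\ge 1$, the field $B_{jk}$ is itself bounded on $\domain$: by the fundamental theorem of calculus $B_{jk}(t,x)=B_{jk}(t,0)+\int_0^1 x\cdot\nabla_x B_{jk}(t,\theta x)\,d\theta$, where $|x|<\theta x>^{-(1+\delta)}$ is integrable in $\theta$ uniformly in $x$ and $B_{jk}(t,0)$ is bounded by continuity on $[0,T]$; the same argument applied to \eqref{2.12} shows $\partial_t B_{jk}$ is bounded.

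Next I would apply the Leibniz rule to $\widetilde{B}_j=-\tfrac1m\sum_k(z_k-x_k)G_{jk}$, where $G_{jk}:=\int_0^1\!\int_0^1\sigma_1 B_{jk}(\tau(\sigma),\zeta(\sigma))\,d\sigma_1 d\sigma_2$. Since $z_k-x_k$ is affine, a derivative of total order $|\alpha+\beta+\gamma|\ge 1$ produces two kinds of terms: (B) those in which a derivative lands on the prefactor, leaving a bounded constant times $\partial^{\mu'}G_{jk}$ with $|\mu'|=|\alpha+\beta+\gamma|-1$; and (A) the single term that keeps $z_k-x_k$ intact and differentiates $G_{jk}$ to order $\mu=|\alpha+\beta+\gamma|\ge 1$. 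The terms (B) are harmless: if $|\mu'|=0$ this is the boundedness of $B_{jk}$ just established, and if $|\mu'|\ge 1$ the chain rule produces only genuine derivatives $\partial_x^{\ge 1}B_{jk}$, which decay by \eqref{2.10}, against bounded coefficients over the finite $\sigma$-domain. Thus everything reduces to the term (A).

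For (A) the key estimate is that $|z-x|$ times the weighted integral $\int_0^1\!\int_0^1(\text{weight})\,(\partial_x^{\ge 1}B_{jk})(\tau,\zeta)\,d\sigma$ is bounded uniformly in $(x,y,z)$. Here I would freeze $\sigma_2$ and set $\eta:=(x-z)+\sigma_2(y-x)$, so that $\zeta=z+\sigma_1\eta$ and the inner $\sigma_1$-integral is a line integral issuing from $z$ in the direction $\eta$, carrying a weight that vanishes at $\sigma_1=0$. Combining this vanishing with the integrable decay $<\zeta>^{-(1+\delta)}$ of \eqref{2.10} yields an inner bound that decays in $|\eta|$ strictly faster than $|\eta|^{-1}$; integrating in $\sigma_2$, where $\eta$ sweeps with velocity $y-x$, and using that $|z-x|=|\eta|$ at $\sigma_2=0$, the growing prefactor $|z-x|$ is exactly absorbed. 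I expect this weighted line-integral estimate to be the main obstacle: the point is precisely that the factor $\sigma_1$ from $G_{jk}$ suppresses the neighborhood of the base point $z$ and upgrades the decay enough to beat the linear growth, whereas the crude bound $<\zeta>^{-1}$ that ignores the weight only gives a logarithmically divergent estimate and is therefore insufficient.

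Finally, part (2) follows by the identical mechanism applied to the fifth term of \eqref{5.11}. That term already carries an explicit factor $\rho$ and involves $\partial_t B_{jk}$ in place of $B_{jk}$, together with the built-in weight $\sigma_1(1-\sigma_1)$; its $x$-derivatives decay by \eqref{2.12}, and $\partial_t B_{jk}$ is itself bounded as noted above. Hence the same bounded weighted estimate, now multiplied by the prefactor $\rho$, yields the bound $\rho\,C_{\alpha\beta\gamma}$, which is the assertion. This also records, via \eqref{5.19} and \eqref{5.20}, that $B'=\rho\,\partial_z\widetilde{B}$ has all derivatives of size $O(\rho)$, as needed for the invertibility of $\partial\Phi/\partial z$ in Lemma 5.5 for small $|\Delta|$.
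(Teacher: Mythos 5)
The paper does not actually prove this lemma in-house: its entire proof is the citation ``Both of (1) and (2) are proved by Lemma 3.5 in \cite{Ichinose 1997}'', so your proposal is in effect an attempt to re-prove that external lemma from scratch. The architecture of your reduction is sound and is surely what the cited lemma does: the remark that \eqref{2.10} controls only derivatives of order $\geq 1$ while $B_{jk}$ itself is bounded by the fundamental theorem of calculus, the Leibniz splitting into the harmless terms where a derivative annihilates the affine prefactor $z_k-x_k$ and the single dangerous term (A) where the prefactor survives, and the identification of the weight $\sigma_1$ as the mechanism that must beat the linear growth are all correct, as is the observation that part (2) is the same estimate for $\partial_t B_{jk}$ with the weight $\sigma_1(1-\sigma_1)$ and an explicit extra factor $\rho$.

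However, the pivotal estimate for term (A) --- which you yourself flag as ``the main obstacle'' --- is not proved, and the intermediate bound you propose for it is false. You claim that $g(\eta):=\int_0^1\sigma_1<z+\sigma_1\eta>^{-(1+\delta)}d\sigma_1$ decays in $|\eta|$ strictly faster than $|\eta|^{-1}$. That holds when $|z|$ is small, because the segment's closest approach to the origin then occurs near $\sigma_1=0$ where the weight vanishes; but it fails for general $z$: taking $z=-\eta$, so the segment runs from $-\eta$ to $0$, one gets $g(\eta)=\int_0^1(1-u)<u\eta>^{-(1+\delta)}du\asymp|\eta|^{-1}$ exactly, since the near-origin passage happens at $\sigma_1\approx 1$ where the weight is of order one. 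A bound of the form $g(\eta)\lesssim\min(1,|\eta|^{-1})$ alone does not suffice: feeding it into the $\sigma_2$-integral with $\eta(\sigma_2)=(x-z)+\sigma_2(y-x)$ sweeping through the origin yields $|z-x|\cdot O\bigl(|z-x|^{-1}\log|z-x|\bigr)$, a logarithmic divergence. The true proof must therefore combine two mechanisms via a case analysis in $(x,y,z)$: the $\sigma_1$-weight gain when $|z|$ is small, and the decay supplied by $<z>$ itself when $|z|$ is large (in the worst two-parameter configuration one needs an estimate of the type $\iint_{[0,1]^2}(1+R(s+\sigma_2))^{-(1+\delta)}ds\,d\sigma_2=O(R^{-1-\min(\delta,1)}\log R)$ against the prefactor $R$). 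As written, your proposal asserts the conclusion of this analysis rather than carrying it out, so the argument is incomplete precisely at its crux; the paper sidesteps this by invoking Lemma 3.5 of \cite{Ichinose 1997} directly.
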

\begin{proof}
Both of (1) and (2) are proved by Lemma 3.5 in \cite{Ichinose 1997}.
\end{proof}
The following lemma is crucial in the present paper.
\begin{lem}
We assume \eqref{2.7} and \eqref{2.8}.  (1)  Let us define $\Phi$ by \eqref{5.11}.  Assume \eqref{2.12} if $0 \leq M_* < 1$ and \eqref{2.13} if $M_* \geq 1$.  Then there exist constants $\rho^* > 0, \delta > 0$ and $C \geq 0$ such that for $0 \leq t - s \leq \rho^*$ and $X = (x,y,z) \in \bR^{3d}$ we have the estimates
\begin{equation} \label{5.21}
\det \frac{\partial \Phi}{\partial z}(\ts,x,y,z) \geq \delta(1 + \rho^2|X|^{2M_*})^d,\ \rho = t -s, 
\end{equation}
\begin{equation} \label{5.22}
\left|\frac{\partial \Phi}{\partial z}(\ts,x,y,z)^{-1} \right|\leq C(1 + \rho^2|X|^{2M_*})^{-1},
\end{equation}
where $|\Omega|$ denotes the Hilbert-Schmidt norm $\bigl( \sum_{i,j=1}^d |\Omega_{ij}|^2\bigr)^{1/2}$ of  a matrix $\Omega =(\Omega_{ij};i\downarrow j \rightarrow 1,2,\dots,d)$ .  
Furthermore, for all fixed $0 \leq t - s \leq \rho^*$ and $(x,y) \in \bR^{2d}$, the map: $\bR^d \ni z \to \xi = \Phi(\ts;x,y,z) \in \bR^d$ is a homeomorphism, whose inverse will be denoted by the map: $\bR^d \ni \xi \to z = z(\ts;x,\xi,y) \in \bR^d$.  (2)  Let us define $\Phi$ by \eqref{5.10}.  Assume \eqref{2.10} and \eqref{2.11}.  Then we have the same assertions as in (1).
\end{lem}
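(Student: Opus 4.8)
The plan is to deduce \eqref{5.21}, \eqref{5.22} and the homeomorphism property all from a single lower bound on the symmetric part of the Jacobian $G:=\partial\Phi/\partial z(\ts;x,y,z)$, namely the claim that there are $\rho^*>0$ and $\delta>0$ with
\begin{equation*}
\left(\frac{G+{}^tG}{2}\,u,\,u\right)\;\geq\;\delta\bigl(1+\rho^2|X|^{2M_*}\bigr)|u|^2\qquad(u\in\bR^d)\tag{$\ast$}
\end{equation*}
for all $0\leq\rho=t-s\leq\rho^*$ and all $X=(x,y,z)$. Granting ($\ast$) and writing $\kappa=\delta(1+\rho^2|X|^{2M_*})$, each eigenvalue $\mu\in\mathbb{C}$ of the real matrix $G$, with unit eigenvector $v\in\mathbb{C}^d$, has $\operatorname{Re}\mu=v^*\frac{G+{}^tG}{2}v\geq\kappa$; since the real eigenvalues are then $\geq\kappa$ and the non-real ones occur in conjugate pairs of squared modulus $\geq\kappa^2$, this gives $\det G\geq\kappa^d$, which is \eqref{5.21}. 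Likewise ($\ast$) and Cauchy--Schwarz give $|Gu|\geq\kappa|u|$, so $G$ is invertible with $\|G^{-1}\|\leq\kappa^{-1}$ in operator norm, hence $|G^{-1}|\leq\sqrt{d}\,\kappa^{-1}$ in Hilbert--Schmidt norm, which is \eqref{5.22}.

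To establish ($\ast$) I start from \eqref{5.18}, which decomposes $G$ into the symmetric block $A:=I+\rho^2E'_0-\frac{\rho^2a}{6}I$, the antisymmetric matrix $\rho^2E'_1$, and the magnetic term $B'$. The antisymmetric $\rho^2E'_1$ contributes nothing to the quadratic form and may be discarded. For $A$ I use \eqref{5.14}: since $\int_0^1\!\int_0^1\sigma_1(1-\sigma_1)\,d\sigma_1d\sigma_2=\tfrac16$, that estimate reads $E'_0\geq C'_*|X|^{2M_*}I$, whence $A\geq(1-\tfrac{\rho^2a}{6})I+\rho^2C'_*|X|^{2M_*}I$. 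Choosing $\rho^*$ so small that $\tfrac{(\rho^*)^2a}{6}\leq\tfrac12$ gives $A\geq c_0(1+\rho^2|X|^{2M_*})I$ with $c_0=\min(\tfrac12,C'_*)$.

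It remains to absorb the symmetric part of $B'$. In case (2), and in case (1) when $0\leq M_*<1$, the preceding lemma bounds $\partial_z\widetilde B$ by a constant, respectively by $\rho C$; since $B'=\rho\,\partial_z\widetilde B$ this yields $|B'|\leq\rho C$, and taking $\rho^*\leq c_0/(2C)$ makes $(B'u,u)\geq-\tfrac{c_0}{2}(1+\rho^2|X|^{2M_*})|u|^2$, so ($\ast$) holds with $\delta=c_0/2$. The genuinely delicate case, which I expect to be the main obstacle, is case (1) with $M_*\geq1$, where only \eqref{2.13} is available and $\partial_tB_{jk}$ really grows. Here I differentiate \eqref{5.20} by hand. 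When $\partial_z$ falls on the factor $(z_k-x_k)$ the resulting $(j,l)$-block is $-\frac{\rho^2}{m}\int_0^1 d\theta\int_0^1\!\int_0^1\sigma_1(1-\sigma_1)\partial_tB_{jl}\,d\sigma$; because $B_{jl}=-B_{lj}$ this block is antisymmetric and, exactly like $\rho^2E'_1$, drops out of the quadratic form. The surviving block carries one extra $x$-derivative of $\partial_tB_{jk}$, so \eqref{2.13} bounds it by $C\rho^2|X|<X>^{M_*-1}\leq C\rho^2<X>^{M_*}$, where $<X>=\sqrt{1+|X|^2}$. A standard AM--GM absorption $\rho^2<X>^{M_*}\leq\tfrac12(\eta^{-1}\rho^2+\eta\,\rho^2<X>^{2M_*})$, with $\eta$ and then $\rho^*$ chosen small, makes this at most $\tfrac{c_0}{2}(1+\rho^2|X|^{2M_*})$, again giving ($\ast$) with $\delta=c_0/2$. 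The key observation is that the only term of $B'$ permitted to grow is antisymmetric, so the growth allowed by \eqref{2.13} never enters the positive-definite estimate.

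Finally, the homeomorphism assertion follows from ($\ast$) by uniform monotonicity: since $\frac{G+{}^tG}{2}\geq\delta I$ uniformly in $z$, the $C^1$ map $z\mapsto\Phi(\ts;x,y,z)$ satisfies, for all $z,z'$,
\begin{equation*}
\bigl(\Phi(\ts;x,y,z)-\Phi(\ts;x,y,z'),\,z-z'\bigr)=\int_0^1\Bigl(\frac{\partial\Phi}{\partial z}\bigl(\ts;x,y,z'+\theta(z-z')\bigr)(z-z'),\,z-z'\Bigr)\,d\theta\geq\delta|z-z'|^2,
\end{equation*}
the inequality holding because each integrand equals the quadratic form of the symmetric part, which is $\geq\delta|z-z'|^2$ by ($\ast$) (using $1+\rho^2|X|^{2M_*}\geq1$). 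Hence the map is injective, and $|\Phi(\ts;x,y,z)-\Phi(\ts;x,y,z')|\geq\delta|z-z'|$ shows it is coercive; a coercive, strictly monotone continuous map of $\bR^d$ is a homeomorphism onto $\bR^d$ with continuous inverse, which is the last assertion.
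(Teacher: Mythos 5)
Your proof is correct, but it takes a genuinely different route from the paper's. Both arguments begin with the same coercivity estimate $I+\rho^{2}E'_{0}\geq (1+C'_{*}\rho^{2}|X|^{2M_{*}})I$ drawn from Lemma 5.3, but they diverge immediately afterwards. The paper factors the Jacobian multiplicatively as $\partial\Phi/\partial z=(I+\rho^{2}E'_{0})\{I+O(\rho)\}$ as in \eqref{5.28}; to control the bracket it must bound $E'_{1}$ and $B'$ in sup-norm, which is where Faraday's law \eqref{5.25} together with \eqref{2.11}--\eqref{2.13} enters (yielding \eqref{5.26} and \eqref{5.27}), and the homeomorphism is then obtained by citing a global inverse function theorem (Theorem 1.22 in \cite{Schwartz}). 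You instead bound only the symmetric part of the Jacobian from below: the antisymmetric pieces --- $\rho^{2}E'_{1}$ and the potentially growing block of $B'$ in which $\partial_{z}$ falls on $(z_{k}-x_{k})$, antisymmetric because $B_{jl}=-B_{lj}$ --- drop out of the quadratic form for free, so you never invoke Faraday's law or \eqref{2.11}, and \eqref{5.21}, \eqref{5.22} and the homeomorphism all follow from elementary linear algebra (real parts of eigenvalues, Cauchy--Schwarz) and strong monotonicity. Your observation that the only term of $B'$ permitted to grow under \eqref{2.13} is antisymmetric is a genuine structural point the paper does not exploit, and it makes your argument more self-contained. What the paper's extra work buys is the multiplicative decomposition itself: it is reused in the proof of Theorem 5.8, where $\det(\partial z/\partial\xi)$ is split as $p_{0}+(t-s)p_{1}$ with $p_{0}=\det(I+\rho^{2}E'_{0})^{-1}\in[0,1]$ (see \eqref{5.37}--\eqref{5.38}), and that structure is essential for applying Theorem 1.A with the sharp constant $1+O(t-s)$; so even with your shorter proof of Lemma 5.6, the estimates \eqref{5.24} and \eqref{5.28}--\eqref{5.30} would still have to be established later.
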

\begin{proof}
We will first prove (1).  Lemma 5.3 and \eqref{5.16} show 
\begin{equation} \label{5.23}
I + \rho^{2}E'_{0}(t,s;x,y,z) \geq 1 + C'_{*}\rho^2|X|^{2M_*}
\end{equation}
for all $X = (x,y,z) \in \bR^{3d}$.  Hence we have
\begin{equation} \label{5.24}
\left|\left(I + \rho^{2}E'_{0}(t,s;x,y,z)\right)^{-1} \right|\leq \frac{C}{1 + \rho^2|X|^{2M_*}}
\end{equation}
together with \eqref{2.8}.  We note Faraday's law
\begin{equation} \label{5.25}
\frac{\partial E}{\partial x}(t,x) - \frac{{}^{t}\partial E}{\partial x}(t,x)  
= -\left(\frac{\partial B_{ji}}{\partial t};i\downarrow j \rightarrow 1,2,\dots,d\right),
\end{equation}
which follows from \eqref{1.1}. Hence, using the assumption \eqref{2.12} if $0\leq M_{*} < 1$ and \eqref{2.13} if $1 \leq M_{*}$, from \eqref{5.17} we have
\begin{equation} \label{5.26}
|\partial_{x}^{\alpha}\partial_{y}^{\beta}\partial_{z}^{\gamma}E'_{1}(t,s;x,y,z)| \leq C_{\alpha\beta\gamma}<X>^{M_{*}}
\end{equation}
for all $\alpha, \beta$ and $\gamma$.  Here we used that if \eqref{2.12} holds, $\partial_x^{\alpha}\partial_tB(t,x)$ are bounded on $\bR^d$ for all $\alpha$.  This follows from Lemma 3.5 in \cite{Ichinose 1997}.
From \eqref{5.20} we also get
\begin{equation} \label{5.27}
|\partial_{x}^{\alpha}\partial_{y}^{\beta}\partial_{z}^{\gamma}B'(t,s;x,y,z)| \leq C_{\alpha\beta\gamma}\rho^{2}<X>^{M_{*}}
\end{equation}
for all $\alpha, \beta$ and $\gamma$ together with (2) of Lemma 5.5.   
\par
	Noting \eqref{5.23}, we can rewrite \eqref{5.18} as 
\begin{align} \label{5.28}
& \frac{\partial \Phi}{\partial z}(t,s;x,y,z)  = \left(I + \rho^2  E'_0\right) \Big\{ I -\left(I + \rho^2  E'_0\right)^{-1}\frac{\rho^2a}{6} +  \rho\left(I + \rho^2  E'_0\right)^{-1} \notag \\ 
&\times \rho E'_1(t,s;x,y,z) + \rho\left(I + \rho^2  E'_0\right)^{-1}\rho ^{-1}B'(t,s;x,y,z)\Bigr\}.
\end{align}
Noting $\theta/(1+\theta^{2}) \leq 1$ for all $\theta \geq 0$, from \eqref{2.8}, \eqref{5.24} and \eqref{5.26} we have
\begin{equation} \label{5.29}
|\partial_{x}^{\alpha}\partial_{y}^{\beta}\partial_{z}^{\gamma} \left(I + \rho^2  E'_0\right)^{-1} \rho E'_{1}(t,s;x,y,z)| \leq C_{\alpha\beta\gamma} < \infty
\end{equation}
 for all $\alpha, \beta$ and $\gamma$.  In the same way from \eqref{5.27} we also have
\begin{equation} \label{5.30}
|\partial_{x}^{\alpha}\partial_{y}^{\beta}\partial_{z}^{\gamma} \left(I + \rho^2  E'_0\right)^{-1}\rho^{-1} B'(t,s;x,y,z)| \leq C_{\alpha\beta\gamma} < \infty
\end{equation}
for all $\alpha, \beta$ and $\gamma$.  Therefore, from \eqref{5.23} and \eqref{5.28} we have
\begin{equation*} 
\det \frac{\partial \Phi}{\partial z} \geq \bigl( I + C'_{*}\rho^{2}|X|^{2M_{*}}\bigr)^{d}(1 - C\rho)
\end{equation*}
with a constant $C \geq 0$.  Thereby we can see together with \eqref{5.24} and \eqref{5.28}-\eqref{5.30} that there exists a constant $\rho^{*} > 0$ satisfying \eqref{5.21} and \eqref{5.22}.  Hence we can complete the proof of the assertion (1) by using Theorem 1.22 on p. 16 in \cite{Schwartz}.
\par
	We will prove (2).  As in the proof of (1) we can prove \eqref{5.23} - \eqref{5.26}, and so prove \eqref{5.29}.  Now, $B'$ is given by \eqref{5.19}.  Then from (1) of Lemma 5.5 we have
\begin{equation} \label{5.31}
|\partial_{x}^{\alpha}\partial_{y}^{\beta}\partial_{z}^{\gamma}B'(t,s;x,y,z)| \leq C_{\alpha\beta\gamma}\rho
\end{equation}
 for all $\alpha, \beta$ and $\gamma$.  Consequently we can prove \eqref{5.30}.  Hence we can complete the proof of (2) as in the proof of (1).
\end{proof}
The constant $\rho^* > 0$ defined in Lemma 5.6 is fixed from now on throughout sections 5, 6 and 7.
\begin{pro}
We assume \eqref{2.7} - \eqref{2.9}.  (1)  Let us define $\Phi$ by \eqref{5.11}.  Assume \eqref{2.12} if $0 \leq M_* < 1$ and \eqref{2.13} if $M_* \geq 1$.  Let  $0 \leq t - s \leq \rho^*$ and  $z(\ts; x,\xi,y)$  the function defined in Lemma 5.6.  Then we have
\begin{equation}   \label{5.32}
            |\partial_{\xi}^{\alpha}\partial_x^{\beta}\partial_{y}^{\gamma}
            z_j(t,s;x,\xi,y)|  \leq C_{\alpha,\beta,\gamma},
\  |\alpha + \beta  + \gamma| \geq 1
           \end{equation}
for  $(x,\xi,y) \in \bR^{3d}.$  (2) Let us define $\Phi$ by \eqref{5.10}.  Assume \eqref{2.10} and \eqref{2.11}.  Then we have the same assertions as in (1).
\end{pro}
\begin{proof}
Let $0 \leq t - s \leq \rho^*$.  We will first prove (1).  Let $w = x, \xi$ or $y$. It follows from $\xi = \Phi(\ts;x,y,z(t,s;x,\xi,y))$ that we have
\begin{equation}   \label{5.33}
 \frac{\partial \xi}{\partial w_j} =  \frac{\partial \Phi}{\partial z}(t,s;x,y,z) \frac{\partial z}{\partial w_j} +  \frac{\partial \Phi}{\partial w_j}(t,s;x,y,z)
           \end{equation}
           and so from (1) of Lemma 5.6
\begin{equation*}   
 \frac{\partial z}{\partial w_j}(\ts;x,\xi,y) =  \left(\frac{\partial \Phi}{\partial z}\right)^{-1}\left( \frac{\partial \xi}{\partial w_j} - \frac{\partial \Phi}{\partial w_j} \right).
           \end{equation*}
  Using \eqref{2.8}-\eqref{2.9} and \eqref{2.12}-\eqref{2.13}, from \eqref{5.11} and (2) of Lemma 5.5 we get
\begin{align}   \label{5.34}
&\left| \frac{\partial \Phi}{\partial w_j}(\ts;x,y,z)\right| \leq  C( 1 + \rho|X|^{M_*} + \rho^2|X|^{2M_*} + \rho^2|X|^{M_*}) \notag \\
& \leq C'( 1 + \rho|X|^{M_*} + \rho^2|X|^{2M_*})
           \end{align}
with non-negative constants $C$ and $C'$.  Hence, using \eqref{5.22}, we can prove
\begin{equation*}   
\left| \frac{\partial z}{\partial w_j}(\ts;x,\xi,y)\right| \leq C'' < \infty
           \end{equation*}
with a constant $C'' \geq 0$.  Next from \eqref{5.33} we have
\begin{align*}   
0 & =  \frac{\partial \Phi}{\partial z} \frac{\partial^2 z}{\partial w_k\partial w_j} +          
\left(\frac{\partial^2 \Phi}{\partial w_k\partial z} + \frac{\partial^2 \Phi}{\partial z^2} \frac{\partial z}{\partial w_k}\right) \frac{\partial z}{\partial w_j} \notag \\
& +  \frac{\partial^2 \Phi}{\partial w_k\partial w_j} + \frac{\partial^2 \Phi}{\partial z\partial w_j} \frac{\partial z}{\partial w_k}.
   \end{align*}
Hence,  using \eqref{5.32} with $|\alpha + \beta + \gamma| = 1,$ we can prove \eqref{5.32} with $|\alpha + \beta + \gamma| = 2$ as in the proof of the case of  $|\alpha + \beta + \gamma| = 1$.  In the same way we can complete the proof of \eqref{5.32} by induction.
\par
	 We consider the assertion (2). $\Phi$ is given by \eqref{5.10}.  Then we see from (1) of Lemma 5.5 that the corresponding inequalities to \eqref{5.34}   are given by 
\begin{equation*}   
\left| \frac{\partial \Phi}{\partial w_j}(\ts;x,y,z)\right|  \leq C( 1 + \rho|X|^{M_*} + \rho^2|X|^{2M_*} + \rho)
           \end{equation*}
           Hence we can prove \eqref{5.32} as in the proof of (1).
\end{proof}
\begin{thm}
Suppose \eqref{2.3} and Assumption 2.2.  Let $\Cts(t,s)$ be the operator on $\Sspace(\bR^d)$ defined by \eqref{5.6} and $0 \leq t - s \leq \rho^*$.  Then $\Cts(\ts)$ can be extended to a bounded operator on $L^2(\bR^d)$ and satisfies 
\begin{equation}   \label{5.35}
 \Vert \Cts(\ts)f\Vert \leq e^{K(t-s)}\Vert f \Vert
           \end{equation}
           for all $f \in L^2$ with a constant $K \geq 0$.
\end{thm}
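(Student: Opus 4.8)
The plan is to bound $\|\Cts(\ts)f\|$ by realizing the composition $\Cts(\ts)^{\dag}\Cts(\ts)$, after a regularizing cut-off, as a zeroth-order pseudo-differential operator whose symbol has supremum $1 + O(\rho)$, and then to invoke the sharp $L^2$-bound of Theorem 1.A. First I would regularize: fix once and for all a $\chi \in \Cspace_0(\bR^d)$ with $\chi(0) = 1$ and $0 \le \chi \le 1$, and for $f \in \Sspace$ and $0 < \epsilon \le 1$ write $\|\chi(\epsilon\cdot)\Cts(\ts)f\|^2 = (\Cts(\ts)^{\dag}\chi(\epsilon\cdot)^2\Cts(\ts)f, f)$, which is legitimate since $\Cts(\ts)f$ is smooth by Lemma 5.1. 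Taking $p \equiv 1$ in Lemma 5.2, the operator $\Cts(\ts)^{\dag}\chi(\epsilon\cdot)^2\Cts(\ts)$ is given explicitly by \eqref{5.8}, namely as an integral over $z$ with phase $\exp\bigl(i(x-y)\cdot m\Phi/(t-s)\bigr)$ and weight $\chi(\epsilon z)^2$.

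Next I would pass to symbolic form. For fixed $0 \le t-s \le \rho^*$ and $(x,y)$, Lemma 5.6 supplies the diffeomorphism $z \mapsto \xi = \Phi(\ts;x,y,z)$ with smooth inverse $z = z(\ts;x,\xi,y)$. Substituting $\xi = \Phi$ in \eqref{5.8} and rescaling $\xi = (\rho/m)\eta$ turns $dz$ into the Jacobian factor and the phase into $e^{i(x-y)\cdot\eta}$, so that $\Cts(\ts)^{\dag}\chi(\epsilon\cdot)^2\Cts(\ts)$ takes exactly the double-symbol form of Theorem 1.A with $\hbar = \rho$ and symbol
\[
 a_{\epsilon}(x,\eta,y) = \chi(\epsilon z)^2\,\Bigl|\det\frac{\partial\Phi}{\partial z}(\ts;x,y,z)\Bigr|^{-1} = \chi(\epsilon z)^2\,\Bigl|\det\frac{\partial z}{\partial\xi}(\ts;x,\xi,y)\Bigr|,\quad \xi = \tfrac{\rho}{m}\eta,\ z = z(\ts;x,\xi,y).
\]
Proposition 5.7 states that every derivative of $z(\ts;x,\xi,y)$ of order $\ge 1$ is bounded uniformly in $\rho$; hence $\det(\partial z/\partial\xi)$ and all its $(x,\xi,y)$-derivatives are bounded, the $\eta$-derivatives carrying extra factors $\rho/m$. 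Combined with the elementary fact that the derivatives of $\chi(\epsilon z)^2$ are bounded uniformly in $\epsilon \le 1$, this shows $a_{\epsilon} \in S^0(\bR^{3d})$ with seminorms bounded uniformly in $\epsilon \in (0,1]$ and $\rho \in [0,\rho^*]$.

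The crux, and the step I expect to be the main obstacle, is the sup-norm estimate $\sup_{x,\eta,y}|a_{\epsilon}| \le 1 + C\rho$, which is where Assumption 2.2 enters decisively. Writing $X = (x,y,z)$ and using the factorization \eqref{5.28}, the bracketed factor there equals $I + O(\rho)$ uniformly in $X$ by \eqref{5.29} and \eqref{5.30}, while \eqref{5.23} gives $\det(I + \rho^2 E'_0) \ge (1 + C'_*\rho^2|X|^{2M_*})^d \ge 1$, the positivity being exactly the confinement contained in \eqref{2.7}. Hence $\det(\partial\Phi/\partial z) \ge 1 - C\rho$, so $|\det(\partial\Phi/\partial z)|^{-1} \le 1 + C\rho$, and since $0 \le \chi \le 1$ we get $\sup|a_{\epsilon}| \le 1 + C\rho$ uniformly in $\epsilon$. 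It is this genuine positivity, not merely a lower bound $\det \ge \delta > 0$, that produces the sharp constant: a crude Calder\'on--Vaillancourt estimate would only give a fixed bound exceeding $1$, and the product of the $\approx T/\rho$ factors appearing in \eqref{5.7} would then blow up.

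Finally I would apply Theorem 1.A, which yields $\|\Cts(\ts)^{\dag}\chi(\epsilon\cdot)^2\Cts(\ts)\|_{L^2\to L^2} \le \sup|a_{\epsilon}| + O(\rho) \le 1 + C\rho$, the $O(\rho)$ being uniform in $\epsilon$ because the symbol seminorms are. Thus $\|\chi(\epsilon\cdot)\Cts(\ts)f\|^2 \le (1+C\rho)\|f\|^2$ for every $\epsilon$; letting $\epsilon \to 0$, so that $\chi(\epsilon x) \to \chi(0) = 1$ pointwise, Fatou's lemma gives $\Cts(\ts)f \in L^2$ with $\|\Cts(\ts)f\| \le (1+C\rho)^{1/2}\|f\| \le e^{K(t-s)}\|f\|$ for $K = C/2$. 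Since $\Sspace$ is dense in $L^2$, $\Cts(\ts)$ extends to a bounded operator on $L^2$ satisfying \eqref{5.35}.
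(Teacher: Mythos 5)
Your proposal is correct and follows essentially the same route as the paper: express $\Cts(\ts)^{\dag}\chi(\epsilon\cdot)^{2}\Cts(\ts)$ via Lemma 5.2, change variables $z\mapsto\xi=\Phi$ by Lemma 5.6 so that the operator becomes a double-symbol pseudo-differential operator whose symbol is essentially the Jacobian $\det(\partial z/\partial\xi)$, bound that symbol by $1+O(\rho)$ using \eqref{5.23} and \eqref{5.28}--\eqref{5.30}, and invoke Theorem 1.A. The only cosmetic difference is that the paper splits the Jacobian as $p_{0}+(t-s)p_{1}$ with $0\le p_{0}\le 1$ and passes to the limit $\epsilon\to 0$ before applying Theorem 1.A, whereas you apply Theorem 1.A uniformly in $\epsilon$ and then use Fatou's lemma; both variants are sound.
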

\begin{proof}
Since we can prove from \eqref{2.3} and \eqref{2.8} as in the proof of \eqref{2.23.3} that  $\partial_x^{\alpha}\partial_tA_j(t,x)$ are continuous in $\domain$, Lemma 5.2 holds.
We will first prove the case that \eqref{2.12} and \eqref{2.13} are assumed.  Let us define $\Phi$ by \eqref{5.11}.  Then from Lemma 5.2 we have
\begin{align*}  
    & \Cts(\ts)^{\dag}\chi(\epsilon\cdot)^2\Cts(\ts)f = \left(\frac{m}{2\pi(t-s)}\right)^d\int f(y)dy \\
    &\quad \times \int\chi(\epsilon z)^2
      \left(\exp i(x - y)\cdot \frac{m\Phi}{t-s} \right)dz
\end{align*}
and so, changing variables from $z$ to $\xi = \Phi(\ts;x,y,z)$ by Lemma 5.6, 
\begin{align}  \label{5.36}
    & \Cts(\ts)^{\dag}\chi(\epsilon\cdot)^2\Cts(\ts)f = \left(\frac{m}{2\pi(t-s)}\right)^d\int f(y)dy \int\chi(\epsilon z)^2 \notag \\
    &\quad \times 
      \left(\exp i(x - y)\cdot \frac{m\xi}{t-s} \right)\det \frac{\partial z}{\partial \xi}\,d\xi = \iint e^{i(x-y)\cdot\eta}\chi(\epsilon z(\ts;x,\xi,y))^2 \notag \\
      & \quad \times \det\frac{\partial z}{\partial \xi}(\ts;x,\xi,y)f(y)dy\dbar\eta,\ \xi =\frac{t-s}{m}\eta.
\end{align}
From \eqref{2.8}, \eqref{5.24} and \eqref{5.28} - \eqref{5.30} we have
\begin{align}  \label{5.37}
    & 0 < \det\frac{\partial z}{\partial\xi}(\ts;x,\xi,y) = \det\bigl(I +\rho^2E'_0\bigr)^{-1} + (t - s)p_1(\ts;x,\xi,y) 
    \notag \\
    & \equiv p_0(\ts;x,\xi,y) + (t-s)p_1(\ts;x,\xi,y) 
      \end{align}
with $p_j(\ts;x,\xi,y) \in S^0(\bR^{3d})\ (j = 0,1)$.  In particular, from \eqref{5.23} we have
\begin{equation}  \label{5.38}
    0 \leq p_0(\ts;x,\xi,y) \leq 1.
      \end{equation}
Noting \eqref{5.32},  from \eqref{5.36} and \eqref{5.37} we can prove
\begin{align*}  
    &\limepsilon \Cts(\ts)^{\dag}\chi(\epsilon\cdot)^2\Cts(\ts)f = \iint e^{i(x-y)\cdot\eta}p_0(\ts;x,\frac{t-s}{m}\eta,y)f(y)dy\dbar\eta \\
    & \quad + (t - s)\iint e^{i(x-y)\cdot\eta}p_1(\ts;x,\frac{t-s}{m}\eta,y)f(y)dy\dbar\eta 
    \end{align*}
in $\Sspace$ for $f \in \Sspace$.  Therefore, applying Theorem 1.A to the above, we have
\begin{align*}  
    & \Vert\Cts(\ts)f\Vert^2 \leq \liminf_{\,\epsilon\to 0+}\left(\Cts(\ts)^{\dag}\chi(\epsilon\cdot)^2\Cts(\ts)f,f\right) \leq \bigl(1 + K(t-s)\bigr)\Vert f\Vert^2
    \\
    & 
    \quad + K(t-s)\Vert f\Vert^2 = \bigl(1 + 2K(t-s)\bigr)\Vert f\Vert^2 \leq e^{2K(t-s)}\Vert f\Vert^2
\end{align*}
with a constant $K \geq 0$, which shows \eqref{5.35}.
\par
	Next we consider the case that \eqref{2.10} and \eqref{2.11} are assumed.  Let us define $\Phi$ by \eqref{5.10}.  Then we can prove Theorem 5.8 as in the proof of the first case.
\end{proof}
\begin{cor}
Suppose the assumptions of Theorem 5.8.  Let $\kdelta(t,0)f$ for $f \in \Cspace_0(\bR^d)$ be the approximation defined by \eqref{2.1} of the Feynman path integral.  Let $|\Delta| \leq \rho^*$.  Then $\kdelta(t,0)f$ can be uniquely extended to a bounded operator on $L^2(\bR^d)$, which can be written as 
\begin{equation}  \label{5.39}
     \kdelta(t,0)f = {\cal C}(t,\tau_{\nu-1}){\cal C}(\tau_{\nu-1},\tau_{\nu-2})\cdots 
{\cal C}(\tau_1,0)f
\end{equation}
for $f \in L^2$, and one has \eqref{2.14} with the same constant $K$ as in Theorem 5.8.
\end{cor}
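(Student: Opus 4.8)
The plan is to deduce the corollary directly from the representation \eqref{5.7} together with the $L^2$-boundedness of $\Cts(t,s)$ established in Theorem 5.8. Since the hypotheses of Theorem 5.8 include \eqref{2.3} and \eqref{2.9}, the hypotheses of Lemma 5.1 are met, so \eqref{5.7} is valid and for $f \in \Cspace_0(\bR^d) \subset \Sspace$ we may write
\[
\kdelta(t,0)f = \limepsilon \Cts(t,\tau_{\nu-1})\chi(\epsilon\cdot)\Cts(\tau_{\nu-1},\tau_{\nu-2})\chi(\epsilon\cdot)\cdots\chi(\epsilon\cdot)\Cts(\tau_1,0)f.
\]
Because $|\Delta| \leq \rho^*$, each increment satisfies $0 \leq \tau_{j+1}-\tau_j \leq \rho^*$, so by Theorem 5.8 every factor $\Cts(\tau_{j+1},\tau_j)$ is a bounded operator on $L^2$ with $\Vert \Cts(\tau_{j+1},\tau_j)\Vert_{L^2\to L^2} \leq e^{K(\tau_{j+1}-\tau_j)}$; moreover multiplication by $\chi(\epsilon\cdot)$ is bounded on $L^2$ with norm at most $\sup|\chi|$, uniformly in $0 < \epsilon \leq 1$.

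First I would remove the cut-offs, arguing exactly as in the passage leading to \eqref{4.11} in the proof of Theorem 4.1. Writing $\chi = \chi(\epsilon\cdot)$ for brevity, the difference between the product with cut-offs and the product $\Cts(t,\tau_{\nu-1})\cdots\Cts(\tau_1,0)f$ without them expands as a telescoping sum of $\nu-1$ terms, each of the form $\Cts(t,\tau_{\nu-1})\chi\cdots\chi\Cts(\tau_{j+1},\tau_j)\{\chi - 1\}\Cts(\tau_j,\tau_{j-1})\cdots\Cts(\tau_1,0)f$. For each fixed $j$ the innermost block $g_j := \Cts(\tau_j,\tau_{j-1})\cdots\Cts(\tau_1,0)f$ is a fixed element of $L^2$, and since $\chi(0) = 1$ and $\chi$ is bounded, dominated convergence gives $\Vert\{\chi(\epsilon\cdot) - 1\}g_j\Vert \to 0$ as $\epsilon \to 0+$. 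The operators standing to the left of $\{\chi - 1\}$ are bounded on $L^2$ uniformly in $\epsilon$, so each telescoping term tends to $0$ in $L^2$; hence the limit exists and
\[
\kdelta(t,0)f = \Cts(t,\tau_{\nu-1})\Cts(\tau_{\nu-1},\tau_{\nu-2})\cdots\Cts(\tau_1,0)f
\]
for all $f \in \Cspace_0(\bR^d)$, which is \eqref{5.39} on this subspace.

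Next I would read off the norm bound from submultiplicativity: the right-hand side above has $L^2$-norm at most $\bigl(\prod_{j=0}^{\nu-1}e^{K(\tau_{j+1}-\tau_j)}\bigr)\Vert f\Vert = e^{K\sum_{j}(\tau_{j+1}-\tau_j)}\Vert f\Vert = e^{Kt}\Vert f\Vert$, since the increments sum to $t$; this is \eqref{2.14} with the same $K$ as in Theorem 5.8. Finally, as $\Cspace_0(\bR^d)$ is dense in $L^2$ and $\kdelta(t,0)$ is bounded there, it extends uniquely to a bounded operator on $L^2$; the composition $\Cts(t,\tau_{\nu-1})\cdots\Cts(\tau_1,0)$ is already defined on all of $L^2$ and agrees with $\kdelta(t,0)$ on the dense set, so it is that extension, and \eqref{2.14} persists by continuity. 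The main obstacle is the interchange of $\limepsilon$ with the operator product in the first step: it hinges on combining the strong convergence $\chi(\epsilon\cdot) - 1 \to 0$ (applied to the fixed vectors $g_j$) with the uniform-in-$\epsilon$ boundedness of the surrounding factors, so that each term of the telescoping sum — rather than the product as a whole — is shown to vanish.
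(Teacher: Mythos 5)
Your proposal is correct and follows essentially the same route as the paper: the paper's proof of Corollary 5.9 simply invokes the telescoping-sum argument used to establish \eqref{4.11}, which is precisely the cut-off removal you carry out (strong convergence of $\chi(\epsilon\cdot)-1$ on fixed $L^2$ vectors combined with the uniform boundedness of the surrounding factors from Theorem 5.8), followed by submultiplicativity to get \eqref{2.14} and density to get the unique extension. You have merely written out explicitly what the paper leaves as a reference to the earlier computation.
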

\begin{proof}
As in the proof of \eqref{4.11} from Theorem 5.8 we can prove \eqref{5.39},
which shows \eqref{2.14} by \eqref{5.35}.
\end{proof}
\begin{thm}
Suppose \eqref{2.3} and Assumption 2.2.  Let $q(x,w)$ be a function satisfying \eqref{5.3} with $M = 0$.  We set $p(\ts;x,w) = q(x,\sqrt{t-s}w)$ and define $P(\ts)$ by \eqref{5.4}.  Then we have
\begin{equation}  \label{5.40}
    \Vert P(\ts)f\Vert \leq C\Vert f\Vert, \ 0 \leq t - s \leq \rho^*
\end{equation}
for $f \in L^2$ with a constant $C \geq 0$.
\end{thm}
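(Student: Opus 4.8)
The plan is to reproduce, now carrying the amplitude $q$, the $P(\ts)^{\dag}\chi(\epsilon\cdot)^2P(\ts)$-argument that proved Theorem 5.8. The case $s=t$ is immediate: since $p(t,t;x,w)=q(x,0)$, the operator $P(t,t)$ reduces to multiplication by the bounded function $q(\cdot,0)$, so \eqref{5.40} holds there with $C=\sup_x|q(x,0)|$. For $s<t$, observe first that for each fixed $(\ts)$ with $0\le t-s\le\rho^*$ the symbol $p(\ts;x,w)=q(x,\sqrt{t-s}\,w)$ satisfies \eqref{5.3} with $M=0$: differentiating in $w$ produces factors $(t-s)^{|\alpha|/2}$, which are bounded since $t-s\le\rho^*$. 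Hence Lemma 5.2 applies. I would apply \eqref{5.8}; the rescaling built into $p$ collapses the two amplitude factors, because $p(z,(z-x)/\sqrt{t-s})=q(z,z-x)$ and likewise with $y$, so the kernel carries $q(z,z-x)^*\,q(z,z-y)$.

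Next I would change variables from $z$ to $\xi=\Phi(\ts;x,y,z)$ by Lemma 5.6 and set $\xi=(t-s)\eta/m$, exactly as in \eqref{5.36}. Passing to the limit $\epsilon\to0$ as in Theorem 5.8 (using \eqref{5.32}) removes the cutoff and rewrites the operator as
\begin{equation*}
\limepsilon P(\ts)^{\dag}\chi(\epsilon\cdot)^2P(\ts)f = \iint e^{i(x-y)\cdot\eta}\,a(\ts;x,\eta,y)\,f(y)\,dy\,\dbar\eta,
\end{equation*}
where, with $z=z(\ts;x,(t-s)\eta/m,y)$,
\begin{equation*}
a(\ts;x,\eta,y)=q(z,z-x)^*\,q(z,z-y)\,\det\frac{\partial z}{\partial\xi}(\ts;x,(t-s)\eta/m,y).
\end{equation*}

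The heart of the matter is to show that $a\in S^0(\bR^{3d})$ with seminorms bounded uniformly in $0\le t-s\le\rho^*$, and this is exactly where the hypothesis $M=0$ is indispensable. Although $z=z(\ts;x,\xi,y)$ itself grows, Proposition 5.7 guarantees that every derivative $\partial_\xi^\alpha\partial_x^\beta\partial_y^\gamma z$ with $|\alpha+\beta+\gamma|\ge1$ is bounded, and the rescaling $\xi=(t-s)\eta/m$ only multiplies $\xi$-derivatives by the bounded factor $(t-s)/m$. Because $q\in S^0$, the functions $q(z,z-x)$ and $q(z,z-y)$ remain bounded in spite of their unbounded arguments, and differentiating them in $(x,\eta,y)$ yields, by the chain rule, only products of bounded factors $\partial q$ with bounded derivatives of $z$; the determinant factor already lies in $S^0$ by \eqref{5.37}. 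Thus $a$ and all its derivatives are bounded, uniformly in $(\ts)$. Had $M>0$, the factors $q(z,z-x)$ would grow like $\langle z;z-x\rangle^M$ and this step would fail.

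Finally I would invoke Theorem 1.A to conclude that $\limepsilon P(\ts)^{\dag}\chi(\epsilon\cdot)^2P(\ts)$ is bounded on $L^2$ with norm $\le\sup|a|+O(1)\le C$, uniformly in $(\ts)$. Then, as in Theorem 5.8, Fatou's lemma (using $\chi(0)=1$, so $\chi(\epsilon x)^2\to1$) gives
\begin{equation*}
\Vert P(\ts)f\Vert^2 \le \liminf_{\epsilon\to0+}\bigl(P(\ts)^{\dag}\chi(\epsilon\cdot)^2P(\ts)f,f\bigr) \le C\Vert f\Vert^2,
\end{equation*}
which is \eqref{5.40}. The case in which \eqref{2.10}--\eqref{2.11} are assumed (with $\Phi$ given by \eqref{5.10}) is handled identically, using part (2) of Lemmas 5.5--5.6 and of Proposition 5.7 in place of part (1). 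The main obstacle is the uniform $S^0$ bound on the composite symbol $a$, and the key is that $M=0$ keeps $q(z,z-x)$ and $q(z,z-y)$ bounded while Proposition 5.7 controls all of their derivatives.
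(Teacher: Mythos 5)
Your proof is correct and follows essentially the same route as the paper: apply Lemma 5.2 with the rescaled amplitude so the two factors collapse to $q(z,z-x)^*q(z,z-y)$, change variables $z\to\xi=\Phi$ via Lemma 5.6, pass to the limit $\epsilon\to 0$ using \eqref{5.32}, and invoke Theorem 1.A together with the Fatou argument from the proof of \eqref{5.35}. Your explicit justification that the composite symbol lies in $S^0$ uniformly (via Proposition 5.7 and the hypothesis $M=0$) is exactly the point the paper leaves implicit in the phrase ``as in the proof of \eqref{5.35}.''
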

\begin{proof}
Let $f \in \Sspace$.  As in the proof of \eqref{5.36} we have
\begin{align*}  
    & P(\ts)^{\dag}\chi(\epsilon\cdot)^2P(\ts)f = \left(\frac{m}{2\pi(t-s)}\right)^d\int f(y)dy \int\chi(\epsilon z)^2
     \left(\exp i(x - y)\cdot \frac{m\xi}{t-s} \right)  \notag \\
    &\quad \times 
     q(z,z-x)^*q(z,z-y)\det \frac{\partial z}{\partial \xi}(\ts;x,\xi,y)\,d\xi. 
\end{align*}
Noting \eqref{5.32}, we have 
\begin{align}  \label{5.41}
    & \limepsilon P(\ts)^{\dag}\chi(\epsilon\cdot)^2P(\ts)f = \iint e^{i(x - y)\cdot \eta}
     q(z,z-x)^*q(z,z-y) \notag \\
     &\quad \times \det \frac{\partial z}{\partial \xi}(\ts;x,\frac{t-s}{m}\eta,y)f(y)dy\,\dbar\eta
\end{align}
in $\Sspace$ with $z = z(\ts;x,(t-s)\eta/m,y)$  and hence we can prove \eqref{5.40} by Theorem 1.A as in the proof of \eqref{5.35}.
\end{proof}
\section{Consistency of $\Cts(\ts)$}
\begin{lem}
Suppose the assumptions of Proposition 5.7.  Let $0 \leq t-s \leq \rho^*$ and $z(\ts;x,\xi,y)$ the function defined  in Lemma 5.6.  Then we have
\begin{align}  \label{6.1}
    & \left| z(\ts;x,\rho\eta/m + \sqrt{\rho}\zeta/m,x+\sqrt{\rho}y) - x\right| \leq C\sqrt{\rho}(1 + \sqrt{\rho}|x|^{2M_*+1} \notag \\
     & \quad  +  |y|^{2M_*+1} + \sqrt{\rho}|\eta|^{2M_*+1} + |\zeta|^{2M_*+1}).
     \end{align}
\end{lem}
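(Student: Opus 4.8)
The plan is to exploit that, for fixed $x$ and fixed third argument $Y$, the map $\xi\mapsto z(\ts;x,\xi,Y)$ is the inverse of $z\mapsto\Phi(\ts;x,Y,z)$ (Lemma 5.6) and that its derivatives are bounded by Proposition 5.7, so that the whole estimate reduces to evaluating $\Phi$ at $z=x$. Write $\rho=t-s$, $Y=x+\sqrt{\rho}\,y$, $\xi=\rho\eta/m+\sqrt{\rho}\,\zeta/m$, and set $\xi_0:=\Phi(\ts;x,Y,x)$. By the definition of the inverse we have $z(\ts;x,\xi_0,Y)=x$, hence
\[
z(\ts;x,\xi,Y)-x=z(\ts;x,\xi,Y)-z(\ts;x,\xi_0,Y).
\]
First I would apply the fundamental theorem of calculus in $\xi$ together with the bound $|\partial_\xi z|\le C$ furnished by \eqref{5.32} (the case $|\alpha|=1$) to get $|z(\ts;x,\xi,Y)-x|\le C|\xi-\xi_0|$. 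This reduces \eqref{6.1} to an estimate of $|\xi-\xi_0|$.

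Next I would compute $\xi_0$ explicitly. Putting $z=x$ in \eqref{5.11} (or in \eqref{5.10}; their first four terms agree) annihilates the fifth term, which carries the factor $z_k-x_k$, so with $Y=x+\sqrt{\rho}\,y$,
\[
\xi_{0,j}=-\frac{\sqrt{\rho}\,y_j}{2}+\frac{\rho}{m}\int_0^1 A_j(s,x+\theta\sqrt{\rho}\,y)\,d\theta-\frac{\rho^2}{m}\int_0^1\!\!\int_0^1\sigma_1 E_j\bigl(\tau(\sigma),x+\sigma_1\sigma_2\sqrt{\rho}\,y\bigr)\,d\sigma_1 d\sigma_2,
\]
with $\tau(\sigma)=t-\sigma_1\rho\in[s,t]$. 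Thus $\xi-\xi_0$ is the sum of $\rho\eta/m$, $\sqrt{\rho}\,\zeta/m$, $\sqrt{\rho}\,y/2$ and the two integral terms, and it remains to bound each summand by the right-hand side of \eqref{6.1}.

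I would then estimate term by term, using throughout that $0\le\rho\le\rho^*$ is bounded, that $<x+\theta\sqrt{\rho}\,y>\le C(<x>+|y|)$, and the elementary inequality $a^r\le C(1+a^{2M_*+1})$ for $0\le r\le 2M_*+1$ and $a\ge0$. The three linear terms are immediate (the prefactors $\rho$ and $\sqrt{\rho}$ produce exactly the powers of $\sqrt{\rho}$ on the right of \eqref{6.1}). For the $A$-term I would invoke \eqref{2.5}, giving $|A_j(s,x+\theta\sqrt{\rho}\,y)|\le C<x+\theta\sqrt{\rho}\,y>^{M_*+1-\delta}\le C(<x>^{M_*+1-\delta}+|y|^{M_*+1-\delta})$, and the prefactor $\rho$ absorbs into $\sqrt{\rho}$.

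The hard part will be the $E$-term, because Assumption 2.2 controls $\partial_x^\alpha E_j$ only for $|\alpha|\ge1$ and gives no direct bound on $E_j$ itself. Here I would first derive the zeroth-order estimate $|E_j(t,x)|\le C<x>^{2M_*+1}$ by writing $E_j(t,x)=E_j(t,0)+\int_0^1 x\cdot(\partial_x E_j)(t,\theta x)\,d\theta$, bounding the integrand with \eqref{2.8} and controlling $E_j(t,0)$ uniformly in $t$ through the continuity of $E_j=-\partial_t A_j-\partial_{x_j}V$ on $\domain$ guaranteed by Assumption 2.1. The additional factor $\rho^2$ then renders this contribution harmless. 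Assembling all the pieces yields \eqref{6.1}; the same argument applies verbatim when $\Phi$ is given by \eqref{5.10}, since the only term that differs from \eqref{5.11} again vanishes at $z=x$.
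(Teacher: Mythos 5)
Your proof is correct, and it takes a route that differs in a worthwhile way from the paper's. The paper substitutes the implicit point $z = z(\ts;x,\rho\eta/m+\sqrt{\rho}\zeta/m,x+\sqrt{\rho}y)$ back into the identity $\xi = \Phi(\ts;x,y,z)$, solves algebraically for $(z-x)/\sqrt{\rho}$ (its \eqref{6.7}), and then must estimate $\widetilde{A}$, $\widetilde{E}$ and $\widetilde{B}$ \emph{at that implicit $z$}; this forces an a priori growth bound on $|z|$ itself (its \eqref{6.6}, coming from \eqref{5.32}) before the terms can be controlled. You instead anchor at $\xi_0 = \Phi(\ts;x,Y,x)$, use $z(\ts;x,\xi_0,Y)=x$ and the uniform bound $|\partial_\xi z|\le C$ from \eqref{5.32} to reduce everything to $|\xi-\xi_0|$, i.e.\ to evaluating $\Phi$ at the \emph{explicit} point $z=x$, where the magnetic term vanishes because of its factor $z_k-x_k$. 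This buys you two small simplifications: you never need \eqref{6.6}, and you never need a zeroth-order bound on $\widetilde{B}$. Both arguments share the one genuinely delicate ingredient, namely a zeroth-order estimate on the electric term with exponent $2M_*+1$ rather than the $2M_*+2$ that \eqref{2.3} and \eqref{2.6} would give directly: the paper obtains it implicitly by integrating the first-derivative bounds \eqref{6.4} from the origin (it cites only the derivative estimates, leaving this step tacit), while you make it explicit by writing $E_j(t,x)=E_j(t,0)+\int_0^1 x\cdot(\partial_xE_j)(t,\theta x)\,d\theta$, using \eqref{2.8} and the continuity of $E_j$ guaranteed by Assumption 2.1. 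Your handling of the remaining terms (absorbing prefactors $\rho$, $\rho^2$ into $\sqrt{\rho}$ and using $a^r\le C(1+a^{2M_*+1})$ for $0\le r\le 2M_*+1$) matches the paper's, and your closing remark that the case \eqref{5.10} is identical because its fifth term also vanishes at $z=x$ is correct.
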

\begin{proof}
We first consider the case that $\Phi$ is given by \eqref{5.11}, which we write as
\begin{align}  \label{6.2}
        \Phi(\ts;x,y,z)& =  z - \frac{x + y}{2} + \rho\widetilde{A}(s;x,y)    + \rho^2 \widetilde{E}(\ts;x,y,z)   & 
       \notag \\
       &  + \rho \widetilde{B}(\ts;x,y,z).
\end{align}
Then, using (2) of Lemma 5.5, from the assumptions \eqref{2.8} - \eqref{2.9} and \eqref{2.12} - \eqref{2.13} we have
\begin{equation}   \label{6.3}
            |\partial_{x}^{\alpha}\partial_y^{\beta}
           \widetilde{A}(s;x,y)|  \leq C_{\alpha,\beta}(1 + |x| + |y|)^{M_*}, \ |\alpha + \beta| \geq 1,
           \end{equation}
\begin{equation}   \label{6.4}
            |\partial_{x}^{\alpha}\partial_y^{\beta}\partial_z^{\gamma}
           \widetilde{E}_j(\ts;x,y,z)|  \leq C_{\alpha,\beta,\gamma}(1 + |x| + |y|+ |z|)^{2M_*}, \  |\alpha + \beta| + \gamma| \geq 1,
           \end{equation}
\begin{equation}   \label{6.5}
            |\partial_{x}^{\alpha}\partial_y^{\beta}\partial_z^{\gamma}
           \widetilde{B}_j(\ts;x,y,z)|  \leq C_{\alpha,\beta,\gamma}\rho(1 + |x| + |y|+ |z|)^{M_*}, \  |\alpha + \beta| + \gamma| \geq 1.
           \end{equation}
From \eqref{5.32} we have
\begin{equation*}   
         |z(\ts;x,\xi,y)| \leq C(1 + |x| + |\xi|+ |y|),
\end{equation*}
which shows 
\begin{align}  \label{6.6}
    & \left| z(\ts;x,\rho\eta/m + \sqrt{\rho}\zeta/m,x+\sqrt{\rho}y)\right|  \notag \\
    & \leq C(1 + |x| + \sqrt{\rho}|y| + \rho|\eta| + \sqrt{\rho}|\zeta|).
     \end{align}
We take $z =z(\ts;x,\rho\eta/m + \sqrt{\rho}\zeta/m,x+\sqrt{\rho}y)$ in \eqref{6.2}.  Then we have
\begin{align*}  
      &  \frac{\rho\eta}{m} +  \frac{\sqrt{\rho}\zeta}{m}  =  z - \frac{2x +\sqrt{\rho} y}{2} + \rho\widetilde{A}(s;x,x +\sqrt{\rho} y)   \\& 
       \quad  \quad + \rho^2 \widetilde{E}(\ts;x,x +\sqrt{\rho} y,z)  + \rho \widetilde{B}(\ts;x,x +\sqrt{\rho} y,z).
\end{align*}
Hence we get
\begin{align}  \label{6.7}
      &  \frac{z-x}{\sqrt{\rho}} =  \frac{1}{2}\,y + \frac{\sqrho\eta}{m} +  \frac{\zeta}{m}  - \sqrho\widetilde{A}(s;x,x +\sqrt{\rho} y)   \notag \\ 
       & \quad   - \rho^{3/2} \widetilde{E}(\ts;x,x +\sqrt{\rho} y,z)  - \sqrho \widetilde{B}(\ts;x,x +\sqrt{\rho} y,z).
\end{align}
Applying \eqref{6.3} - \eqref{6.6} to \eqref{6.7}, we have \eqref{6.1}.
\par
	We consider the case that $\Phi$ is given by \eqref{5.10}.  We write $\Phi$ as \eqref{6.2}. Then from (1) of Lemma 5.5 we have
\begin{equation}   \label{6.8}
            |\partial_{x}^{\alpha}\partial_y^{\beta}\partial_z^{\gamma}
           \widetilde{B}_j(\ts;x,y,z)|  \leq C_{\alpha,\beta,\gamma}, \  |\alpha + \beta + \gamma| \geq 1
\end{equation}
correspondingly to \eqref{6.5}.  Hence we can also prove \eqref{6.1}.
\end{proof}
From \eqref{5.32} we have the following.
\begin{lem}
Suppose the assumptions of Proposition 5.7.  Let $0 \leq t-s \leq \rho^*$.  Then we have
\begin{align}  \label{6.9}
    & \left| \partial_{\eta}^{\alpha}\partial_{y}^{\beta}\partial_{\zeta}^{\gamma}\bigl(z(\ts;x,\rho\eta/m + \sqrt{\rho}\zeta/m,x+\sqrt{\rho}y)\bigr) \right|    \leq C_{\alpha\beta\gamma}\sqrt{\rho}^{\, |2\alpha + \beta + \gamma|},\notag \\
    & \hspace{1cm} |\alpha + \beta + \gamma| \geq 1.
          \end{align}
\end{lem}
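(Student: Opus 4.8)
The plan is to read off \eqref{6.9} directly from the uniform bound \eqref{5.32} of Proposition~5.7 by the chain rule, the point being that the substitution $\xi = \rho\eta/m + \sqrho\zeta/m$ in the middle slot and the replacement of the last argument by $x + \sqrho y$ are \emph{affine} in the new variables $(\eta,y,\zeta)$, while the first argument $x$ is held as a parameter. I would therefore set $Z(\eta,y,\zeta) := z(\ts;x,\rho\eta/m + \sqrho\zeta/m, x+\sqrho y)$ and record the constant inner Jacobians: $\partial\xi/\partial\eta = (\rho/m)I$, $\partial\xi/\partial\zeta = (\sqrho/m)I$, and the last argument has $y$-Jacobian $\sqrho\,I$. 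Since \eqref{5.32} holds in both cases of Proposition~5.7 (that is, whether $\Phi$ is given by \eqref{5.10} or \eqref{5.11}), the argument below is uniform and requires no case distinction.

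Because the inner maps are affine, all their derivatives of order $\geq 2$ vanish, so the higher-order chain rule collapses to a single term: as differential operators acting on the composite, $\partial_{\eta_j}$ becomes $(\rho/m)\partial_{\xi_j}$, $\partial_{\zeta_j}$ becomes $(\sqrho/m)\partial_{\xi_j}$, and $\partial_{y_j}$ becomes $\sqrho$ times the derivative in the third slot. Iterating (these constant-coefficient operators commute) gives
\begin{equation*}
\partial_{\eta}^{\alpha}\partial_{y}^{\beta}\partial_{\zeta}^{\gamma}Z = \left(\frac{\rho}{m}\right)^{|\alpha|}\sqrho^{\,|\beta|}\left(\frac{\sqrho}{m}\right)^{|\gamma|}\bigl(\partial_{\xi}^{\alpha+\gamma}\partial_{y}^{\beta}z\bigr)(\ts;x,\rho\eta/m + \sqrho\zeta/m, x+\sqrho y),
\end{equation*}
where both the $\eta$- and the $\zeta$-derivatives have turned into middle-slot ($\xi$) derivatives of combined order $\alpha+\gamma$, and the $y$-derivatives into third-slot derivatives in the sense of \eqref{5.32}.

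It then remains to estimate. When $|\alpha+\beta+\gamma|\geq 1$ the total order of the $z$-derivative on the right is $|(\alpha+\gamma)+\beta| = |\alpha|+|\beta|+|\gamma|\geq 1$, so \eqref{5.32} bounds it by a constant $C_{\alpha\beta\gamma}$ uniformly in all variables. Collecting the prefactors, absorbing the fixed constants $1/m$ and using $0\leq\rho\leq\rho^*$, I obtain
\begin{equation*}
\bigl|\partial_{\eta}^{\alpha}\partial_{y}^{\beta}\partial_{\zeta}^{\gamma}Z\bigr| \leq C'_{\alpha\beta\gamma}\,\rho^{|\alpha|}\sqrho^{\,|\beta|}\sqrho^{\,|\gamma|} = C'_{\alpha\beta\gamma}\,\sqrho^{\,2|\alpha|+|\beta|+|\gamma|} = C'_{\alpha\beta\gamma}\,\sqrho^{\,|2\alpha+\beta+\gamma|},
\end{equation*}
which is exactly \eqref{6.9}. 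There is essentially no analytic obstacle here; the only points requiring care are the bookkeeping of the powers of $\sqrho$ — each $\eta$-derivative supplies $\rho = \sqrho^{\,2}$ whereas each $y$- and $\zeta$-derivative supplies a single $\sqrho$, which produces the exponent $|2\alpha+\beta+\gamma|$ — and the observation that affineness of the inner substitutions means no additional terms (and in particular no uncontrolled powers of $|X|$) enter, so that \eqref{5.32}, which controls only derivatives of order at least one, suffices. This is precisely why the hypothesis $|\alpha+\beta+\gamma|\geq 1$ is imposed.
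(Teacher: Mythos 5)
Your proposal is correct and is exactly the verification the paper leaves implicit: the paper's entire ``proof'' of Lemma~6.2 is the single sentence ``From \eqref{5.32} we have the following,'' and your chain-rule computation --- exploiting that the substitutions $\xi = \rho\eta/m + \sqrt{\rho}\zeta/m$ and $y \mapsto x+\sqrt{\rho}y$ are affine, so each $\eta$-derivative contributes a factor $\rho$ and each $y$- or $\zeta$-derivative a factor $\sqrt{\rho}$, with the resulting derivative of $z$ of order $|\alpha|+|\beta|+|\gamma|\geq 1$ bounded uniformly by \eqref{5.32} --- is precisely the omitted bookkeeping.
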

\begin{lem}
Suppose the assumptions of Proposition 5.7.  Let $0 \leq t - s \leq \rho^*.$  Take a $p(x,w)$ satisfying \eqref{5.3} and set 
\begin{align}  \label{6.14}
    & q_{\epsilon}(\ts;x,\eta) = \text{Os} - \iint    e^{-iy\cdot\zeta}p\left(z,\frac{z-x}{\sqrho}\right)^*\chi(\epsilon z)^2
    p\left(z,\frac{z-x-\sqrho y}{\sqrho}\right) \notag \\
    & \times \det\frac{\partial z}{\partial \xi}(\ts;x,\rho\eta/m + \sqrt{\rho}\zeta/m,x+\sqrt{\rho}y)dy\,\dbar\zeta
    \end{align}
for $0 < \epsilon \leq 1$, where $z = z(\ts;x,\rho\eta/m + \sqrt{\rho}\zeta/m,x+\sqrt{\rho}y)$.  Then we have
\begin{equation}   \label{6.15}
      |\partial_{\eta}^{\alpha}q_{\epsilon}(\ts;x,\eta)| \leq C_{\alpha}(1 + |x|+ |\eta|)^{2M(2M_* + 1)}
\end{equation}
for all $\alpha$ with constants $C_{\alpha}$ independent of $\epsilon$.
\end{lem}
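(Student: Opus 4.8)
The plan is to treat $q_{\epsilon}(\ts;x,\eta)$ as an oscillatory integral in $(y,\zeta)$ with the nondegenerate phase $-y\cdot\zeta$ and to estimate it by the usual integration-by-parts device for such integrals (cf. \cite{Kumano-go}). Write its amplitude as
\[
 a_{\epsilon}(\ts;x,\eta;y,\zeta) = p\Bigl(z,\tfrac{z-x}{\sqrho}\Bigr)^{*}\chi(\epsilon z)^{2}\,p\Bigl(z,\tfrac{z-x-\sqrho y}{\sqrho}\Bigr)\det\tfrac{\partial z}{\partial\xi},
\]
where $z = z(\ts;x,\rho\eta/m+\sqrho\zeta/m,x+\sqrho y)$. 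Introducing the two first-order operators $L_{y} = <\zeta>^{-2}(1-\Delta_{y})$ and $L_{\zeta} = <y>^{-2}(1-\Delta_{\zeta})$ with $\Delta_{y} = \sum_{j}\partial_{y_{j}}^{2}$, each of which reproduces $e^{-iy\cdot\zeta}$, I would transfer $N$ copies of $L_{y}$ and $N'$ copies of $L_{\zeta}$ onto $a_{\epsilon}$, gaining the decaying factors $<\zeta>^{-2N}$ and $<y>^{-2N'}$. Since the amplitude and all its derivatives have at most polynomial growth, $\partial_{\eta}^{\alpha}$ may be carried under the integral sign, so it suffices to bound the $(x,\eta)$-growth of $\partial_{\eta}^{\alpha}$ applied to the transformed amplitude, uniformly in $\epsilon$, and then to choose $N,N'$ large.

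The heart of the estimate is the growth of the two factors $p(z,\cdot)$. By \eqref{6.6} one has $|z|\le C(1+|x|+\sqrho|y|+\rho|\eta|+\sqrho|\zeta|)$, while Lemma 6.1 gives, after absorbing the powers $\sqrho\le\sqrt{\rho^{*}}$ into constants, $|(z-x)/\sqrho|\le C(1+|x|+|\eta|+|y|+|\zeta|)^{2M_{*}+1}$; hence each argument pair obeys $<z;(z-x)/\sqrho>\le C(1+|x|+|\eta|+|y|+|\zeta|)^{2M_{*}+1}$, and by \eqref{5.3} each factor $p(z,\cdot)$ is $O\bigl((1+|x|+|\eta|+|y|+|\zeta|)^{M(2M_{*}+1)}\bigr)$, the two together producing the exponent $2M(2M_{*}+1)$. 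The remaining factors are harmless: by \eqref{5.37} the determinant $\det(\partial z/\partial\xi)$ belongs to $S^{0}$, so it and all its derivatives are bounded, and the cutoff $\chi(\epsilon z)^{2}$ together with its $(\eta,y,\zeta)$-derivatives is bounded uniformly in $\epsilon$, since each differentiation yields a factor $\epsilon\le 1$ times a bounded derivative of $\chi$. Moreover, differentiating in $\eta$, $y$ or $\zeta$ does not raise the growth order: by Lemma 6.2 the derivatives of $z$ carry the powers $\sqrho^{|2\alpha+\beta+\gamma|}$ and are bounded, the arguments of $p$ differentiate to bounded expressions, and $\partial_{z}p$, $\partial_{w}p$ retain the growth $<z;w>^{M}$ of $p$ by \eqref{5.3}.

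Putting these together, after the integrations by parts every resulting term is dominated by $<\zeta>^{-2N}<y>^{-2N'}(1+|x|+|\eta|+|y|+|\zeta|)^{2M(2M_{*}+1)}$, uniformly in $\epsilon$. Using the elementary bound $1+|x|+|\eta|+|y|+|\zeta|\le(1+|x|+|\eta|)(1+|y|+|\zeta|)$ to peel off the $(x,\eta)$-factor, and choosing $N,N'$ with $2N,2N'>d+2M(2M_{*}+1)$ so that the remaining $(y,\zeta)$-integral converges, one arrives precisely at \eqref{6.15} with $C_{\alpha}$ independent of $\epsilon$. The main obstacle I expect is not the integration by parts but the bookkeeping of growth through the nonlinear composition $p(z,(z-x)/\sqrho)$: one must check with care, via Lemmas 6.1 and 6.2, that the $(x,\eta)$-growth stays exactly of order $2M(2M_{*}+1)$ for every $\alpha$ while the $(y,\zeta)$-growth remains polynomially bounded of the same order, so that finitely many integrations by parts suffice.
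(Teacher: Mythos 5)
Your proposal is correct and follows essentially the same route as the paper: the paper likewise regularizes the oscillatory integral with the operators $<y>^{-2l_0}<D_{\zeta}>^{2l_0}$ and $<\zeta>^{-2l_1}<D_y>^{2l_1}$, bounds the amplitude by $(1+|z|+|(z-x)/\sqrho|+|y|)^{2M}$ using \eqref{6.1}, \eqref{6.6} and \eqref{6.9}, peels off the $(x,\eta)$-growth of order $2M(2M_*+1)$, and then chooses $l_0,l_1$ with $2l_j-2M(2M_*+1)>d$. The only cosmetic difference is that you name the standard operators $L_y,L_\zeta$ explicitly; the bookkeeping of growth orders matches the paper's.
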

\begin{proof}
We write $<D_y>^2\, = 1 - \sum_{j=1}^d\partial_{y_j}^2$. Let $l_j \geq 0\ (j = 0,1)$ be integers.  Using \eqref{6.1}, \eqref{6.6} and \eqref{6.9}, from \eqref{6.14} we have
\begin{align*}  
    & |q_{\epsilon}(\ts;x,\eta)| \leq  \ \iint    \biggl| <y>^{-2l_0}<D_{\zeta}>^{2l_0}<\zeta>^{-2l_1}<D_y>^{2l_1}\biggl\{p\left(z,\frac{z-x}{\sqrho}\right)^* \chi(\epsilon z)^2\\
    & \quad 
   \times  p\left(z,\frac{z-x-\sqrho y}{\sqrho}\right) 
    \det\frac{\partial z}{\partial \xi}\biggr\}\biggr|dy\,\dbar\zeta \leq C_1\iint <y>^{-2l_0}<\zeta>^{-2l_1}\\
    & \times\Bigl(1 + |z| + \left|\frac{z-x}{\sqrho}\right| + |y|\Bigr)^{2M}dy\,\dbar\zeta   \leq C_2\iint <y>^{-2l_0}<\zeta>^{-2l_1}\\ 
    & \times\Bigl(1 + |x|^{2M_* +1}+ |y|^{2M_* +1}+ |\eta|^{2M_* +1} + |\zeta|^{2M_* +1}\Bigr)^{2M}dy\,\dbar\zeta
  \leq C_3\iint <y>^{-2l_0} \\
 & \times <\zeta>^{-2l_1}<y>^{2M(M_*+1)}<\zeta>^{2M(M_*+1)}\Bigl(1+|x|^{2M_*+1} + |\eta|^{2M_*+1}\Bigr)^{2M}dy\,\dbar\zeta.
    \end{align*}
    Hence, taking $l_0$ and $l_1$ so that $2l_j - 2M(2M_*+1) > d,$  we get 
\begin{equation*}  
      |q_{\epsilon}(\ts;x,\eta)| \leq C_{4}(1 + |x|+ |\eta|)^{2M(2M_* + 1)}.
\end{equation*}
In the same way we can prove \eqref{6.15}, using \eqref{6.9}.
\end{proof}
\begin{pro}
Suppose \eqref{2.3} and Assumption 2.2.  Let $p(x,w)$ be a function satisfying \eqref{5.3} and define $P(\ts)$ by \eqref{5.4}.  Let $0 \leq t-s \leq \rho^*$.  Then there exists an integer $l \geq 0$ such that we have
\begin{equation}  \label{6.16}
     \Vert P(\ts)f\Vert \leq C |f|_l
\end{equation}
for $f \in \Sspace(\bR^d)$.
\end{pro}
\begin{proof}
If $t = s$, the inequality \eqref{6.16} follows from \eqref{5.4}.  Let $0 < t - s \leq \rho^*$.
Let us define $q_{\epsilon}(\ts;x,\eta)$ by \eqref{6.14} for $p(x,w)$.  Then, using Lemma 5.2 and \eqref{5.32}, we can prove
\begin{equation}  \label{6.17}
    P(\ts)^{\dag}\chi(\epsilon\cdot)^2P(\ts)f = Q_{\epsilon}(\ts;X,D_x)f
\end{equation}
for $f \in \Sspace$, which has been proved  in (4.12) of \cite{Ichinose 1999}.  Hence we see
\begin{align*}  
    & \Vert \chi(\epsilon\cdot)P(\ts)f\Vert^2    = (P(\ts)^{\dag}\chi(\epsilon\cdot)^2P(\ts)f,f) = (Q_{\epsilon}(t,s)f,f) \\
    & = \int f(x)^*dx \int e^{i(x-y)\cdot\eta}q_{\epsilon}(\ts;x,\eta)f(y)dy\,\dbar\eta \\
    & = \int f(x)^*dx \int e^{i(x-y)\cdot\eta}<\eta>^{-2l_0}<D_y>^{2l_0}q_{\epsilon}(\ts;x,\eta)f(y)dy\,\dbar\eta.
    \end{align*}
    Consequently, using \eqref{6.15}, we have
\begin{align}  \label{6.18}
    & \Vert P(\ts)f\Vert^2   \leq C_1 \int |f(x)|dx \int <\eta>^{-2l_0}(1 + |x| + |\eta|)^{2M(2M_*+1)} \notag \\
    &\quad \times |<D_y>^{2l_0}f(y)|dy\,\dbar\eta \leq C_1\int <x>^{2M(2M_*+1)}|f(x)|dx \notag \\
    & \times \int <\eta>^{-2l_0+2M(2M_*+1)}\dbar\eta \int  |<D_y>^{2l_0}f(y)|dy \notag \\
    & \leq C_2 \int <\eta>^{-2l_0+2M(2M_*+1)}\dbar\eta |f|_l^2
    \end{align}
    with an integer $l$.  Taking $l_0$ so that $2l_0 - 2M(2M_*+1) > d$, we obtain \eqref{6.16}.
\end{proof}
\begin{thm}
Suppose \eqref{2.3} and Assumption 2.2.  Let $p(x,w)$ be a function satisfying \eqref{5.3} and define $P(\ts)$ by \eqref{5.4}.  Let $0 \leq t-s \leq \rho^*$.   Then, for any $\alpha$ there exists an integer $l(\alpha) \geq 0$ such that we have
\begin{equation}  \label{6.19}
      \Vert x^{\alpha}\bigl(P(\ts)f\bigr)\Vert \leq C |f|_{l(\alpha)},\quad
     \Vert \partial_x^{\alpha}\bigl(P(\ts)f\bigr)\Vert \leq C |f|_{l(\alpha)}
\end{equation}
for $f  \in \Sspace$.
\end{thm}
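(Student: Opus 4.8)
The plan is to reduce both families of estimates in \eqref{6.19} to the $L^2$-bound \eqref{6.16} of the preceding Proposition, by rewriting each of $x^\alpha P(\ts)$ and $\partial_x^\alpha P(\ts)$ as a finite sum of operators of the same form \eqref{5.4}, with amplitudes again satisfying \eqref{5.3} for a possibly larger $M$, applied to polynomial multiples respectively derivatives of $f$. Since at $t=s$ the operator $P(s,s)$ is multiplication by a symbol of order $M$, both estimates are immediate there; so I assume $0<t-s\leq\rho^*$ throughout and write $\rho=t-s$ and $w=(x-y)/\sqrt\rho$.

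For the weighted estimate I would expand, inside the integral \eqref{5.4},
\[
x^\alpha=\sum_{\beta\leq\alpha}\binom{\alpha}{\beta}(x-y)^\beta y^{\alpha-\beta},\qquad (x-y)^\beta=\rho^{|\beta|/2}w^\beta .
\]
Each factor $(x-y)^\beta$ is absorbed into the amplitude, and $\rho^{|\beta|/2}w^\beta p(x,w)$ again obeys \eqref{5.3} with $M$ replaced by $M+|\beta|$, uniformly in $\rho\leq\rho^*$; the factor $y^{\alpha-\beta}$ is moved onto $f$. This gives $x^\alpha P(\ts)f=\sum_{\beta\leq\alpha}\binom{\alpha}{\beta}P^{(\beta)}(\ts)\bigl(y^{\alpha-\beta}f\bigr)$, and \eqref{6.16} applied to each $P^{(\beta)}$ bounds the term by $C|y^{\alpha-\beta}f|_{l'}\leq C|f|_{l'+|\alpha-\beta|}$, which is the first inequality.

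For the derivative estimate the key is the cancellation of the singular leading part of the phase. Differentiating \eqref{5.2} one finds
\[
\partial_{x_j}S+\partial_{y_j}S=\sum_{k=1}^d(x_k-y_k)\int_0^1\partial_{x_j}A_k\,d\theta-\rho\int_0^1\partial_{x_j}V\,d\theta,
\]
where the terms $\pm m(x_j-y_j)/\rho$ have cancelled; by \eqref{2.3} and \eqref{2.9}, after writing $x-y=\sqrt\rho\,w$ this is a symbol in $(x,w)$ in the class \eqref{5.3} (with $M'=2(M_*+1)$), uniformly in $\rho$. Using $\partial_{x_j}e^{iS}=-\partial_{y_j}e^{iS}+i(\partial_{x_j}S+\partial_{y_j}S)e^{iS}$, I differentiate $P(\ts)f$ in $x_j$ and integrate by parts in $y_j$ (no boundary terms for $f\in\Sspace$); the two contributions carrying the dangerous factor $\rho^{-1/2}(\partial_{w_j}p)$—one from differentiating the amplitude in $x_j$, the other from moving $\partial_{y_j}$ onto the amplitude—cancel exactly, leaving
\[
\partial_{x_j}\bigl(P(\ts)f\bigr)=P(\ts)(\partial_{x_j}f)+\widetilde P_1(\ts)f+\widetilde P_2(\ts)f,
\]
where $\widetilde P_1,\widetilde P_2$ are of the form \eqref{5.4} with amplitudes $\partial_{x_j}p$ and $i(\partial_{x_j}S+\partial_{y_j}S)p$, again in \eqref{5.3}. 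Iterating this identity expresses $\partial_x^\alpha(P(\ts)f)$ as a finite sum of such operators applied to $\partial_x^\beta f$ with $\beta\leq\alpha$, and \eqref{6.16} bounds each by $C|\partial_x^\beta f|_{l'}\leq C|f|_{l'+|\beta|}$, giving the second inequality.

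I expect the main obstacle to be checking rigorously that the new amplitudes—above all $i(\partial_{x_j}S+\partial_{y_j}S)p$ re-expressed as a function of $(x,w)$—belong to \eqref{5.3} with constants uniform in $\rho\in(0,\rho^*]$: this means differentiating $A$ and $V$ under the $\theta$-integrals in \eqref{5.2}, estimating them by Assumptions 2.1 and 2.2, and tracking each power $\rho^{1/2}$ produced by $x-y=\sqrt\rho\,w$ so that no negative power of $\rho$ remains. The exact cancellation of the $\rho^{-1/2}$ terms in the derivative case is the delicate point that keeps the bound uniform down to $t=s$.
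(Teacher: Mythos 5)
Your proposal is correct and follows essentially the same route as the paper: both inequalities are reduced to Proposition 6.4 (the bound \eqref{6.16}) by exhibiting $x^\alpha P(t,s)$ and $\partial_x^\alpha\bigl(P(t,s)f\bigr)$ as finite sums of operators of the form \eqref{5.4} with amplitudes again in the class \eqref{5.3} for a larger $M$, applied to polynomial multiples, respectively derivatives, of $f$. The paper's execution is marginally simpler: it absorbs $x^\alpha$ wholesale into the amplitude (no binomial splitting in $(x-y)$ and $y$ is needed), and for the derivative estimate it first rewrites $P(t,s)f$ in the $w$-variable as \eqref{6.20}, which makes your $\rho^{-1/2}$-cancellation automatic and yields exactly your decomposition \eqref{6.21} with the symbol bound \eqref{6.22}.
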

\begin{proof}
Setting $p'(x,s) = x^{\alpha}\,p(x,w)$, we have $P'(\ts)f = x^{\alpha}P(\ts)f$ from \eqref{5.4}.  Hence from Proposition 6.4 we can prove
\begin{equation*}  
      \Vert x^{\alpha}\bigl(P(\ts)f\bigr)\Vert = \Vert P'(\ts)f\Vert\leq C |f|_{l(\alpha)}
     \end{equation*}
for $f  \in \Sspace$ with an $l(\alpha) \geq 0$, which shows the first inequality of \eqref{6.19}.
\par
Next we can write $P(\ts)f$ as 
\begin{align}  \label{6.20}
      & P(t,s)f = \sqrt{\frac{m}{2\pi i}}^{\ d}
       \text{Os}- \int e^{i\phi(t,s;x,w)}p(x,w)f(x - \sqrt{\rho}w)dw,    \notag \\
    & \phi(t,s;x,w) = \frac{m}{2}|w|^2 + \sqrt{\rho}w\cdot
\int^1_0 A(t-\theta\rho, x - \theta\sqrt{\rho}w)d\theta    \notag    \\
& \qquad
  - \rho\int^1_0 V(t-\theta\rho, x - \theta\sqrt{\rho}w)d\theta , \quad 
\rho = t - s
\end{align}
as in \S 2 of \cite{Ichinose 1999}.  Then we have
\begin{equation} \label{6.21}
\partial_x^{\alpha}\bigl(P(\ts)f\bigr) = \sum_{\beta \leq \alpha}P_{\beta}(\ts)(\partial_x^{\alpha - \beta}f),
\end{equation}
where $\beta \leq \alpha$ indicates $\beta_j \leq \alpha_j$ for all $j = 1,2,\dots,d$.  Using the assumptions \eqref{2.3} and \eqref{2.9}, from \eqref{6.20} we have
\begin{equation} \label{6.22}
|\partial_w^{\alpha'}\partial_x^{\beta'}p_{\beta}(\ts;x,w)| \leq C_{\alpha'\beta'}(1 + |x| + |w|)^{M+2(M_*+1)|\beta|}
\end{equation}
for all $\alpha'$ and $\beta'$.  Hence, applying Proposition 6.4 to $P_{\beta}(\ts)$, from \eqref{6.21} we obtain the second inequality of \eqref{6.19}.
\end{proof}
\begin{pro}
We assume \eqref{2.3}, \eqref{2.8} and \eqref{2.9}.  Let $H(t)$ and $\Cts(\ts)$ be the operators defined by \eqref{1.3} 
and \eqref{5.6}, respectively.  Then there exists a continuous function $r(\ts;x,w)$ in $0 \leq s \leq t \leq T$ and $(x,w) \in \bR^{2d}$ satisfying \eqref{5.3} for an $M \geq 0$ such that
\begin{equation}  \label{6.23}
      \left\{i\frac{\partial}{\partial t} - H(t)\right\}\Cts(\ts) f
                        = \sqrt{t - s} R(\ts)f
\end{equation}
for $f \in \Sspace(\bR^d)$.
\end{pro}
\begin{proof}
From \eqref{2.3} and \eqref{2.8} we have
\begin{equation} \label{6.24}
     | \partial_x^{\alpha} \partial_tA(t,x)| \leq C_{\alpha}<x>^{2(M_*+1)}
\end{equation}
for all $\alpha$ as in the proof of \eqref{2.23.3}.  Consequently we get Proposition 6.6 from Proposition 3.5 in \cite{Ichinose 2007} or Proposition 2.3 in \cite{Ichinose 1997}.
\end{proof}
\section{Proofs of Theorems 2.1 and 2.2}
We suppose Assumption 2.1 and let $M_* \geq 0$ be the constant in Assumption 2.1.  Let us introduce the weighted Sobolev spaces
\begin{align} \label{7.0}
B^a(\mathbb{R}^d) & := \{f \in  L^2(\mathbb{R}^d);
 \|f\|_a := \|f\| + \sum_{|\alpha| \leq  2a} \|\partial_x^{\alpha}f\| \notag\\
 &  +
\|<\cdot>^{2a(M_*+1)}f\| < \infty\}\ (a = 1,2,\dots).
 \end{align}
 We denote the dual space of $B^a$ by $B^{-a}$ and the $L^2$ space by $B^0$.  
\par
We have proved the following in Theorem 2.1 of \cite{Ichinose 2018} and its proof. 
\par
\vspace{0.5cm}
 T{\sc heorem} 7.A.  {\it Suppose Assumption 2.1 and \eqref{2.9}.  Then for any $f \in B^a\ (a = 0, \pm1,\pm2,\dots)$
 there exists a solution $u(t) = U(t,0)f \in C^0_t([0,T];B^a) \cap C^1_t([0,T];B^{a-1})$  with $u(0) = f$ to the equation \eqref{1.3}. This solution $u(t)$ is uniquely determined in the space $\bigcup_{a'=1}^{\infty}C^0_t([0,T];B^{-a'}) \cap C^1_t([0,T];B^{-a'-1})$.  We also have
 \begin{equation} \label{7.1}
 \Vert u(t)\Vert_a \leq C_a  \Vert f\Vert_a, \ 0 \leq t \leq T
 \end{equation}
 and in particular
 \begin{equation} \label{7.2}
 \Vert u(t)\Vert =  \Vert f\Vert,  \ 0 \leq t \leq T.
 \end{equation}
 }
 %
 C{\sc orollary} 7.B.  {\it Suppose Assumption 2.1 and \eqref{2.9}. 
 Then for any integer $l \geq 0$ there exists an integer $l' \geq 0$ such that 
 \begin{equation} \label{7.3}
 | U(t,0)f|_l \leq C_l |f|_{l'},  \ 0 \leq t \leq T
 \end{equation}
 for all $f \in \Sspace$.
 }
\begin{proof}
The Sobolev lemma indicates 
 \begin{equation*} 
\sup_{x\in \bR^d}|f(x)| \leq C\sum_{|\alpha| \leq[d/2] +1}\Vert \partial_x^{\alpha}f\Vert,
 \end{equation*}
 where $[\cdot]$ denotes the Gauss symbol (cf. (2.24) on p. 78 in \cite{Mizohata}).  Hence, for any integer $l \geq 0$ there exist integers $l_1 \geq 0$ and $l_2 \geq 0$ such that 
 \begin{equation} \label{7.4}
 | f|_l \leq C \Vert f\Vert_{l_1},  \  \Vert f\Vert_{l} \leq C' |f|_{l_2} 
 \end{equation}
  for  $f \in \Sspace$.  Therefore from \eqref{7.1} we have
 \begin{equation*} 
 | U(t,0)f|_l \leq C \Vert U(t,0)f\Vert_{l_1} \leq CC_{l_1}\Vert f\Vert_{l_1} \leq C'_{l_1}|f|_{l'}
 \end{equation*}
with an integer $l'$.
\end{proof}
\begin{lem}
Suppose \eqref{2.3} - \eqref{2.4} and Assumption 2.2.  Let $H(t)$ and $\Cts(\ts)$ be the operators defined by \eqref{1.3} and \eqref{5.6}, respectively.  Then there exists an integer $l \geq 0$ such that 
 \begin{equation} \label{7.5}
\left\Vert \frac{\Cts(\ts)f - f}{t-s} - H(t)f\right\Vert \leq C\sqrho|f|_l, \ 0 <  t - s \leq \rho^*
 \end{equation}
 for all $f \in \Sspace$.
\end{lem}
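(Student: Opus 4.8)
The plan is to read off the estimate from Proposition 6.6 combined with the fundamental theorem of calculus. Fixing $s$ and letting the first time variable vary, Proposition 6.6 gives, for $f\in\Sspace$,
\[
i\frac{\partial}{\partial\theta}\Cts(\theta,s)f=H(\theta)\Cts(\theta,s)f+\sqrt{\theta-s}\,R(\theta,s)f,\qquad s<\theta\le t .
\]
Since $\Cts(s,s)=I$ and $\theta\mapsto\Cts(\theta,s)f$ is continuous on $[s,t]$ and $C^1$ on $(s,t]$, integrating over $[s,t]$ yields
\[
\Cts(\ts)f-f=\frac{1}{i}\int_s^t\Bigl(H(\theta)\Cts(\theta,s)f+\sqrt{\theta-s}\,R(\theta,s)f\Bigr)\,d\theta .
\]
After dividing by $t-s$ the leading contribution is $\frac{1}{i(t-s)}\int_s^t H(\theta)\Cts(\theta,s)f\,d\theta$, which I claim reproduces the main term $\tfrac1i H(t)f$ dictated by the normalization $i\partial_t u=H(t)u$, with every remaining piece of size $O(\sqrt{\rho})$.

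The genuinely $\sqrt{\rho}$-sized contribution is the easy one. Because $r(\ts;x,w)$ satisfies \eqref{5.3}, Proposition 6.4 gives $\|R(\theta,s)f\|\le C|f|_l$ uniformly in $\theta$, whence, using $\int_s^t\sqrt{\theta-s}\,d\theta=\tfrac23(t-s)^{3/2}$,
\[
\frac{1}{t-s}\Bigl\|\int_s^t\sqrt{\theta-s}\,R(\theta,s)f\,d\theta\Bigr\|\le \tfrac23\sqrt{t-s}\,\sup_{\theta}\|R(\theta,s)f\|\le C\sqrt{\rho}\,|f|_l .
\]

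It then remains to control $\frac{1}{t-s}\int_s^t\bigl(H(\theta)\Cts(\theta,s)f-H(t)f\bigr)\,d\theta$. First I would replace $H(t)$ by $H(s)$, since $\|[H(t)-H(s)]f\|\le C\rho\,|f|_{l'}$ by the time-derivative bounds \eqref{2.4}, \eqref{2.6} and \eqref{6.24}; then I would split
\[
H(\theta)\Cts(\theta,s)f-H(s)f=\bigl[H(\theta)-H(s)\bigr]\Cts(\theta,s)f+H(s)\bigl[\Cts(\theta,s)-I\bigr]f .
\]
The first summand is $O(\theta-s)|f|_{l'}$ by the same time-regularity together with the $\Sspace\to\Sspace$ bounds of Theorem 6.5 that control $|\Cts(\theta,s)f|_{l'}$. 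The hard part is the second summand: $H(s)$ is second order with polynomially growing coefficients, so the smallness of $[\Cts(\theta,s)-I]f$ alone is not enough. My remedy is to feed the increment back through the same integral identity,
\[
H(s)\bigl[\Cts(\theta,s)-I\bigr]f=\frac{1}{i}\int_s^\theta\Bigl(H(s)H(\sigma)\Cts(\sigma,s)f+\sqrt{\sigma-s}\,H(s)R(\sigma,s)f\Bigr)\,d\sigma ,
\]
and then to invoke the symbol calculus: after expanding the $x$-derivatives as in \eqref{6.21} and multiplying the resulting symbols by the polynomially growing coefficients of $H$, each of $H(s)H(\sigma)\Cts(\sigma,s)$ and $H(s)R(\sigma,s)$ is a finite sum of operators of the form \eqref{5.4} with symbols in the class \eqref{5.3}, so Proposition 6.4 bounds both by $C|f|_{l''}$ uniformly in $\sigma\in[s,\theta]$. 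Hence this summand is $O(\theta-s)|f|_{l''}$.

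Collecting the three pieces, every error is at most $C(\sqrt{\rho}+\rho)|f|_{l''}\le C\sqrt{\rho}\,|f|_{l''}$ for $0<t-s\le\rho^*$, which is \eqref{7.5} with $l$ the largest index that appears. The only delicate point is the uniform boundedness on $\Sspace$ of the twice-iterated operators $H(s)H(\sigma)\Cts(\sigma,s)$ and $H(s)R(\sigma,s)$; this is precisely where the $L^2$-continuity theorem behind Theorem 6.5 and Proposition 6.4 (ultimately Theorem 1.A), applied after commuting $H$ through the oscillatory integrals, carries the argument.
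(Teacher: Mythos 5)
Your proposal is correct and follows essentially the same route as the paper: the integral identity from Proposition 6.6, a direct $O(\sqrt{\rho})$ bound on the $R$-term via Proposition 6.4, a second application of the same identity to handle $H$ acting on the increment $\mathcal{C}(\theta,s)f-f$, and closure via the $\Sspace$-mapping estimates of Theorem 6.5. The paper merely organizes the last step slightly differently, bounding $\Vert H(\cdot)g\Vert$ by a weighted Sobolev norm $\Vert g\Vert_{l(0)}$ and then estimating $\Vert \mathcal{C}(\cdot)f\Vert_{l'(0)}$ and $\Vert R(\cdot)f\Vert_{l(0)}$ by seminorms of $f$ through \eqref{6.19}, rather than commuting $H$ through the oscillatory integrals as you describe, but these are the same mechanism.
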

\begin{proof}
Using \eqref{6.23}, we can write
 \begin{align} \label{7.6}
&i\bigl\{ \Cts(\ts)f - f\bigr\} = i\bigl\{ \Cts(s+\rho,s)f - f\bigr\} = i\rho\int_0^1\frac{\partial\Cts}{\partial t}(s+\theta\rho,s)fd\theta \notag \\
& = \rho\int_0^1\bigl\{H(s+\theta\rho)\Cts(s+\theta\rho,s)f + \sqrho R(s+\theta\rho,s)f \bigr\}d\theta
\end{align}
and so
 \begin{align} \label{7.7}
& i\frac{\Cts(\ts)f - f}{\rho} - H(t)f = \sqrho\int_0^1R(s+\theta\rho,s)fd\theta + \int_0^1H(s+\theta\rho)
\notag \\
& \cdot \bigl\{\Cts(s+\theta\rho,s)f - f\bigr\}d\theta + \int_0^1\bigl\{H(s+\theta\rho) - H(t)\bigr\}fd\theta.
\end{align}
From \eqref{2.3} and \eqref{2.9} we can see that for $a = 0,1,2,\dots$  there exist integers $l(a) \geq 0$ satisfying 
\begin{equation*}
\Vert H(t)f \Vert_a \leq C_a \Vert f\Vert_{l(a)}.
\end{equation*}
Consequently we see that the $L^2$ norm of the second term on the right-hand side of \eqref{7.7} is bounded by
\[
C \int_0^1\Vert \mathcal{C}(s + \theta\rho,s)f - f\Vert_{l(0)}d\theta.
\]
Applying \eqref{7.6} to this term, and applying \eqref{2.4}, \eqref{2.9} and \eqref{6.24} to the third term on the right-hand side of \eqref{7.7}, we have
 \begin{align} \label{7.8}
& \left\Vert i\frac{\Cts(\ts)f - f}{\rho} - H(t)f \right\Vert\leq  \sqrho\int_0^1\Vert R(s+\theta\rho,s)f\Vert d\theta + C_1\int_0^1d\theta \int_0^1\rho   \notag \\
& \times \bigl\{\Vert \Cts(s+\theta'\theta\rho,s)f\Vert_{l'(0)}+ \sqrho\Vert R(s+\theta'\theta\rho,s)f\Vert_{l(0)}\bigr\}d\theta' + C_2\rho\Vert f\Vert_{l'(0)}
\end{align}
with an integer $l'(0) \geq 0$.
Hence, applying Theorem 6.5 to $\Cts(\ts)f$ and $R(\ts)f$, and using \eqref{7.4}, we can prove \eqref{7.5}.
\end{proof}
\begin{lem}
Suppose Assumptions 2.1 and 2.2.  Then there exists an integer $l \geq 0$ such that we have
 \begin{equation} \label{7.9}
\left\Vert \Cts(\ts)f  - U(\ts)f\right\Vert \leq C\sqrho\,^{3}|f|_l, \ 0 <  t - s \leq \rho^*
 \end{equation}
 for $f \in \Sspace$.
\end{lem}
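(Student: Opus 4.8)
The plan is to compare the approximate propagator $\Cts(\ts)$ with the exact evolution $U(\ts)$ by means of a telescoping identity, and then to read off the $O(\rho^{3/2})$ gain from the factor $\sqrt{t-s}$ in the remainder relation \eqref{6.23}. For $f \in \Sspace$ and $s \le \theta \le t$ I set
\begin{equation*}
w(\theta) := U(t,\theta)\,\Cts(\theta,s)f .
\end{equation*}
By Theorem 6.5 we have $\Cts(\theta,s)f \in \Sspace$ with all Schwartz seminorms bounded by $C|f|_l$ uniformly in $\theta$, so $U(t,\theta)$ may be applied and $w(\theta)$ is well defined. Since $\Cts(s,s)=I$ and $U(t,t)=I$ we get $w(s)=U(\ts)f$ and $w(t)=\Cts(\ts)f$, whence
\begin{equation*}
\Cts(\ts)f - U(\ts)f = w(t)-w(s) = \int_s^t \frac{d}{d\theta}w(\theta)\,d\theta .
\end{equation*}

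Next I differentiate $w$. The exact propagator satisfies the backward relation $\partial_\theta U(t,\theta)g = iU(t,\theta)H(\theta)g$, obtained from the forward equation $i\partial_\theta U(\theta,s)=H(\theta)U(\theta,s)$ of Theorem 7.A together with the group law $U(t,\theta)U(\theta,s)=U(t,s)$. On the other hand, \eqref{6.23} gives $\partial_\theta \Cts(\theta,s)f = -iH(\theta)\Cts(\theta,s)f - i\sqrt{\theta-s}\,R(\theta,s)f$. Substituting both into the product rule, the two terms proportional to $U(t,\theta)H(\theta)\Cts(\theta,s)f$ cancel identically, leaving
\begin{equation*}
\frac{d}{d\theta}w(\theta) = -i\sqrt{\theta-s}\;U(t,\theta)R(\theta,s)f ,
\end{equation*}
and therefore
\begin{equation*}
\Cts(\ts)f - U(\ts)f = -i\int_s^t \sqrt{\theta-s}\;U(t,\theta)R(\theta,s)f\,d\theta .
\end{equation*}

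It then remains only to estimate the integrand in $L^2$. Since $U(t,\theta)$ preserves the $L^2$ norm by \eqref{7.2}, and since $R(\theta,s)$ is an operator of the form \eqref{5.4} whose symbol $r$ satisfies \eqref{5.3} by Proposition 6.6, Proposition 6.4 furnishes an integer $l\ge 0$ and a constant $C$, uniform in $0\le \theta-s\le\rho^{*}$, with $\Vert U(t,\theta)R(\theta,s)f\Vert = \Vert R(\theta,s)f\Vert \le C|f|_l$. Carrying out the elementary integration $\int_s^t\sqrt{\theta-s}\,d\theta = \tfrac{2}{3}(t-s)^{3/2}$ yields
\begin{equation*}
\left\Vert \Cts(\ts)f - U(\ts)f\right\Vert \le C|f|_l\int_s^t\sqrt{\theta-s}\,d\theta \le C'\sqrho^{\,3}|f|_l ,
\end{equation*}
which is \eqref{7.9}.

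The algebraic core of the argument, namely the cancellation of the $H(\theta)$ terms, is immediate once the two evolution equations are available, and the decay $\sqrho^{\,3}$ comes for free from the $\sqrt{\theta-s}$ weight. I expect the main obstacle to be the rigorous justification of differentiating $w(\theta)$, i.e. proving that $\theta\mapsto w(\theta)$ is a genuine $C^1$ map into $L^2$. This relies on the $C^1_t$-regularity of $U(t,\theta)$ supplied by Theorem 7.A, on the continuity in $\theta$ of $R(\theta,s)f$ built into Proposition 6.6, and on the fact (Theorem 6.5) that $\Cts(\theta,s)f$ remains in $\Sspace$ with seminorms controlled uniformly by $|f|_l$, so that the product rule and all the operator bounds apply within spaces on which they are valid.
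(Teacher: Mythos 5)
Your argument is correct, and the bound it produces is exactly \eqref{7.9}, but it is organized differently from the paper's proof. The paper never forms the interpolating quantity $U(t,\theta)\Cts(\theta,s)f$: it proves two separate consistency estimates against the common Euler step --- \eqref{7.5} for $\Cts(\ts)$ (Lemma 7.1, obtained from the integral identity \eqref{7.6}, which is \eqref{6.23} integrated in the forward variable) and the analogue \eqref{7.10} for $U(\ts)$ --- and then subtracts them via the identity \eqref{7.11}, in which the term $\rho H(t)f$ cancels. Both routes draw their $\sqrho^{\,3}$ from the same source, the $\sqrt{t-s}$ prefactor in \eqref{6.23}. What your Duhamel formulation buys is an exact cancellation of the $H(\theta)$ terms, so that the final estimate needs only the $L^2$ bound $\Vert R(\theta,s)f\Vert \leq C|f|_l$ of Proposition 6.4 together with the isometry \eqref{7.2}, and it even yields the explicit constant $2/3$. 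What it costs is the differentiation of $w(\theta)=U(t,\theta)\Cts(\theta,s)f$: you correctly identify this as the delicate point, but note that Theorem 7.A only provides $C^1_t$ regularity of $U$ in the \emph{forward} variable, so the backward relation $\partial_\theta U(t,\theta)g = iU(t,\theta)H(\theta)g$ and the product rule with the moving argument $\Cts(\theta+h,s)f$ must be derived separately from the composition law $U(t,\theta)U(\theta,s)=U(t,s)$ (a consequence of the uniqueness statement) together with the uniform bounds \eqref{7.1} and the $B^a$-continuity of $\theta\mapsto\Cts(\theta,s)f$ supplied by Theorem 6.5; this is standard but is an extra layer the paper avoids by only ever differentiating in the forward variable. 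Conversely, the paper's route must control $H(s+\theta\rho)\{\Cts(s+\theta\rho,s)f-f\}$ in $L^2$, which forces a second application of \eqref{7.6} and the weighted-Sobolev mapping bounds of Theorem 6.5 inside the proof of Lemma 7.1.
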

\begin{proof}
Correspondingly to \eqref{7.6} - \eqref{7.8} we have
 \begin{equation*} 
i\bigl\{ U(\ts)f - f\bigr\} = \rho \int_0^1H(s+\theta\rho)U(s+\theta\rho,s)f d\theta,
\end{equation*}
 \begin{align*} 
& i\frac{U(\ts)f - f}{\rho} - H(t)f =  \int_0^1H(s+\theta\rho) \bigl\{U(s+\theta\rho,s)f - f\bigr\}d\theta
\notag \\
&\quad  + \int_0^1\bigl\{H(s+\theta\rho) - H(t)\bigr\}fd\theta,
\end{align*}
 \begin{equation*} 
 \left\Vert i\frac{U(\ts)f - f}{\rho} - H(t)f \right\Vert\leq  C_1\int_0^1d\theta \int_0^1\rho\Vert U(s+\theta'\theta\rho,s)f\Vert_{l'(0)}d\theta'    + C_2\rho\Vert f\Vert_{l'(0)}.
\end{equation*}
Hence from Theorem 7.A and \eqref{7.4} we can see
 \begin{equation} \label{7.10}
 \left\Vert i\frac{U(\ts)f - f}{\rho} - H(t)f \right\Vert \leq  C\rho|f|_{l'}
\end{equation}
for all $f \in \Sspace$ with an integer $l' \geq 0$.  Writing 
 \begin{align} \label{7.11}
 & \Cts(\ts)f - U(\ts)f = \rho\left\{ \frac{\Cts(\ts)f - f}{\rho} - H(t)f \right\} \notag \\
&\quad -  \rho\left\{ \frac{U(\ts)f - f}{\rho} - H(t)f \right\},
\end{align}
we can prove \eqref{7.9} from \eqref{7.5} and \eqref{7.10}.
\end{proof}
Now we will prove Theorem 2.1.  Hereafter we assume $|\Delta| \leq \rho^*$.  We have proved \eqref{2.14} in Corollary 5.9. First we assume $f \in \Sspace$.  From \eqref{5.39} we can write 
\begin{align}  \label{7.12}
   & \kdelta(t,0) f - U(t,0)f  =  {\cal C}(t,\tau_{\nu-1}){\cal C}(\tau_{\nu-1},\tau_{\nu-2})\cdots {\cal C}(\tau_{1},0)f
                          \notag \\
                          &- U(t,\tau_{\nu-1})U(\tau_{\nu-1},\tau_{\nu-2})\cdots U(\tau_{1},0)f 
                          = \sum_{j=1}^{\nu} {\cal C}(t,\tau_{\nu-1}){\cal C}(\tau_{\nu-1},\tau_{\nu-2})\cdots
                          \notag \\
          & 
               \cdot{\cal C}(\tau_{j+1},\tau_j) \bigl\{{\cal C}(\tau_j,\tau_{j-1}) -U(\tau_j,\tau_{j-1})\bigr\}U(\tau_{j-1},0)f.
\end{align}
Using \eqref{5.35}, we have
\begin{equation*}  
    \Vert\kdelta(t,0) f - U(t,0)f \Vert \leq  e^{Kt}\sum_{j=1}^{\nu}\Vert \bigl\{{\cal C}(\tau_j,\tau_{j-1}) -U(\tau_j,\tau_{j-1})\bigr\}U(\tau_{j-1},0)f\Vert,
\end{equation*}
which leads to 
\begin{align}  \label{7.13}
    \Vert\kdelta(t,0) f - U(t,0)f \Vert & \leq  Ce^{Kt}\sum_{j=1}^{\nu}(\tau_j- \tau_{j-1})^{3/2}|U(\tau_{j-1},0)f|_l 
    \notag\\
    & \leq C'\sqrt{|\Delta|}e^{KT}T|f|_{l'}
\end{align}
from \eqref{7.3} and \eqref{7.9}.  Hence we see that as $|\Delta| \to 0$, $\kdelta(t,0)f$ for $f \in \Sspace$ converges to $U(t,0)f$ in $L^2$ uniformly in $t \in [0,T]$.
\par
	Let $f \in L^2$ be arbitrary.  For any $\epsilon > 0$ we take a $g \ \in \Sspace$ such that $\Vert g - f \Vert < \epsilon$.  Then from \eqref{2.14} and \eqref{7.2} we see
\begin{align}  \label{7.14}
   & \Vert K_{\Delta}(t,0)f - U(t,0)f \Vert  \leq \Vert K_{\Delta}(t,0)(f-g)\Vert + \Vert K_{\Delta}(t,0)g - U(t,0)g\Vert  
    \notag \\
          &  + \Vert  U(t,0)(f-g)\Vert \leq \Vert K_{\Delta}(t,0)g - U(t,0)g\Vert +(e^{KT}+1)\Vert f - g \Vert,
          \end{align}
          which shows 
          \begin{equation*}
         \overline{ \lim_{|\Delta| \to 0}}\sup_{0 \leq t \leq T}\Vert K_{\Delta}(t,0)f - U(t,0)f \Vert \leq (e^{KT}+1)\epsilon
          \end{equation*}
          because of $g \in \Sspace$.  This indicates
          \begin{equation*}
          \lim_{|\Delta| \to 0}\sup_{0 \leq t \leq T}\Vert K_{\Delta}(t,0)f - U(t,0)f \Vert = 0.
                    \end{equation*}
	In the end, to complete the proof of Theorem 2.1 we have only to prove \eqref{2.16}.   From \eqref{2.15} and \eqref{5.2} we have
\begin{align*}  
   & S'(t,s;q^{t,s}_{x,y})
        = \frac{m|x - y|^2}{2(t - s)} + \int_{\gamma^{t,s}_{x,y}} \bigl(A'\cdot dx - V'dt\bigr)
        \notag \\
     & =  S(t,s;q^{t,s}_{x,y}) + \psi(t,x) - \psi(s,y),
\end{align*}
which shows \eqref{2.16} from \eqref{5.6} and \eqref{5.7}.
\par
	Next we consider the Lagrangian function defined by \eqref{2.17}.  Let $\mathcal{F}(\theta,s;\qts)\\ (s \leq \theta \leq t)$ be the $l \times l$ matrix defined as the solution to \eqref{2.19}.  The following has been proved in Lemma 3.1 of \cite{Ichinose 2007}.
\begin{lem}
We assume
\begin{equation}  \label{7.15}
    |\partial_x^{\alpha}h_{1jk}(t,x)| \leq C_{\alpha}, \ |\alpha| \geq 1
    \end{equation}
    in $\domain$ for all $j,k = 1,2,\dots,l$.  Then we have
\begin{equation}  \label{7.16}
    |\partial_x^{\alpha}\partial _y^{\beta}\mathcal{F}(t',s;\qts)| \leq C_{\alpha\beta} < \infty
    \end{equation}
    in $0 \leq s  \leq t' \leq t \leq T$ for all $\alpha$ and $\beta$.
\end{lem}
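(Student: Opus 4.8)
The plan is to differentiate the defining equation \eqref{2.19} in the endpoints $x,y$ of the path and induct on $|\alpha+\beta|$, at each stage reading off the derivative as the solution of an inhomogeneous linear matrix ODE whose homogeneous propagator is unitary. The structural fact that makes everything work is that along $q=\qts$ the inner map is \emph{affine} in $(x,y)$: writing $a = (\theta-s)/(t-s)$ and $b = (t-\theta)/(t-s)$, one has $\qts(\theta) = ax + by$, so $\partial_{x_j}\qts(\theta) = a\,e_j$, $\partial_{y_j}\qts(\theta) = b\,e_j$, while all second- and higher-order derivatives of $\qts$ in $(x,y)$ vanish. Since $s \leq \theta \leq t' \leq t$, both $a$ and $b$ lie in $[0,1]$, uniformly in all parameters.

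First I would record the base case $|\alpha+\beta|=0$. Because $H_1(t,x)$ is Hermitian, the solution $\mathcal{A}(\theta) = \mathcal{F}(\theta,s;q)$ of \eqref{2.19} is unitary: $\frac{d}{d\theta}(\mathcal{A}^*\mathcal{A}) = \mathcal{A}^*(iH_1 - iH_1)\mathcal{A} = 0$ and $\mathcal{A}(s)^*\mathcal{A}(s) = I$. Hence the Hilbert--Schmidt norm obeys $|\mathcal{F}(t',s;\qts)| = \sqrt{l}$, which is \eqref{7.16} for $\alpha=\beta=0$. The same computation shows that the forward propagator $\Phi(\sigma,\theta)$ of the homogeneous equation $\frac{d}{d\sigma}\Phi = -iH_1(\sigma,\qts(\sigma))\Phi$, $\Phi(\theta,\theta) = I$, is unitary, so it preserves the Hilbert--Schmidt norm.

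Next, I would fix $n \geq 1$, assume \eqref{7.16} for all orders $< n$, and apply $\partial_x^{\alpha}\partial_y^{\beta}$ with $|\alpha+\beta| = n$ to \eqref{2.19}. The Leibniz rule gives
\begin{equation*}
\frac{d}{d\theta}\bigl(\partial_x^{\alpha}\partial_y^{\beta}\mathcal{A}\bigr) = -iH_1(\theta,\qts(\theta))\,\partial_x^{\alpha}\partial_y^{\beta}\mathcal{A} + G(\theta),
\end{equation*}
where $G(\theta)$ collects the terms in which at least one derivative lands on $H_1(\theta,\qts(\theta))$. By the chain rule with the affine inner map, for $|\alpha'+\beta'|\geq 1$ one has $\partial_x^{\alpha'}\partial_y^{\beta'}\bigl(H_1(\theta,\qts(\theta))\bigr) = a^{|\alpha'|}b^{|\beta'|}(\partial_x^{\alpha'+\beta'}H_1)(\theta,\qts(\theta))$, which is bounded by $C_{\alpha'+\beta'}$ thanks to \eqref{7.15} and $0 \leq a,b \leq 1$. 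Each such factor multiplies a derivative of $\mathcal{A}$ of order $< n$, bounded by the induction hypothesis, so $\sup_{\theta}|G(\theta)| < \infty$, uniformly in $s,t',t,x,y$.

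Finally, since $\partial_x^{\alpha}\partial_y^{\beta}\mathcal{A}(s) = 0$ for $|\alpha+\beta|\geq 1$ (the datum $I$ is independent of $x,y$), Duhamel's formula would yield $\partial_x^{\alpha}\partial_y^{\beta}\mathcal{A}(t') = \int_s^{t'}\Phi(t',\theta)G(\theta)\,d\theta$, and the unitarity of $\Phi(t',\theta)$ gives $|\partial_x^{\alpha}\partial_y^{\beta}\mathcal{F}(t',s;\qts)| \leq \int_s^{t'}|G(\theta)|\,d\theta \leq C_{\alpha\beta}T$, closing the induction. The one point demanding care is uniformity of the bound as $t-s \to 0$; this is exactly what the affine structure secures, since the chain-rule coefficients $a^{|\alpha'|}b^{|\beta'|}$ never exceed $1$ however small $\rho = t-s$ becomes, so no negative power of $\rho$ can ever enter the estimate.
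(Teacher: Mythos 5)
Your proof is correct. Note that the paper itself gives no argument for this lemma; it simply cites Lemma 3.1 of \cite{Ichinose 2007}, so there is nothing internal to compare against. Your argument --- unitarity of the propagator from the Hermiticity of $H_1$, induction on $|\alpha+\beta|$ via Leibniz and Duhamel, and the observation that the affine dependence $\qts(\theta)=\frac{\theta-s}{t-s}x+\frac{t-\theta}{t-s}y$ produces chain-rule coefficients in $[0,1]$ so that the bounds are uniform as $t-s\to 0$ --- is the standard and complete route to \eqref{7.16}, and it correctly uses only \eqref{7.15} with $|\alpha|\geq 1$ (boundedness of $H_1$ itself is never needed, since the zeroth-order term is absorbed into the homogeneous part).
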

	Using $S(\ts;\qts)$, we define
\begin{equation}  \label{7.17}
 \mathcal{C}_s(t,s)f =  \sqrt{m/(2\pi i\rho)}^{\ d}
                  \int \bigl(\exp iS(t,s; q^{t,s}_{x,y})\bigr)\mathcal{F}(\ts;\qts)f(y)dy
\end{equation}
if $0 \leq s < t \leq T$ and  $\mathcal{C}_s(t,t)f = f$ for $f = {}^t(f_1,\dots,f_l) \in \Sspace(\bR^d)^l$.  Then we have the following correspondingly to Theorem 5.8.
\begin{pro}
We assume \eqref{2.3}, Assumption 2.2 and \eqref{2.22}.  Then $\mathcal{C}_s(t,s)$  on $\Sspace^l$ can be extended to a bounded operator on $(L^2)^l$ and satisfies 
\begin{equation}  \label{7.18}
 \Vert\mathcal{C}_s(t,s)f \Vert \leq e^{K'(t - s)}\Vert f\Vert, \ 0 \leq t - s \leq \rho^*
\end{equation}
for $f = {}^t(f_1,\dots,f_l) \in (L^2)^l$ with a constant $K' \geq 0$, where $\Vert f \Vert^2 = \sum_{j=1}^l\Vert f_j\Vert^2.$
\end{pro}
\begin{proof}
From \eqref{2.19} we have
\begin{equation} \label{7.19}
\mathcal{F} (t',s;q^{t,s}_{x,y})- I =
-i \int_{s}^{t'} H_1(\theta,\qts (\theta))\mathcal{F}
(\theta,s;q^{t,s}_{x,y})d\theta.
\end{equation}
Then from the assumption \eqref{2.22} and Lemma 7.3 we have
\begin{equation} \label{7.20}
\left|\partial_x^{\alpha}\partial_y^{\beta}\left\{ \mathcal{F}
(t,s;q^{t,s}_{x,y})- I \right\}
\right| \leq C_{\alpha,\beta} (t - s)
\end{equation}
for all $\alpha$ and $\beta$.  Using $\Cts(\ts)$ defined by \eqref{5.6}, we can write 
\begin{align} \label{7.21}
& \Cts_s(\ts) f = \Cts(\ts)f + \sqrt{\frac{m}{2\pi i\rho}}^{\,d}   \int \bigl(\exp iS(t,s; 
q^{t,s}_{x,y})\bigr)
  \notag \\
& \quad \times
\left\{\mathcal{F} (t,s;\qts) - I\right\}f(y)dy \equiv \Cts(\ts)f + \Cts'_s(\ts)f.
\end{align}
Then, noting \eqref{7.20}, from Theorems 5.8 and 5.10 we obtain
\begin{align*}
\Vert \Cts_s(\ts) f \Vert &
\leq \mathrm{e}^{K(t - s)}\Vert f \Vert + C_0(t - s)\Vert 
f \Vert 
 \leq \mathrm{e}^{K'(t - s)}\Vert f \Vert
\end{align*}
with constants $C_0 \geq 0$ and $K' \geq 0$, which shows \eqref{7.18}.
\end{proof}
We have the following correspondingly to Proposition 6.6.
\begin{lem}
We consider the equation \eqref{2.18}.  Assume \eqref{2.3}, \eqref{2.8}, \eqref{2.9} and \eqref{7.15}.  Then there exist $r_{jk}(\ts;x,w)\ (j,k = 1,2,\dots,l)$ satisfying \eqref{5.3} with an $M \geq 0$ such that 
\begin{align}  \label{7.22}
     & \left( i\frac{\partial}{\partial t} - H(t)I - H_1(t)\right)\Cts_s(\ts) f
         \notag \\
        & = \sqrt{t - s}\Bigl(R_{sjk}(\ts);j\downarrow k\rightarrow 1,2,\dots,l \Bigr)f
\end{align}
for $f \in \Sspace(\bR^d)^l$.
\end{lem}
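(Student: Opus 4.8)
The plan is to follow the proof of Proposition 6.6, carrying the matrix factor $\mathcal{F}(\ts;\qts)$ through the differentiation. Writing the kernel of $\Cts_s(\ts)$ from \eqref{7.17} as $\sqrt{m/(2\pi i\rho)}^{\,d}\,e^{iS(t,s;\qts)}\mathcal{F}(\ts;\qts)$ and applying $i\partial_t - H(t)I - H_1(t)$ under the integral sign (recall $H_1(t)$ acts by multiplication by $H_1(t,x)$), I would expand by the Leibniz rule. Since $H(t)$ is a scalar second-order operator, this produces three groups of terms: (A) the scalar residual $\{i\partial_t - H(t)\}\bigl(\sqrt{m/(2\pi i\rho)}^{\,d}e^{iS}\bigr)$ times $\mathcal{F}$; (B) the spin residual $\sqrt{m/(2\pi i\rho)}^{\,d}e^{iS}\bigl(i\partial_t\mathcal{F} - H_1(t,x)\mathcal{F}\bigr)$; and (C) the cross terms $\frac{1}{im}\sum_j(D_je^{iS})\partial_{x_j}\mathcal{F}$ and $\frac{1}{2m}e^{iS}\sum_j\partial_{x_j}^2\mathcal{F}$ (with $D_j = \frac{1}{i}\partial_{x_j} - A_j$) arising when $H(t)$ differentiates the product. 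The role of subtracting $H_1(t,x)\mathcal{F}$ is precisely that it cancels the leading $O(1)$ part of $i\partial_t\mathcal{F}$, so that each group becomes of order $\sqrho$ after the substitution $w = (x-y)/\sqrho$.

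For group (A) I would invoke the proof of Proposition 6.6: there the scalar integrand is shown to equal $\sqrho$ times $e^{iS}$ times a symbol in $\bigl(x,(x-y)/\sqrho\bigr)$ satisfying \eqref{5.3}. Since $H_1$ is Hermitian, $\mathcal{F}(\ts;\qts)$ is unitary, and by Lemma 7.3 it and its $x,y$-derivatives are uniformly bounded, so after the substitution it is itself an $S$-class (order $0$) factor; the product therefore keeps the form $\sqrho\,e^{iS}\times(\text{symbol})$. For group (C) the gain in $\rho$ comes from the derivatives of $\mathcal{F}$: differentiating \eqref{2.19} in $x$ or $y$ and using that the propagators are unitary and that $\partial_xH_1$ is bounded by \eqref{7.15}, one obtains the refinement $|\partial_x^\alpha\partial_y^\beta\mathcal{F}(\ts;\qts)| \leq C_{\alpha\beta}\rho$ for $|\alpha+\beta|\geq 1$ (the $O(\rho)$-sharpening of \eqref{7.16} for $|\alpha+\beta|\geq1$, obtained by the argument behind \eqref{7.20} but now from the unitarity of $\mathcal{F}$ and \eqref{7.15}). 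Because $D_je^{iS}$ carries the factor $m(x_j-y_j)/\rho = mw_j/\sqrho$, the product $(D_je^{iS})\partial_{x_j}\mathcal{F}$ is of order $\sqrho$ after the change of variables, while $e^{iS}\partial_{x_j}^2\mathcal{F}$ is even of order $\rho$.

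The main obstacle is group (B), i.e. estimating $i\partial_t\mathcal{F} - H_1(t,x)\mathcal{F}$. Here I would split the total $t$-derivative of $\mathcal{F}(t,s;\qts)$ into an endpoint part and a path-variation part. Since $\qts(t) = x$, equation \eqref{2.19} shows that the endpoint part of $i\partial_t\mathcal{F}$ is exactly $H_1(t,x)\mathcal{F}$ and hence cancels the subtracted term, leaving only the variation of the solution caused by the $t$-dependence of the straight line $\qts$. By the Duhamel variation formula this part equals $-i\int_s^t\mathcal{F}(t,\theta;\qts)\,(\partial_xH_1)(\theta,\qts(\theta))\cdot\partial_t\qts(\theta)\,\mathcal{F}(\theta,s;\qts)\,d\theta$, and since $\partial_t\qts(\theta) = -\frac{\theta-s}{(t-s)^2}(x-y)$, the unitarity of $\mathcal{F}(t,\theta;\qts)$ and $\mathcal{F}(\theta,s;\qts)$ together with \eqref{7.15} gives $|i\partial_t\mathcal{F} - H_1(t,x)\mathcal{F}| \leq C|x-y| = C\sqrho\,|w|$, with the analogous bounds for the $x,y$-derivatives; thus group (B) is likewise $\sqrho$ times an $S$-class symbol. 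Collecting the three groups and integrating in $y$, the residual becomes $\sqrho$ times an operator of the form \eqref{5.4} whose entries $r_{jk}$ satisfy \eqref{5.3} for some $M\geq 0$, which is exactly \eqref{7.22}.
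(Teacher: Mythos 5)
Your proof is correct and is, in substance, the argument the paper outsources: the paper's own proof of Lemma 7.5 merely notes that \eqref{2.3} and \eqref{2.8} yield \eqref{6.24} and then cites Proposition 3.5 of \cite{Ichinose 2007}, which is exactly the computation you carry out --- scalar residual handled as in Proposition 6.6, spin residual via the endpoint/path-variation splitting of $\partial_t\mathcal{F}$ (the endpoint part cancelling $H_1(t,x)\mathcal{F}$ since $\qts(t)=x$), and cross terms controlled by the bound $|\partial_x^{\alpha}\partial_y^{\beta}\mathcal{F}(t,s;\qts)| \leq C_{\alpha\beta}(t-s)$ for $|\alpha+\beta|\geq 1$, which indeed follows from \eqref{7.15} and unitarity alone, as in \eqref{7.20}. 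The one point worth making explicit is that $D_je^{iS}$ contains, besides the leading $m w_j/\sqrt{\rho}$, polynomially growing contributions from $A_j$ and $\partial_{x_j}V$ in \eqref{5.2}; these are harmless because \eqref{5.3} permits an arbitrary polynomial order $M\geq 0$, but they are precisely why the hypotheses \eqref{2.3} and \eqref{2.9} are needed.
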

\begin{proof}
From \eqref{2.3} and \eqref{2.8} we had \eqref{6.24}.  Hence we can prove \eqref{7.22} from Proposition 3.5 of \cite{Ichinose 2007}.
\end{proof}
	Now we will prove Theorem 2.2.  Let $|\Delta| \leq \rho^*.$  
		Using Proposition 7.4, from \eqref{2.21}, Remark 2.2 and \eqref{7.17} we can write
\begin{equation}  \label{7.23}
     K_{s\Delta}(t,0)f = {\cal C}_s(t,\tau_{\nu-1}){\cal C}_s(\tau_{\nu-1},\tau_{\nu-2})\cdots 
{\cal C}_s(\tau_1,0)f
\end{equation}
for $f \in (L^2)^l$ and get the estimates \eqref{2.14} by \eqref{7.18}. Next  consider the equation \eqref{2.18}.  Then since we are assuming \eqref{2.22}, we get the same assertions as in Theorem 7.A and so  get \eqref{7.3}.  
 Hence we can complete the proof of Theorem 2.2 as in the proof of Theorem 2.1, using Theorem 6.5, Proposition 7.4 and Lemma 7.5.
\section{Proofs of Theorems 2.3 and 2.4}
In this section we always suppose Assumptions 2.3 and 2.4. Let $\mathcal{L}^{\sharp}(t,x,\dot{x})$ be the Lagrangian function defined by \eqref{2.24}.  We write
\begin{equation} \label{8.1}
W(t,x) := 2 \sum_{1 \leq j < k \leq 4}V_{jk}(t,x(j)-x(k)).
\end{equation}
Let's define $\qts$ by \eqref{2.20} and write $\gamma^{t,s}_{x,y}$ as \eqref{5.1}.   Then the classical action for $\qts$ is written as 
\begin{align}  \label{8.2}
   S(t,s;q^{t,s}_{x,y})
       & = \sum_{l=1}^4\Biggl\{\frac{m_l|x(l) - y(l)|^2}{2(t - s)} + \int_{\gamma^{t,s}_{x,y}} \Bigl(A^{(l)}(t,x(l))\cdot dx(l) 
        \notag \\
     & -V_l(t,x(l))dt\Bigr)\Biggr\} - \int_{\gamma^{t,s}_{x,y}}W(t,x)dt
\end{align}
correspondingly to \eqref{5.2}.
\par
  Let $p(x,w)$ be a function satisfying \eqref{5.3} and define $P(t,s)$ by \eqref{5.4}.  Then we have the same assertions as in Lemma 5.1.   We define $\tau(\sigma)$ and $\zeta^{(l)}(\sigma)$ by \eqref{5.12} for $x = x(l)$, $y = y(l)$ and $z = z(l)$.  Hereafter for simplicity we suppose $m = m_l\ (l= 1,2,3,4)$
  \begin{lem}
  Let $p(x,w)$ be a function satisfying \eqref{5.3}.  Then for any $0 < \epsilon \leq 1$ and $0 \leq s < t \leq T$ we have \eqref{5.8}, where
\begin{equation}  \label{8.3}
  \Phi(t,s;x,y,z) = (\Phi^{(1)},\Phi^{(2)},\Phi^{(3)},\Phi^{(4)}) \in \bR^{4d},
\end{equation}
\begin{align}  \label{8.4}
    & \Phi_j\upl =  z(l)_j - \frac{x(l)_j + y(l)_j}{2} + \frac{t - s }{m}
              \int_0^1A_j\upl\bigl(s,x(l) + \theta(y(l) - x(l))\bigr)d\theta           \notag \\
      & -\frac{(t - s)^2 }{m}\int_0^1\int_0^1
                \sigma_1E_j\upl(\tau(\sigma),\zeta\upl(\sigma))
                       d\sigma_1d\sigma_2 \notag \\
                        & - \frac{t - s }{m}\sum_{k=1}^d \bigl(z(l)_k - 
x(l)_k\bigr)\int_0^1\int_0^1\sigma_1B_{jk}\upl
                     (\tau(\sigma),\zeta\upl(\sigma))d\sigma_1d\sigma_2  \notag \\
                 &    + \frac{(t - s)^2 }{m}\int_0^1\int_0^1
                \sigma_1\frac{\partial W}{\partial x(l)_j}(\tau(\sigma),\zeta(\sigma))
                       d\sigma_1d\sigma_2                
\end{align}
or
\begin{align}  \label{8.5}
    &  \Phi_j\upl =  z(l)_j - \frac{x(l)_j + y(l)_j}{2} + \frac{t - s }{m}
              \int_0^1A_j\upl\bigl(s,x(l) + \theta(y(l) - x(l))\bigr)d\theta           \notag \\
      & -\frac{(t - s)^2 }{m}\int_0^1\int_0^1
                \sigma_1E_j\upl(\tau(\sigma),\zeta\upl(\sigma))
                       d\sigma_1d\sigma_2  \notag \\
 & - \frac{(t - s)^2 }{m}\int_0^1d\theta \sum_{k=1}^d \bigl(z(l)_k - x(l)_k\bigr) \int_0^1\int_0^1\sigma_1(1-\sigma_1)\notag \\
                        &\times \frac{\partial B_{jk}\upl}{\partial t}
                     (s+\theta(1-\sigma_1)\rho,\zeta\upl(\sigma))d\sigma_1d\sigma_2 \notag \\
                     &       + \frac{(t - s)^2 }{m}\int_0^1\int_0^1
                \sigma_1\frac{\partial W}{\partial x(l)_j}(\tau(\sigma),\zeta(\sigma))
                       d\sigma_1d\sigma_2.        
\end{align}
  \end{lem}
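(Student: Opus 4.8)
The plan is to follow the proof of Lemma 5.2, treating the four single-particle potentials one at a time and dealing with the interaction $W$ by a separate, direct computation. As in the one-particle derivation of \eqref{5.8}, composing $P(\ts)^{\dag}$, $\chi(\epsilon\cdot)^2$ and $P(\ts)$ by means of \eqref{5.4} and \eqref{5.5} produces the prefactor $\bigl(m/2\pi(t-s)\bigr)^d$, the amplitude $p(z,(z-x)/\sqrho)^*\,p(z,(z-y)/\sqrho)$, and the phase
\[
 -S(t,s;q^{t,s}_{z,x}) + S(t,s;q^{t,s}_{z,y}).
\]
Hence proving \eqref{8.3}--\eqref{8.4} (resp.\ \eqref{8.3}--\eqref{8.5}) reduces to showing that this phase equals $(x-y)\cdot m\Phi/(t-s)$ with $\Phi$ as stated. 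First I would insert \eqref{8.2} and split the action into its four single-particle actions $S_l$ and the interaction term $-\int_{\gamma^{t,s}_{x,y}}W(t,x)\,dt$. The quadratic (kinetic) part of each $S_l$ contributes $(x(l)-y(l))\cdot\frac{m}{\rho}\bigl(z(l)-\frac{x(l)+y(l)}{2}\bigr)$, i.e.\ the first term of each block of $\Phi$.

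For the single-particle electromagnetic part, the contribution of the $l$-th action to the phase is $-S_l(t,s;q^{t,s}_{z(l),x(l)})+S_l(t,s;q^{t,s}_{z(l),y(l)})$, which is precisely the object computed in Lemma 5.2 for the potential $(V_l,A^{(l)})$. It therefore equals $(x(l)-y(l))\cdot m\Phi^{(l)}_0/(t-s)$, where the $j$-th component of $\Phi^{(l)}_0$ is exactly the second, third and fourth terms on the right-hand side of \eqref{8.4} (resp.\ \eqref{8.5}); here $E^{(l)}$ and $B^{(l)}$ are the fields attached to $(V_l,A^{(l)})$ through \eqref{1.1}, and the Stokes-theorem identity \eqref{5.13} is applied to $B^{(l)}$ exactly as in the one-particle case. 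Summing over $l=1,2,3,4$ reproduces, block by block, all terms of $\Phi$ except the last one.

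What remains is the interaction $W$, whose contribution to the phase is $\int_{\gamma^{t,s}_{z,x}}W\,dt-\int_{\gamma^{t,s}_{z,y}}W\,dt$. Substituting $\theta=s+\sigma\rho$ turns this into
\[
 \rho\int_0^1\Bigl\{W\bigl(s+\sigma\rho,x+\sigma(z-x)\bigr)-W\bigl(s+\sigma\rho,y+\sigma(z-y)\bigr)\Bigr\}\,d\sigma .
\]
The two spatial arguments differ by $(1-\sigma)(x-y)$, so I would write the integrand as a line integral of $\partial W/\partial x$ along the segment joining them and then change variables by $\sigma_1=1-\sigma$ and $\sigma_2=1-s'$. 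A short check shows that under this substitution the evaluation point becomes $\zeta(\sigma)$ and the time becomes $\tau(\sigma)$ of \eqref{5.12}, while the weight $(1-\sigma)$ becomes $\sigma_1$, so the above equals
\[
 \rho\int_0^1\!\!\int_0^1\sigma_1\,(x-y)\cdot\frac{\partial W}{\partial x}\bigl(\tau(\sigma),\zeta(\sigma)\bigr)\,d\sigma_1\,d\sigma_2 ,
\]
whose $(l,j)$-contribution is $(x(l)_j-y(l)_j)\cdot\frac{m}{\rho}\cdot\frac{\rho^2}{m}\int_0^1\!\int_0^1\sigma_1\,\partial_{x(l)_j}W(\tau,\zeta)\,d\sigma_1 d\sigma_2$, exactly the last term of \eqref{8.4} (resp.\ \eqref{8.5}). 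Adding this to the single-particle blocks gives the asserted $\Phi$. The only genuinely new computation is this last one; since $W$ carries no vector potential it generates no magnetic field and so no analogue of the $B$-term, and no new analytic difficulty arises beyond Lemma 5.2. The main point to watch is the parametrization bookkeeping that lands the arguments on $\tau(\sigma)$ and on the full configuration point $\zeta(\sigma)\in\bR^{4d}$ (rather than on a single block $\zeta^{(l)}$), reflecting that $W$ couples all four particles.
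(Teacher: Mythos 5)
Your proposal is correct and follows essentially the same route as the paper: the paper also reduces to the proof of Lemma 5.2, the only content being that the interaction term $-W\,dt$ contributes the extra summand $\sum_{l,j}\frac{\partial W}{\partial x(l)_j}\,dt\wedge dx(l)_j$ in the exterior derivative \eqref{8.6}, which then passes through the Stokes-theorem/surface-integral computation exactly like the $-E_j\,dt\wedge dx_j$ term and yields the last term of \eqref{8.4} and \eqref{8.5}. Your direct change-of-variables computation of $\int_{\gamma^{t,s}_{z,x}}W\,dt-\int_{\gamma^{t,s}_{z,y}}W\,dt$ (with $\sigma_1=1-\sigma$, $\sigma_2=1-s'$, landing on $\tau(\sigma)$ and the full configuration point $\zeta(\sigma)\in\bR^{4d}$) is just an explicit unwinding of that same Stokes argument, and your sign and bookkeeping checks are right.
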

\begin{proof}
We note 
\begin{align}  \label{8.6}
    & d \Bigl(\sum_{l=1}^4\sum_{j=1}^d A_j\upl(t,x(l))dx(l)_j -\sum_{l=1}^4V_l(t,x(l))dt - W(t,x)dt\Bigr) \notag \\
    & = \sum_{l,j}\left(\frac{\partial }{\partial t}A_j\upl + \frac{\partial }{\partial x(l)_j}V_l + \frac{\partial W}{\partial x(l)_j}\right)dt \wedge dx(l)_j \notag \\
    & + \sum_{l}\sum_{j,k=1}^d \frac{\partial }{\partial x(l)_k}A_j\upl dx(l)_k \wedge dx(l)_j = - \sum_{l,j}E\upl_jdt \wedge dx(l)_j
    \notag \\
    & + \sum_l \sum_{1 \leq j < k \leq d}B_{jk}\upl dx(l)_j \wedge dx(l)_k + \sum_{l,j}\frac{\partial W}{\partial x(l)_j}dt \wedge dx(l)_j.
\end{align}
Then we can prove Lemma 8.1 as in the proof of Lemma 5.2.
\end{proof}
  We have the following consequence from Lemma 5.3.
  \begin{lem}
   We have \eqref{5.14}, where $E = E\upl(t,x(l))\ (l = 1,2,3, 4), C_1 = C_1(l) \geq 0, x = x(l), \zeta(\sigma) = \zeta^{(l)}(\sigma), X = X(l) = (x(l),y(l),z(l)) \in \bR^{3d}$ and $M_* = M_{l*}$.
  \end{lem}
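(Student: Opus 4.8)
The plan is to reduce the claim to a direct, per-particle application of Lemma 5.3. The essential observation is that the electric field $E^{(l)}$ of the $l$-th particle is defined by \eqref{1.1} with $A = A^{(l)}$ and $V = V_l$, so it depends only on the single-particle block $x(l)$ and is completely unaffected by the interaction term $W$ in \eqref{2.24}. Consequently each $E^{(l)}$ may be treated in isolation, exactly as a single-particle electric field.

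First I would record that, by Assumption 2.4(1) together with Assumption 2.3, for every particle index $l \in \{1,2,3,4\}$ the field $E^{(l)}$ satisfies Assumption 2.2 with $M_* = M_{l*}$, where $M_{l*} > 0$ for $l = 1,2$ and $M_{l*} = 0$ for $l = 3,4$. In particular, inequality \eqref{2.7} holds for $E^{(l)}(t,x(l))$ with some constants $C_*(l) > 0$ and $C_1(l) \geq 0$, and with the exponent $M_* = M_{l*}$.

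Next I would invoke Lemma 5.3 under the substitution $E \mapsto E^{(l)}$ and $C_1 \mapsto C_1(l)$, reading every occurrence of the variables $x,y,z$ as the $l$-th block $x(l),y(l),z(l)$. Since $\tau(\sigma)$ and $\zeta^{(l)}(\sigma)$ were already defined via \eqref{5.12} with precisely these block variables in the paragraph preceding Lemma 8.1, the integrand in \eqref{5.14} is exactly the one governed by Lemma 5.3. Lemma 5.3 then produces a constant $C'_*(l) > 0$ for which \eqref{5.14} holds with $X = X(l) = (x(l),y(l),z(l))$ and $M_* = M_{l*}$.

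There is essentially no obstacle in this argument: the only hypothesis of Lemma 5.3 to verify is \eqref{2.7} for each $E^{(l)}$, and this is exactly what Assumption 2.4 guarantees. I would close by noting that the reasoning is carried out separately for each $l = 1,2,3,4$, which is legitimate because the multi-particle Lagrangian \eqref{2.24} couples the four particles only through $W(t,x)$, a term that does not enter any of the fields $E^{(l)}$.
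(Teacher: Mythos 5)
Your proposal is correct and matches the paper's own treatment: the paper states Lemma 8.2 with no separate proof, presenting it simply as "the following consequence from Lemma 5.3," which is exactly the per-particle application of Lemma 5.3 (with $E\mapsto E^{(l)}$, $C_1\mapsto C_1(l)$, $M_*\mapsto M_{l*}$, and block variables $x(l),y(l),z(l)$) that you carry out. Your additional observations --- that \eqref{2.7} for each $E^{(l)}$ is supplied by Assumptions 2.3 and 2.4, and that the interaction term $W$ never enters the single-particle fields $E^{(l)}$ --- are the details the paper leaves implicit.
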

  We set $a = \max\{C_1(l);l= 1,2,3,4\} \geq 0$.
  Let us  write \eqref{5.16} and \eqref{5.17} for $E = E\upl(t,x(l))\ ( l = 1,2,3,4)$ as $E'^{(l)}_0(\ts;x(l),y(l),z(l))$ and $E'^{(l)}_1(\ts;x(l),y(l)\\,z(l))$, respectively.  In the same way we write  \eqref{5.19} and \eqref{5.20} for $B_{jk} = B_{jk}\upl(t,x(l))$ as $B'^{(l)}(\ts;x(l),y(l),z(l))$.  We define
  \begin{equation} \label{8.7}
  E'_{0s}(\ts;x,y,z) = \begin{pmatrix}
  &E'^{(1)}_0 & 0 & 0& 0\\
  & 0& E'^{(2)}_0 &0 & 0 \\
  & 0&0 & E'^{(3)}_0 & 0 \\
 & 0&0 & 0  & E'^{(4)}_0
  \end{pmatrix}.
  \end{equation}
 In the same way we define $E'_{1s}(\ts;x,y,z)$ and $B'_{s}(\ts;x,y,z)$.  Then from \eqref{8.4} and \eqref{8.5} 
 we have
\begin{align} \label{8.8}
& \frac{\partial \Phi}{\partial z}(t,s;x,y,z)  = I + \rho^2  E'_{0s}(t,s;x,y,z) - \frac{\rho^2a}{6} + \rho^2  E'_{1s}(t,s;x,y,z)\notag \\
 & + B'_s(t,s;x,y,z) + \frac{\rho^2}{m}\int_0^1\int_0^1\sigma_1(1-\sigma_1)\frac{\partial^2 W}{\partial x^2}(\tau(\sigma),\zeta(\sigma))d\sigma_1d\sigma_2,
\end{align}
which is correspondent to \eqref{5.18}.
\begin{lem}
There exist constants $\rho^* > 0$ and $\delta > 0$ such that for $0 \leq t-s \leq \rho^*$ and $(x,y,z) \in \bR^{12d}$
we have \eqref{5.21} and 
\begin{equation*} 
\left|\frac{\partial \Phi}{\partial z}(\ts;x,y,z)^{-1}\right| \leq C(1+\rho^2)^{-1},
\end{equation*}
where $1 + \rho^2|X|^{2M_*}$ in \eqref{5.21} is replaced with $\prod_{l=1}^4(1 + \rho^2|X(l)|^{2M_{l*}})$.
\end{lem}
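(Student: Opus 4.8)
The plan is to run the proof of Lemma 5.6 essentially verbatim, exploiting the block-diagonal structure recorded in \eqref{8.7} and the formula \eqref{8.8} for $\partial\Phi/\partial z$, and to treat the interaction Hessian $\partial^2 W/\partial x^2$ as a small, harmless perturbation. First I would establish the block positivity. Applying Lemma 8.2 to each particle $l$ separately --- with its own exponent $M_{l*}$ and constant $C_1(l)$ --- gives, exactly as in \eqref{5.23}, the lower bound
$$
I_d + \rho^2 (E'_{0s})_{ll} \geq \bigl(1 + C'_{l*}\rho^2|X(l)|^{2M_{l*}}\bigr)I_d
$$
in the sense of quadratic forms, where $(E'_{0s})_{ll}$ is the $l$-th diagonal block of $E'_{0s}$. (Replacing $C_1(l)$ by $a=\max_l C_1(l)$ in \eqref{8.7} only adds the non-negative constant $(a-C_1(l))/6$, so the inequality survives.) Since $E'_{0s}$ is block diagonal, this at once yields
$$
\det(I+\rho^2 E'_{0s}) = \prod_{l=1}^4\det\bigl(I_d + \rho^2(E'_{0s})_{ll}\bigr) \geq \prod_{l=1}^4\bigl(1+C'_{l*}\rho^2|X(l)|^{2M_{l*}}\bigr)^d,
$$
together with the uniform bound $|(I+\rho^2 E'_{0s})^{-1}| \leq C$; these are the multi-particle substitutes for \eqref{5.24}.

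Next I would factor $\partial\Phi/\partial z$ as in \eqref{5.28}, writing \eqref{8.8} in the form $(I+\rho^2 E'_{0s})\{I+R\}$ with
\begin{align*}
R = {}& -(I+\rho^2 E'_{0s})^{-1}\frac{\rho^2 a}{6} + (I+\rho^2 E'_{0s})^{-1}\rho^2 E'_{1s} + (I+\rho^2 E'_{0s})^{-1}B'_s \\
& + (I+\rho^2 E'_{0s})^{-1}\frac{\rho^2}{m}\int_0^1\int_0^1\sigma_1(1-\sigma_1)\frac{\partial^2 W}{\partial x^2}\,d\sigma_1 d\sigma_2.
\end{align*}
The three block-diagonal correction terms are handled block by block: because $E'_{1s}$, $B'_s$ and $(I+\rho^2 E'_{0s})^{-1}$ are all block diagonal, each block reduces precisely to the single-particle estimates \eqref{5.26}--\eqref{5.31} with the exponent $M_{l*}$, and so each contributes $O(\rho)$. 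The genuinely new term is the interaction Hessian: here Assumption 2.4(2), i.e.\ \eqref{2.26}, guarantees that $\partial^2 W/\partial x^2$ is bounded, whence this term is $O(\rho^2)$. Consequently $|R|\leq C\rho$, so choosing $\rho^*$ small makes $\det(I+R)\geq 1/2$ and $|(I+R)^{-1}|\leq 2$.

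Combining the two factors then gives
$$
\det\frac{\partial\Phi}{\partial z} \geq \frac{1}{2}\prod_{l=1}^4\bigl(1+C'_{l*}\rho^2|X(l)|^{2M_{l*}}\bigr)^d \geq \delta\Bigl(\prod_{l=1}^4\bigl(1+\rho^2|X(l)|^{2M_{l*}}\bigr)\Bigr)^d,
$$
which is \eqref{5.21} in the required product form, while $(\partial\Phi/\partial z)^{-1} = (I+R)^{-1}(I+\rho^2 E'_{0s})^{-1}$ has norm at most $2C$; since $\rho$ ranges over the bounded interval $[0,\rho^*]$, this constant bound is the same as the asserted $C(1+\rho^2)^{-1}$ (take $C'=2C(1+(\rho^*)^2)$). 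The main obstacle --- and the only place the argument departs from \S5 --- is precisely the interaction term $\partial^2 W/\partial x^2$, which is the sole contribution coupling the four particle blocks and thus the one thing that could spoil both the block-wise positivity and the product lower bound on the determinant; \eqref{2.26} is exactly the hypothesis that keeps it uniformly bounded (and $O(\rho^2)$ after the $\rho^2$ prefactor), so it cannot interfere. If the homeomorphism $z\mapsto\xi=\Phi$ is needed as in Lemma 5.6, it follows from these two estimates via the global inverse-function theorem (Theorem 1.22 in \cite{Schwartz}).
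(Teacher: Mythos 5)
Your proposal is correct and follows exactly the route the paper takes: the paper's own proof is the one-line remark that, using \eqref{2.26}, Lemma 8.3 follows from Lemma 8.2 and \eqref{8.8} "as in the proof of Lemma 5.6," and your write-up is a faithful expansion of that — block-diagonal positivity from Lemma 8.2 per particle, the factorization $(I+\rho^2E'_{0s})(I+R)$ with $|R|\leq C\rho$, and the interaction Hessian absorbed as an $O(\rho^2)$ perturbation thanks to \eqref{2.26}. The details you supply (the $(a-C_1(l))/6$ remark, the product form of the determinant, the reduction of the bound $C(1+\rho^2)^{-1}$ to a constant on $[0,\rho^*]$) are all accurate.
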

\begin{proof}
Noting the assumption \eqref{2.26}, we can easily prove Lemma 8.3 from Lemma 8.2 and \eqref{8.8} as in the proof of Lemma 5.6.
\end{proof}
We see from Lemma 8.3 that the mapping:$ \bR^{4d}\ni z \to \xi = \Phi(\ts;x,y,z) \in \bR^{4d}$ is homeomorphic if $0 \leq t - s \leq \rho^*.$  We write its inverse mapping as $ \bR^{4d}\ni \xi \to z= z(\ts;x,\xi,y) \in \bR^{4d}$.  Then, noting the assumption \eqref{2.26}, we can prove \eqref{5.32}, \eqref{5.35} and \eqref{5.40} as in the proofs of Proposition 5.7, Theorems 5.8 and 5.10.  In the same way we can prove \eqref{6.19} and \eqref{6.23} as in the proofs of Theorem 6.5 and Proposition 6.6.
\par
 We introduce the weighted Sobolev spaces
\begin{align} \label{8.9}
B'^a(\mathbb{R}^{4d})  := \{f \in & L^2(\mathbb{R}^d);
 \|f\|_a := \|f\| + \notag\\
& \sum_{|\alpha| \leq  2a} \|\partial_x^{\alpha}f\| +
\|\omega_a(\cdot)f\| < \infty\}
 \end{align}
for $a = 1,2, \dots$, where $\omega_a(x) = \sum_{l=1}^4<x(l)>^{2a(M_{l*}+1)}$.
We denote the dual space of $B'^a$ by $B'^{-a}$ and the $L^2$ space by $B'^0$.  Then from Theorem 2.4 in \cite{Ichinose 2018} we get the same assertions as in Theorem 7.A.  Joining the results above, we can prove Theorems 2.3 and 2.4 as in the proofs of Theorems 2.1 and 2.2.

 %
%

\begin{thebibliography}{99} %
%
\bibitem{Albeverio et all} Albeverio, S. A, H{\o}egh-Krohn, R. J., Mazzucchi, S.: 
Mathematical Theory of
Feynman Path Integrals, An Introduction, 2nd corrected and enlarged edition. Lecture Notes in Math. {\bf 523},  Berlin, 
Heidelberg:
Springer-Verlag, 2008
%
%
\bibitem{A-M 2005} Albeverio, S. A, Mazzucchi, S.: Feynman path integrals for polynomially growing potentials. J. Funct. Anal. {\bf 221},
83-121 (2005)
%
%
\bibitem{A-M 2009}  Albeverio, S. A, Mazzucchi, S.: An asymptotic functional-integral solution for the Schr\"odinger equation with polynomial potential. J. Funct. Anal. {\bf 257},
1030-1052 (2009)
%
%
\bibitem{Asada-Fujiwara} Asada, K., Fujiwara, D.:
Structure of  fundamental solutions of the Schr\"odinger equation, Convergence of Feynman's path integrals.  In Japanese, \^{S}ugaku {\bf 33}, 97-119 (1981)
%
%
\bibitem{Berezin-Shubin}
 Berezin,  F. A.,   Shubin, M. A.:
   The Schr\"odinger  Equation. Dordrecht: Kluwer Academic Publishers, 1991
   %
\bibitem{D-K} Daubechies, I., Klauder, J. R.:
Quantum-mechanical path integrals with Wiener measure for all polynomial Hamiltonians. II.
  J. Math. Phys. {\bf 26}, 2239-2256 (1985)
%
\bibitem{Feynman} Feynman, R. P.:
Space-time approach to non-relativistic quantum mechanics. Rev. Mod. Phys. 
{\bf 20}, 367-387
  (1948)
%
\bibitem{Feynman-Hibbs} Feynman, R. P., Hibbs, A. R.:
Quantum Mechanics and Path Integrals. New York: McGraw-Hill, 1965
%
%
\bibitem{Fujiwara 2017} Fujiwara, D.:  Rigorous Time Slicing
Approach to Feynman Path Integrals. Tokyo: Springer Japan, 2017
%
%
\bibitem{Ichinose 1997} 
Ichinose, W.: On the formulation of the Feynman path integral
through broken line paths.   Commun. Math. Phys. {\bf 189}, 17-33 (1997)
%
%
\bibitem{Ichinose 1999} 
Ichinose, W.: On convergence of the Feynman path integral formulated
through broken line paths. Rev. Math. Phys.  {\bf 11}, 1001-1025 (1999)
%
%
%
%
%
%
\bibitem{Ichinose 2007}
Ichinose, W.: A mathematical theory of the  Feynman path integral for the generalized Pauli equations. J. Math. Soc. Japan
{\bf 59}, 649-668 (2007)
%

%
\bibitem{Ichinose 2017} Ichinose, W.: Notes on the Feynman path integral for the Dirac equation.
J. Pseudo-Differ. Oper. Appl.
{\bf 9}, 789-809 (2018)
%
%
\bibitem{Ichinose 2018} 
Ichinose,  W., Aoki, T.: Notes on the Cauchy problem for the self-adjoint and non-self-adjoint  Schr\"odinger equations with polynomially growing potentials. J. Pseudo-Differ. Oper. Appl.
On line First, DOI 10.1007/s11868-019-00301-6 (2019)
%
%
\bibitem{Kumano-go} Kumano-go, H.: Pseudo-differential Operators. 
Cambridge: MIT Press,
1981
%
\bibitem{Mazzucchi} Mazzucchi, S.: Mathematical Feynman Path Integrals and Their Applications. Singapore: World Scientific Publishing Co., 2009
%
%
\bibitem{Mizohata} Mizohata, S.: The Theory of Partial Differential Equations. Cambridge: Cambridge University Press, 1973
%
%
\bibitem{Nelson} Nelson, E.: Feynman integrals and the Schr\"odinger equation.   J. Math. Phys. {\bf 5}, 332-343 (1964)
%
%
\bibitem{Nicola} 
Nicola, F.: Convergence in $L^{p}$ for Feynman path integrals.   Adv. Math. {\bf 294}, 384-409 (2016)
%
%
\bibitem{Peskin-Schroeder} 
Peskin, M. E., Schroeder, D. V.: An Introduction to Quantum Field Theory. Cambridge MA: Westview Press, 1995
%
%
\bibitem{Reed-Simon I} 
Reed, M., Simon, B.:  Methods of Modern Mathematical Physics I: Functional Analysis, Revised and Enlarged Edition. San Diego: Academic Press, 1980
%
%
\bibitem{Reed-Simon II}
Reed, M., Simon, B.: Methods of Modern Mathematical Physics II: Fourier Analysis, Self-adjointness. San Diego: Academic Press, 1975
 %
\bibitem{Schwartz} Schwartz, J. T.: Nonlinear Functional Analysis. New 
York: Gordon and Breach Science Publishers, 1969
%
%
\bibitem{Zworski} Zworski, M.: Semiclassical Analysis. Providence, RI: American Mathematical Society, 2012
%


%
\end{thebibliography}
\end{document}